\DeclareMathSymbol{\qm}{\mathalpha}{operators}{"3F}
\DeclareMathAlphabet{\mathbbold}{U}{bbold}{m}{n}
\newcommand{\ket}[1]{\left |#1 \right \rangle}
\newcommand{\bra}[1]{\left \langle #1 \right|}
\newtheorem{theorem}{Theorem}[section]
\newtheorem{corollary}[theorem]{Corollary}
\newtheorem{lemma}[theorem]{Lemma}
\newtheorem{definition}[theorem]{Definition}
\numberwithin{equation}{section}
\numberwithin{equation}{section}
\newcommand{\HH}{\mathcal{H}}
\newcommand{\id}{\mathbbm{I}}
\newcommand{\CC}{\mathbb{C}}
\newcommand{\NN}{\mathbb{N}}
\newcommand{\RR}{\mathbb{R}}
\newcommand{\BB}{\mathcal{B}}
\newcommand{\XX}{\mathcal{X}}
\newcommand{\YY}{\mathcal{Y}}
\newcommand{\Ss}{\mathcal{S}}
\newcommand{\QQ}{\mathcal{Q}}
\newcommand{\probP}{\text{I\kern-0.15em P}}
\newcommand{\RE}{\mathcal{R}}
\newcommand{\CO}{\mathcal{C}}
\newcommand{\DD}{\mathcal{D}}
\newcommand\abs[1]{\left|#1\right|}
\newcommand{\norm}[1]{\left\lVert#1\right\rVert}
\newcommand{\FUNC}{\mathcal{F}}
\newcommand{\bea}{\begin{eqnarray}} 
\newcommand{\eea}{\end{eqnarray}}
\begin{document}

\title{A Practical Protocol for Quantum Oblivious Transfer from One-Way Functions}

\author{Eleni Diamanti}
\affiliation{Sorbonne Université, CNRS, LIP6, 4 Place Jussieu, Paris F-75005, France}
\orcid{0000-0003-1795-5711}

\author{Alex B. Grilo}
\affiliation{Sorbonne Université, CNRS, LIP6, 4 Place Jussieu, Paris F-75005, France}
\orcid{0000-0001-7374-7082}

\author{Adriano Innocenzi}
\affiliation{Sorbonne Université, CNRS, LIP6, 4 Place Jussieu, Paris F-75005, France}
\orcid{0009-0007-0167-8170}

\author{Pascal Lefebvre}
\affiliation{Sorbonne Université, CNRS, LIP6, 4 Place Jussieu, Paris F-75005, France}
\affiliation{KTH Royal Institute of Technology, Stockholm, Sweden}
\orcid{0000-0001-8661-1440}

\author{Verena Yacoub}
\affiliation{Sorbonne Université, CNRS, LIP6, 4 Place Jussieu, Paris F-75005, France}
\orcid{0009-0006-4781-6279}

\author{Álvaro Yángüez}
\affiliation{Sorbonne Université, CNRS, LIP6, 4 Place Jussieu, Paris F-75005, France}
\email{alvaro.yanguez@lip6.fr}
\orcid{0009-0004-7713-6560}

\maketitle

\begin{abstract}
We present a new simulation-secure quantum oblivious transfer (QOT) protocol based on one-way functions in the plain model.
With a focus on practical implementation, our protocol surpasses prior works in efficiency, promising feasible experimental realization.  We address potential experimental errors and their correction, offering analytical expressions to facilitate the analysis of the required quantum resources. Technically, we achieve simulation security for QOT through an \textit{equivocal} and \textit{relaxed-extractable} quantum bit commitment.  

\end{abstract}

\clearpage
\section{Introduction}
\label{sec:introduction}

Since the early stages of quantum cryptography, quantum protocols that achieve information-theoretical security (i.e., security against unbounded adversaries) have been known for primitives that are impossible in the classical setting~\cite{BB84}. However, it was soon realized that even quantumly, a very limited family of cryptographic primitives can achieve such level of security~\cite{LoChau97,Mayers97}, and therefore, the use of computational assumptions is necessary to reach quantum advantage. 

Multi-party computation (MPC) is a very versatile primitive that plays a central role in classical cryptography. In MPC, many parties want to collectively compute a function that depends on their private inputs, while maintaining the inputs secret (even if many of these parties are malicious and deviate from the original protocol). This functionality can be constructed from another primitive: oblivious transfer (OT)~\cite{DFL+09}. However, it is not expected that OT (and thus MPC) can be constructed only from one-way functions (OWFs)\footnote{One-way functions are functions that are easy to compute and hard to invert. These are considered the minimal computational assumption in classical cryptography.}\footnote{More formally, \cite{IR90} shows that no MPC protocol can be built from OWF in a black-box way.}. 

On the other hand, \cite{GLSV21,BCKM21} showed that OT {\em can} be built from OWFs and quantum resources. More precisely, \cite{GLSV21,BCKM21} showed how to build equivocal and extractable commitment schemes, which were known to be sufficient to build quantum protocols for OT ~\cite{DFL+09} and MPC~\cite{IPS08,Kil88}~\footnote{We notice that later, it was shown that MPC can be built even from weaker computation assumptions such as pseudo-random 
states~\cite{MY22,AQY22}.}.

One important feature of~\cite{GLSV21,BCKM21} is that the quantum resources required for such protocols are exactly the same as the ones used in some quantum key distribution (QKD) protocols: they only need to prepare, communicate, and measure one-qubit states in conjugate bases. Given the huge progress in the implementation of QKD in very different setups~\cite{ACA23,GBR23,PSG23} \cite{ZMM24}, it would be expected that one could easily implement such quantum protocols with current technology.

Unfortunately, this is not the case. While the protocols proposed by ~\cite{GLSV21, BCKM21} are important to theoretically understand the power of quantum resources, their building blocks put serious barriers in their experimental implementation. More concretely, they pose the following difficulties:
\begin{itemize}
    \item \textbf{Fragility against errors.} The protocols proposed by \cite{GLSV21} and  \cite{BCKM21} do not tolerate experimental errors such as bit flips during the state distribution steps. 
    \item \textbf{Practical hash functions and zero knowledge proofs.} Zero knowledge proofs used in \cite{GLSV21} do not make black-box use of OWFs, and the practical implementation of the protocol would require the arithmetical description of the inner functioning of the one-way functions.  It is currently unclear how to achieve such descriptions for widespread heuristic implementations of post-quantum OWFs like the secure hash standard (SHA), making their integration in the \cite{GLSV21} scheme non-trivial.
    \item \textbf{Inefficiency.} The iterative structure of the protocol proposed by \cite{BCKM21} allows the protocol to make a black-box use of the underlying components but requires a very large quantity of quantum states, which prevents it from being practically implementable. For example, one run of that protocol requires an order of  $10^{13}$ BB84 states~\cite{BB84}, which are provided by days of transmission given a state-of-the-art discrete-variable QKD setup.
\end{itemize}

The main contribution of this work is to provide a noise-tolerant protocol for OT, based on the structure from \cite{BCKM21}, while avoiding some of its bottlenecks and while making it efficiently implementable. In particular, with this new protocol, we expect that around $10^{7}$ BB84 states would be sufficient instead of $10^{13}$, improving the transmission time to the order of seconds. Moreover, we notice that with this amount of quantum resources, we can distill more than one OT at once, which will be crucial in future uses of this protocol to implement MPC.

\subsection{Background and our results}
\label{sec:intro-technical}
As previously mentioned, the goal of \cite{GLSV21,BCKM21} is to achieve a quantum protocol for oblivious transfer, since it can be used to construct MPC in a generic way~\cite{IPS08}. Oblivious transfer is a cryptographic primitive where Alice chooses two messages, $m_0$ and $m_1$,  Bob chooses a bit $b \in \{0,1\}$, and Bob learns $m_b$ (notice that Bob should not learn $m_{\bar{b}}$ and Alice should not learn $b$).

In the quantum protocol for OT proposed in \cite{CK88,BBCS92}, Alice sends BB84 states to Bob. The encoded states are $\left(\ket{x^{OT}_i}_{\theta^{OT}_i}\right)_{i \in [2\lambda_{OT}]}$, where $x^{OT}_i\in\{0,1\}$ is the value of the bit, $\theta^{OT}_i \in \{+, \times\}$ is the choice of encoding bases and $\lambda_{OT}$ is a security parameter. Bob chooses measurement bases $\hat{\theta}^{OT}_i \in\{+,\times\}$ and obtains measurement results $\hat{x}^{OT}_i\in\{0,1\}$. Since we cannot guarantee that a malicious Bob will follow the protocol (in particular, that he keeps these qubits in a quantum memory instead of measuring them), Alice and Bob perform a sub-protocol that will give guarantees to Alice that Bob did measure the quantum state. However, this sub-protocol comes with a cost, since it adds the need for bit commitment, and requires the added cost of cryptographic assumptions. More concretely, to ensure Bob's measurement, Bob {\em commits} to the measurement bases that he chose and the outcomes resulting from the measurement. Alice can then choose a subset of these pairs of measurements/outcomes, and she asks Bob to {\em open} the corresponding commitments. Then, she can check if the outcomes are consistent with the BB84 states that she sent. If they are, it can be shown that there are strong guarantees that Bob measured (most of) the qubits sent by Alice~\cite{BF12}. The bit commitment sub-protocol is finished when Alice sends the bases of her original BB84 states. Bob then divides his indices into two different subsets $I_b=\{i:\theta^{OT}_i=\hat{\theta}^{OT}_i\}$ and $I_{\bar{b}}=\{i:\theta^{OT}_i\neq\hat{\theta}^{OT}_i\}$, which depend on the bit choice bit $b$.  Bob sends the two sets $I_0$ and $I_1$ to Alice who can encode the messages $m_0$ and $m_1$ using $\mathbf{x}_0^{OT}=\{x^{OT}_i: i\in I_0\}$ and $\mathbf{x}_1^{OT}=\{x^{OT}_i: i\in I_1\}$ as encrypting keys. Bob receives the encoded messages and decodes $m_b$.

The subtleties here are the properties needed in the commitments to prove that the protocol is secure. In standard bit commitment schemes, there are two properties of interest: {\em hiding}, meaning that the receiver cannot learn the committed message before the opening, and {\em binding}, meaning that after committing, there is only one value that can be opened by the committer. However, these properties alone do not allow us to prove the security of the quantum OT protocol in the simulation-based setting (which is needed to use this building block in MPC protocols). To prove the simulation-security of the QOT protocol, a strengthening of these properties is needed, namely:

\smallskip

\noindent\textbf{Equivocality:} A bit commitment protocol is called equivocal if there is a simulator (also called equivocator) that can perform a “dummy commitment” that can be opened to any desired value. Moreover, such a dummy commitment cannot be distinguished from the real ones by any polynomial-time distinguisher.~\footnote{We notice that for this to be true, the simulator has to have some leverage (e.g., a trapdoor, or the ability to rewind), otherwise the protocol would not be secure. \label{footnote:simulator}} 
    
\noindent\textbf{Extractability:} A bit commitment protocol is called extractable if there is a simulator (also called extractor) that is able to extract the commitment message in the commitment phase of the protocol.~\footnote{We have here the same considerations as in \Cref{footnote:simulator}.} 

\medskip

In \cite{DFL+09}, they show that equivocation and extraction of the quantum bit commitment are sufficient properties to prove the simulation-security of the QOT protocols of \cite{CK88,BBCS92}.
Moreover, the technical contribution of \cite{GLSV21,BCKM21} was to show a quantum protocol bit commitment that is equivocal and extractable. The main observation that allowed us to improve the parameters of such a protocol is that full extraction is not needed in order to prove the security of the QOT protocol. Instead, we show that {\em relaxed extractability} is sufficient, and we show a more efficient bit commitment protocol that achieves this property (and is still equivocal). We describe the details of our contributions in the next section.

\subsection{Technical overview}
We explain now the approach of \cite{BCKM21} to achieve extractable and equivocal commitment, and discuss how we modify it. In their result, they propose two compilers: in the first one, they propose an equivocal commitment scheme based on any bit commitment scheme that satisfies only “vanilla” binding and hiding. Moreover, the resulting commitment is extractable if the original commitment was also extractable. In the second commitment, they show how to turn an equivocal commitment into an extractable one (while losing equivocality). We notice that in this second step, even if the original bit commitment scheme is classical, the resulting one would be a quantum protocol. They achieve the equivocal and extractable bit commitment scheme by starting with a standard bit commitment scheme, and then applying the equivocal compiler, followed by the extractable compiler, and then finally the equivocal compiler again.

In the equivocal compiler, \textit{EqCommitment}, the committer generates $\lambda$ pairs of random bits $(u_i^0,u_i^1)$. For each $i$, the committer and the receiver proceed as follows, sequentially: the committer commits, using the base bit commitment scheme, to each one of them twice leading to the commitments $(c_{i,0}^0,c_{i,1}^0,c_{i,0}^1,c_{i,1}^1)$. Then, the receiver chooses a random bit $\gamma_i$ and the committer opens $c_{i,0}^{\gamma_i},c_{i,1}^{\gamma_i}$. The verifier aborts if such pairs of commitments open to different values. At the end of this interaction, the committer sends $e_i=b \oplus u^{\bar{\gamma}_i}$ for every $i$. In the opening phase, the committer chooses bits $d_i$ and opens $c_{i,d_i}^{\bar{\gamma}_i}$ and the receiver accepts if all the $e_i$'s decommit to the same value $b$. 

In the equivocality proof, the equivocator guesses $\gamma_i$, and then commits to two different values on $c_{i,0}^{\bar{\gamma}_i},c_{i,1}^{\bar{\gamma}_i}$. At each step, the equivocator uses Watrous' rewinding~\cite{Wat05} to amplify the success probability to $1-negl(\lambda)$. In this case, in the opening phase, the equivocator can choose which $c_{i,d_i}^{\bar{\gamma}_i}$ to open, so that the decommitment will be the desired bit. The binding/equivocal property follows from the fact that, if many of the checks on $\gamma_i$'s passed, then most of the pairs $c_{i,0}^{\bar{\gamma}_i},c_{i,1}^{\bar{\gamma}_i}$ commit to the same value, and therefore the decommitted bit is fixed.

The modification that we propose to the equivocal compiler is simple but impactful. Instead of repeating the commitment/checking $\lambda$ times sequentially, our protocol only performs it once. The proof of equivocality follows exactly as in \cite{BCKM21}, but the binding property completely breaks. In particular, a malicious committer can guess $\gamma$, and commit to two different values of $c_{0}^{\bar{\gamma}},c_{1}^{\bar{\gamma}}$, and therefore, with probability $\frac{1}{2}$, the receiver's checks pass and the committer can open to any value of their choice. However, we show that this protocol satisfies what we call {\em relaxed binding}. We defer the formal definition of this property to \Cref{relaxedstatisticalbinding}, but, intuitively, it says that, if we commit to $m$ bits using our bit commitment scheme and all of the tests pass, then, with overwhelming probability, only a logarithmic small fraction of these bits are non-binding. As we discuss next, we show that this property is sufficient in the extractable compiler. Moreover, since this sub-protocol is repeated many times in later parts of the protocol, reducing its complexity has a big impact in the runtime of the classical post-processing of our final protocol.

\medskip

We switch gears now to the extractable compiler of \cite{BCKM21}. This compiler is heavily inspired by the structure of the quantum OT protocol of \cite{CK88,BBCS92}. In this compiler, the committer  generates $2\lambda$ BB84 states and sends them to the receiver. As in the quantum OT protocol, the receiver equivocally commits to measurement bases and outcomes, then the committer challenges the receiver to open a subset of size $\lambda$ of such commitments and verify the consistency of the opened values with the committer's original encoded states. If such a test passes, the committer divides the remaining encoded states into $\sqrt{\lambda}$ strings $x_1,...,x_{\sqrt\lambda}$ of size $\sqrt{\lambda}$, and using $\sqrt{\lambda}$ hashes from a $2$-universal hash function $h_1,...,h_{\sqrt\lambda}$, the committer sends $\hat{h}_i = h_i(x_i) \oplus b$ 

The opening in \cite{BCKM21} is followed by the committer sending  $x_1,...,x_{\sqrt\lambda},b$, which is followed by the receiver checking if these values are consistent with their measurement outcomes and if $\hat{h}_i = h_i(x_i) \oplus b$.

The hiding property of the commitment comes from the properties of $2$-universal hash functions along with the entropy of $x_i$ from the receiver's perspective due to the uncertainty relations of measurement outcomes that we can achieve with the binding property of the commitment scheme. We notice that the relaxed binding of the base commitment instead of a full binding preserves the overall proof of hiding with minimal losses.

The extraction property is (roughly) proved as follows. Using the equivocality of the base bit commitment scheme, the extractor can delay the measurement until the committer reveals which subset of positions they will check. At this point, the extractor measures those positions on random bases, and then equivocates the opening to these values. Later in the protocol, whenever the committer sends the bases, the extractor is able to measure {\em all} of the qubits in the correct bases and find the (purported) values of $x_1,...,x_{\sqrt\lambda}$, and then extract the committed values from $\hat{h}_i$.

The first problem of this compiler is that it assumes that all of the parties have access to perfect (noiseless) devices. In particular, noise will make it impossible for the extractor to extract the correct values. A second problem appears while splitting the string in $\sqrt{\lambda}$ blocks, since it causes a quadratic loss in the security of the protocol. Finally, we notice that, in \cite{BCKM21}, this protocol is repeated multiple times: firstly in the equivocal compiler, and then to commit to many values in the final quantum OT protocol. This causes the huge overhead needed in  the total number of qubits required for their protocol. 

In our work, we solve all of these problems at once. First, in order to enable extraction even in the presence of noise, we send the syndrome of the encoded values according to a linear error correcting code, and, as in the QKD setting, this allows the extractor to correct the faulty positions. Second, we replace the use of hashes so that, with the same number of BB84 states, we can commit to many qubits with equivocality and some extraction properties. Our proposed Equivocal and Relaxed-Extractable commitment scheme (\textit{ERE-Commitment}), can be divided into two main different subroutines: the generation of a set of random seeds that are relaxed-extractable and the use of these seeds as “keys” in a set of equivocal commitments. We show that the keys of almost all the equivocal commitments are extractable, which is sufficient for a simulator to open most of the messages. We notice that the structure of our scheme is conceptually different from the one proposed by \cite{BCKM21}, allowing the drastic reductions of the needed quantum resources.

The challenge is to generate random seeds that are relaxed-extractable. To do so, we make use of quantum resources again. Given the distribution of BB84 states, the sender can distill random seeds from the encoded states. By using an equivocal commitment subroutine, these seeds cannot be distilled by a receiver that is forced to measure. However, an efficient simulator is able to extract them.

More concretely, we split the encoded string $\boldsymbol{x}_{EX}$  into $\left( \Tilde{\boldsymbol{x}}^j\right)_{j \in [2k]} $ subsets of size $m$. Using privacy amplification techniques found in QKD \cite{Ren05}, we distill uniformly random strings $\boldsymbol{s}^j$ which are used as seeds of pseudo-random generators, i.e., we expand them to the pseudo-random strings $\left(\boldsymbol{p}^j\right)_{j \in [2k]}$. Each $\boldsymbol{p}^j$ represents a concatenation of $2w$ values $\boldsymbol{p}^j_i$ that are used for their randomness as needed in statistically binding bit commitment schemes used for equivocation.

ERE-Commitment allows us to achieve the independence of seeds required to open the different commitments, while avoiding the modular composition proposed by \cite{BCKM21}. The reason is that we consider that the $2k$ seed families $(\boldsymbol{p}^j)_{j \in [2k]}$ are grouped in $k$ pairs $(\boldsymbol{p}^1, \boldsymbol{p}^2), ...(\boldsymbol{p}^{2k-1},\boldsymbol{p}^{2k})$, so each of the $r \in [k]$ pairs can be used for parallel commitment of $w$ values. For the $r$-th pair, each of the $q \in [w]$ instances makes use of two seeds $\boldsymbol{p}^j_i$ from each family of the pair, in such a way that the final set of seeds for the $q$-th EqCommitment instance of the $r$-th family pair is $\left(\left(\boldsymbol{p}_{i_{q,0}}^{j^{r,0}},\boldsymbol{p}_{i_{q,1}}^{j^{r,0}}, \boldsymbol{p}_{i_{q,0}}^{j^{r,1}},\boldsymbol{p}_{i_{q,1}}^{j^{r,1}} \right)_{q \in [w]} \right)_{r \in [k]} $. Afterwards, Bob sends Alice the corresponding $x^{EX}_i$ and $\theta^{EX}_i$, then Alice checks that $x^{EX}_i = \hat{x}^{EX}_i $ whenever $\theta^{EX}_i = \hat{\theta}^{EX}_i$ and that the $\boldsymbol{p}^j$ were generated in an honest way. This ensures that the seeds $\boldsymbol{p}_i^j$ originate from the proper $\mathbf{x}^{EX}$ with probability $1-negl(k)$. Bob can then commit to $\left((\hat{x}_i^{OT},\hat{\theta}_i^{OT})\right)_{i \in [2\lambda_{OT}]}$ using $\boldsymbol{p}_i^j$.

To prove that ERE-Commitment is relaxed extractable, we make use of equivocality and quantum rewinding.  A simulator commits to dummy values instead of $\left((\hat{x}^{EX}_i,\hat{\theta}^{EX}_i)\right)_{i\in [4 \lambda_{EX}]}$. After learning the challenged subset $E$, the simulator measures the corresponding states $\left(\ket{x^{EX}_i}_{\theta^{EX}_i}\right)_{i \in E}$, and, applying Watrous' quantum rewinding, opens to the measured values.  Once the simulator has passed the decommitment check, Bob announces the bases. The simulator will then measure $\left(\ket{x^{EX}_i}_{\theta^{EX}_i}\right)_{i \in \bar{E}}$ in the correct bases obtaining $\mathbf{x}^{EX}$. It is in this step in which the simulator has to make use of the error correction code for obtaining the correct seeds $(\boldsymbol{p}^j)_{j \in [2k]}$ given the syndromes $(Synd_j)_{j \in [2k]}$. The simulator will obtain the seeds of the equivocal commitments, and thus extracts  $\abs{\bar{T}}-\omega(\log^2(k))$ of the non-challenged committed values $\left((\hat{x}_i^{OT},\hat{\theta}_i^{OT})\right)_{i \in \bar{T}}$ with probability $1-negl(k)$.

\begin{protocol}[t!]
  \label{prot:our-qot}
  \centering
  \begin{tabular}{@{}p{0.42\linewidth} c p{0.42\linewidth}@{}}
    \multicolumn{1}{@{}c}{\textbf{Alice }} & &
    \multicolumn{1}{c@{}}{\textbf{Bob }} \\[0.6em]

    \textit{Input:} Messages $(m_0,m_1)$ & &
    \textit{Input:} Bit $b \in \{0,1\}$ \\[0.2em]
    \multicolumn{3}{@{}l}{\textit{Assumptions:} ERE-Commitment  PRG, and 2-universal hash functions $h_j$.}\\[0.8em]

    \textbf{1.} Chooses $\boldsymbol{x} \leftarrow \{0,1\}^{2\lambda_{\mathrm{OT}}}$ and
    $\boldsymbol{\theta} \leftarrow \{+,\times\}^{2\lambda_{\mathrm{OT}}}$, prepares
    $\ket{x}_\theta$, and sends the $2\lambda_{\mathrm{OT}}$ BB84 states. &
    $\xrightarrow{\text{Quantum}}$ &
    \textbf{1.a} Receives the BB84 states. \\[0.8em]
    \multicolumn{1}{@{}p{0.42\linewidth}}{} & &
    \textbf{1.b} Samples measurement bases
    $\hat{\boldsymbol{\theta}} \leftarrow \{+,\times\}^{2\lambda_{\mathrm{OT}}}$ and
    measures to obtain $\hat{\boldsymbol{x}} \leftarrow \{0,1\}^{2\lambda_{\mathrm{OT}}}$\\[1em]
    \textbf{2.} Acts as receiver in the ERE-Commitment execution for
    $(\hat x_i,\hat\theta_i)_{i\in[2\lambda_{\mathrm{OT}}]}$. &
    $\xleftrightarrow{\mathsf{ERE\text{-}Commit}}$ &  \textbf{2.} Executes ERE-Commitment to commit to
    $(\hat x_i,\hat\theta_i)_{i\in[2\lambda_{\mathrm{OT}}]}$.
    \\[1em]

    \textbf{3.} Chooses a random challenge subset
    $T \subset [2\lambda_{\mathrm{OT}}]$ with $|T| = \lambda_{\mathrm{OT}}$
    and sends $T$. &
    $\xrightarrow{\text{Classical}}$ &
    \textbf{3.} Receives $T$. \\[0.8em]

    \multicolumn{1}{@{}p{0.42\linewidth}}{\textbf{4.} Runs ERE-Decommitment for each $i \in T$ and obtains
    $(\hat x_i,\hat\theta_i)_{i\in T}$.} &
    $\xleftrightarrow{\mathsf{ERE\text{-}Decommit}}$ &
    \multicolumn{1}{p{0.42\linewidth}@{}}{\textbf{4.} Opens the corresponding commitments on all $i \in T$.} \\[1.5em]

    \textbf{5.} Checks that $x_i = \hat x_i$ whenever $\theta_i = \hat\theta_i$
    on $T$, up to experimental error $\alpha$; if the test fails, aborts. & & 
     \\[1.2em]

    \textbf{6.} Discards $T$, reorders $(x_i,\theta_i)_{i\in\bar T}$, and sends
    $\theta$ (on $\bar T$). &
    $\xrightarrow{\text{Classical}}$ &
    \textbf{6.} Discards $T$, reorders $(\hat x_i,\hat\theta_i)_{i\in\bar T}$,
    and receives $\boldsymbol{\theta}$. \\[1em]

    \multicolumn{1}{@{}p{0.42\linewidth}}{\textbf{7.} Receives $(I_0,I_1)$.} &
    $\xleftarrow{\text{Classical}}$ &
    \multicolumn{1}{p{0.42\linewidth}@{}}{\textbf{7.} Partitions $\bar T$ into
    $I_b = \{ i : \theta_i = \hat\theta_i \}$ and
    $I_{\bar b} = \{ i : \theta_i \neq \hat\theta_i \}$; sends $(I_0,I_1)$.} \\[0.8em]

    \textbf{8.} Defines $\boldsymbol{x}_0 = \{ x_i : i \in I_0 \}$,
    $\boldsymbol{x}_1 = \{ x_i : i \in I_1 \}$ and sends syndromes
    $\mathsf{Synd}_0,\mathsf{Synd}_1$ for correcting a proportion $\alpha$
    of errors in $\boldsymbol{x}_0,\boldsymbol{x}_1$. &
    $\xrightarrow{\text{Classical}}$ &
    \textbf{8.} Receives $\mathsf{Synd}_0,\mathsf{Synd}_1$. \\[0.8em]

    \textbf{9.} Samples seeds $s_0,s_1$ and sends
      $(s_0, \mathsf{PRG}(h(s_0,\boldsymbol{x}_0)) \oplus m_0)$ and $
       (s_1, \mathsf{PRG}(h(s_1,\boldsymbol{x}_1)) \oplus m_1)$ &
    $\xrightarrow{\text{Classical}}$ &
    \textbf{9.} Receives the seeds and masked messages. \\[0.8em]

    \multicolumn{1}{@{}p{0.42\linewidth}}{} & &
    \multicolumn{1}{p{0.42\linewidth}@{}}{\textbf{10.} Performs error correction on $\boldsymbol{x}_b$ using $\mathsf{Synd}_b$ and
    decrypt $m_b$ using $\hat{\boldsymbol{x}}_b = \{ \hat x_i : i \in I_b \}$. If error correction
    fails, chooses $\boldsymbol{x}_b$ as a random string in $\{0,1\}^{\lambda_{\mathrm{OT}}}$.} \\

  \end{tabular}
\caption{Our Proposed QOT protocol.}
\label{protocol:QOT}

\end{protocol}

Despite these changes, we prove that the OT functionality remains simulation-based secure given access to an equivocal and $\chi$-relaxed extractable bit commitment. With a practical implementation as our main goal, we explicitly take into account possible experimental errors and their correction, and we derive analytical expressions based on \cite{BF12} that facilitate benchmarking of the required quantum resources. The resulting QOT protocol is given in Protocol~\ref{protocol:QOT}. 

Moreover, we propose a method to distill a number $n_{\mathrm{OT}}$ of QOT keys from a single run of the protocol, which is particularly useful when a large number of BB84 states is employed to achieve a smaller sampling error (and hence a tighter $\Delta$-security bound).

\subsection{Related works}

We compare and benchmark our proposed protocol with other composably secure QOT protocols based on one-way functions that have been proposed in the literature.
In \cite{ABKK23}, the authors also focus on reducing the round complexity, but they work in the Random Oracle Model (ROM). In this setting, equivocality and extractability of commitments can be obtained from the random oracle, so no additional quantum resources are needed for the commitment layer itself. The cost of the quantum communication in \cite{ABKK23} comes entirely from the OT core, which is structurally similar to CK88. Under our common benchmarking assumptions (target trace distance, security level, and an upper bound on the number of RO queries), the resulting number of BB84 states in their protocols is of the same order of magnitude as in our construction, as shown in \Cref{table:benchmark} and detailed in \Cref{app:benchmark}. However, these estimates for \cite{ABKK23} do not take into account practical errors such as bit-flips, photon loss, or multi-photon events, whereas our benchmark explicitly includes such imperfections.

It is important to stress that the comparison in \Cref{table:benchmark} concerns \emph{only} the quantum communication cost (number of BB84 states) under a common concrete security target. In particular, we do not claim to match the classical efficiency of ROM-based commitments. Conceptually, our assumption is strictly weaker: our protocol is proved secure in the plain model under the minimal assumption of post-quantum one-way functions, without relying on a random oracle or heuristic instantiations thereof. In this sense, our construction achieves comparable BB84 cost while working under strictly milder computational assumptions than \cite{ABKK23}.

\begin{table}
    \centering
    \renewcommand{\arraystretch}{1.5}   
    \setlength{\tabcolsep}{8pt}        

    \begin{tabular}{|c|c|c|c|c|c|}
    \hline
      & \cite{BCKM21} 
      & \multicolumn{2}{|c|}{\cite{ABKK23}} 
      & \multicolumn{2}{|c|}{\textbf{Our protocol}} \\
    \hline \hline
    $\alpha$ 
      & $-$ 
      & $-$ 
      & $-$ 
      & $-$ 
      & $0.006$ \\
    \hline
    $\vartheta$ 
      & $-$ 
      & $-$ 
      & $-$ 
      & $-$ 
      & $0.001$ \\
    \hline
    $q_{\mathrm{RO}}$ 
      & $-$ 
      & $2^{64}$ 
      & $2^{64}$ 
      & $-$ 
      & $-$ \\
    \hline
    $N_{\mathrm{BB84}}$ 
      & $2.27\times 10^{13}  $ 
      &  $3.22\times 10^{6}$  
      &  $1.43\times 10^{6}$
      & $3.33 \times 10^{7} $  
      & $7.47 \times 10^{7}$  \\
    \hline
    $T_{\mathrm{acq}}$ 
      & 8.6 months 
      & $3.22$ s
      & $1.43$ s  
      & $33.3$ s
      & $74.7$ s  \\
    \hline
    \end{tabular}

\vspace{5mm}
    \caption{Performance benchmark for a target trace distance $\Delta \leq 10^{-15}$. Here, $\alpha$ denotes the bit-flip probability, $\vartheta$ the fraction of leaked bits due to, for example, multi-photon events, $q_{\mathrm{RO}}$ the number of queries to the random oracle (when applicable), $N_{\mathrm{BB84}}$ the number of exchanged BB84 states, and $T_{\mathrm{acq}}$ the corresponding acquisition time assuming a rate of $1$ MHz. From left to right, the columns correspond to: the protocol of \cite{BCKM21}, the 3-round and 4-round communication protocols of \cite{ABKK23}, and our protocol in the ideal (error-free) and realistic noisy settings.}
    \label{table:benchmark}
\end{table}

On the other hand, the number of BB84 states necessary to instantiate \cite{BCKM21} has been obtained by adapting our security bounds to their protocol structure. A detailed explanation of how all entries in \Cref{table:benchmark} are computed is provided in \Cref{app:benchmark}.

Further proposals, such as \cite{CCLY23}, consider relaxed version of simulations, where the run-time of the simulator depends  polynomially on the desired security level. In our work, we can achieve negligible distinguishing probability with polynomial-time simulators. 

\subsection{Paper Organisation}

Section \ref{sec:introduction} contains important background information and the technical overview of the paper.
Section \ref{sec:preldef} contains preliminary information such as a description of the notation used throughout the paper, and useful definitions.
Section \ref{sec:eqcomm} contains the proposed equivocal commitment.
Section \ref{sec:EREcom} presents the equivocal and relaxed-extractable commitment.
Section \ref{sec:QOT} contains the proof that the QOT protocol we propose is simulation-based secure given relaxed-extractable and equivocal bit commitment.

\section{Preliminaries and Definitions} 
\label{sec:preldef}

\subsection{Notation and acronyms}
Throughout this paper, we represent classical random variables with capital letters (e.g., $X$) while the value they take is represented by lower case letters (e.g., $x$).  Bold characters (e.g., $\mathbf{x}$) are strings of values.  Quantum states are represented as density matrices $\rho \in \BB_1(\HH)$. The probability of distinguishing two density matrices is upper-bounded by $\frac{1}{2}(1+\Delta(\rho^{1},\rho^{2}))$, where $\Delta(\rho^{1},\rho^{2})$ is the trace distance. When a negligible function $\mu(\lambda)$ exists such that $\Delta(\rho^{1},\rho^{2}) \leq \mu(\lambda)$, $\rho^{1}$ and $\rho^{2}$ are said to be statistically close. $\mu(\lambda)$ is negligible if, for every fixed $c$, $\mu(\lambda) = o(1/\lambda^c)$, where $\lambda$ is the \textit{security parameter}.  Throughout the paper, the security parameters of the different steps are written as $\lambda_{OT}$ and $\lambda_{EX}$, for example.  When using the symbol $\perp$, it means to abort.

In the Landau notation, given two functions $f(n)$ and $g(n)$, we write $f(n)=o(g(n))$ if $\lim_{n\rightarrow \infty} f(n)/g(n) = 0$. In the same way, $f(n)=\omega(g(n))$ if $\lim_{n\rightarrow \infty} f(n)/g(n) = \infty$ and $f(n)=O(g(n))$ if there exists a constant $ C>0$ such that $\lim_{n\rightarrow \infty} f(n)/g(n) \leq C$. Lastly, $f(n)=\Omega(g(n))$ if there exists a constant $ C>0$ such that $\lim_{n\rightarrow \infty} f(n)/g(n) \geq C$ and $f(n)=\Theta(g(n))$ if both $f(n)=O(g(n))$ and $f(n)=\Omega(g(n))$.

We define a \textit{classical-quantum hybrid state} as a bipartite state of the subsystems $XE$ such that $\rho_{XE} = \sum_{x \in \XX} p_x(x) \ket{x}\bra{x} \otimes \sigma_E^x$, where subsystem $X$ is classical. 
  $\XX$ is a finite set with $\abs{\XX} = dim(\HH_X)$, $p_x: \XX \rightarrow [0,1]$ is a probability distribution and $\{\ket{x}\}_{x \in \XX}$ an orthonormal basis of $\HH_x$.  $\sigma_E^x \in \HH_E$ is the density matrix correlated with the value $x \in \XX$. When we treat a collection of qubits as a single unit, we refer to it as a \textit{quantum register}.

A function $ h : \XX \times \Ss \rightarrow \YY$ is called \textit{2-universal hash} function if, for every two strings $x \neq x' \in \XX$, we have that $\Pr[h(x,S)=h(x',S)] \leq \frac{1}{\abs{\YY}}$. In the same way, a function $G:\{0,1\}^{\ell}\rightarrow\{0,1\}^{n}$ with seed length $\ell < n$  is a \textit{pseudorandom generator} (PRG) if, given a non-uniform quantum polynomial time distinguisher $\DD^* = \{\DD^*_{\lambda}, \sigma_{\lambda}\}$, there exists a negligible function $\nu(\cdot)$ such that $\abs{\probP_{s \in \{0,1\}^n} [\DD^*(1^{\lambda},G(s))=1] - \probP_{r \in \{0,1\}^\ell} [\DD^*(1^{\lambda},r)=1]} \leq \nu(\lambda)$ with $ r$ chosen uniformly at random.

A \textit{non-uniform quantum polynomial time distinguisher}  $\DD^* = \{\DD^*_{\lambda}, \sigma_{\lambda}\}$ is composed of a $\lambda$-size quantum circuit $\DD^*_{\lambda}$ and a non-uniform $\lambda$-size quantum advice $\sigma_{\lambda} \in \BB(\HH_{\DD})$.

For convenience, we summarize the main acronyms used throughout the paper in \Cref{tab:acronyms}.

\begin{table}[ht]
  \centering
  \begin{tabular}{ll}
    \toprule
    Acronym / symbol & Meaning \\
    \midrule
    QOT      & Quantum Oblivious Transfer \\
    OT       & (Classical) Oblivious Transfer \\
    OWF      & One-Way Function \\
    QKD      & Quantum Key Distribution \\
    MPC      & Multi-Party Computation \\
    ROM      & Random Oracle Model \\
    ERE      & Equivocal and Relaxed-Extractable (commitment) \\
   $Com_s$      & Classical Bit Commitment with seed $s$,e.g., Naor's Bit Commitment \\
    BB84     & Standard qubit states $\{\ket{0},\ket{1},\ket{+},\ket{-}\}$ \\
    $\Delta$ & Trace-distance distinguishing advantage \\
    $\lambda$ & Security parameter \\
    $\lambda_{\mathrm{OT}}$, $\lambda_{\mathrm{EX}}$ & Security parameters for OT and extractable layers \\
    $\lambda_{PQS}$ & String length of a seed of a post-quantum safe PRG\\
    $\mathcal{C},\mathcal{R}$    & Committer and receiver in a bit-commitment protocol \\
    $A,B$    & Alice (sender) and Bob (receiver) in the QOT protocol \\
    $ \mathcal{D}^\ast$ & (Non-uniform) quantum polynomial-time distinguisher \\
    PRG      & Pseudorandom generator $G:\{0,1\}^\ell \to \{0,1\}^n$ \\
    $h(\cdot)$      & (Two-)universal hash function $h:X\times S \to Y$ \\
    $\alpha$ & Physical bit-flip probability\\
    $\vartheta$ & Fraction of leaked bits (e.g.\ due to multiphoton events) \\
    $\eta$   & Relaxed binding fraction in $\eta$-relaxed binding \\
    $\chi$   & Relaxed extractability fraction in $\chi$-relaxed extractability \\
    \bottomrule
  \end{tabular}
  \vspace{4mm}
  \caption{Acronyms and main symbols used throughout the paper.}
  \label{tab:acronyms}
\end{table}

\subsection{Bit commitment} 

A bit commitment is a two-party interactive functionality between a quantum committer $\CO$ and a quantum receiver $\RE$. A commitment protocol is composed of two phases: the committing phase and the decommitting phase, where $\CO = \{\CO_{com}, \CO_{dec}\}$ and $\RE = \{\RE_{com}, \RE_{dec}\}$. During the committing phase, the committer $\CO$ commits to a bit $b \in \{0,1\}$, with $\CO_{com}(1^{\lambda}, b)$ representing this step for the committer and $\RE_{com}(1^{\lambda})$, for the receiver. After the interaction, $\CO_{com}$ outputs a state $\rho^{\CO}_{com} \in \BB_1(\HH_{\CO})$ and $\RE_{com}$, $\rho^{\RE}_{com} \in \BB_1(\HH_{\RE})$. All in all, the commitment phase is written as $(\rho^{\CO}_{com},\rho^{\RE}_{com}) \mapsfrom \langle \CO_{com}(1^{\lambda},b), \RE_{com}(1^{\lambda}) \rangle $. Additionally, $\rho^{\CO}_{com}$ and $\rho^{\RE}_{com}$ may be entangled.  During the decommitment phase, the interaction between $\CO_{dec}(\rho^{\CO}_{com})$ and $\RE_{dec}(\rho^{\RE}_{com})$ leads to an output of $\RE_{dec}$ that is either $b'$ or $\perp$.

\subsubsection{Security against malicious receiver}
The two notions of security that we use against malicious receivers are as follows: 

 \begin{definition}[Computational hiding] A bit commitment protocol $(\CO,\RE)$ is computationally hiding if, given any polynomial-size interactive receiver $\RE^*_{com} = \{\RE^*_{com,\lambda}, \mathbf{\rho}_{\lambda}\}$ with $\text{OUT}_{\RE} \langle \CO(1^{\lambda}, b), \RE^*_{com, \lambda} (\mathbf{\rho}_{\lambda}) \rangle$ being the output of the receiver after the commitment phase, there exists a negligible function $\nu (\cdot)$ such that:

\begin{align*}
 & \Big | \probP [OUT_{\RE}\langle\CO_{com}(1^{\lambda},0),\RE^*_{com}(\rho_{\lambda})\rangle=1] - \\
        &\probP[OUT_{\RE}\langle\CO_{com} (1^{\lambda},1),\RE^*_{com}(\rho_{\lambda})\rangle=1]\Big |=\nu (\lambda).
    \end{align*}
\label{def:comphid}
\end{definition}

\begin{definition} [Equivocality]
    Given a set of auxiliary states $\{\rho_{\lambda}, \sigma_{\lambda}\}_{\lambda \in \NN}$ with $\rho_{\lambda}, \sigma_{\lambda} \in \BB_1(\CC^n)$, a bit commitment protocol $(\CO,\RE)$ is \textit{equivocal} if, for any poly-size receiver $\RE^* = \{\RE_{com, \lambda}^*,\RE_{dec, \lambda}^*, \rho_{\lambda}\}$, there exists a non-uniform quantum polynomial time equivocator $\QQ_{\RE^*} = \{\QQ_{\RE^*,com},\QQ_{\RE^*,dec}\}$ such that, for any poly-size distinguisher $\DD = \{ \DD_{\lambda}, \sigma_{\lambda} \}$, there exists a negligible function $\nu (\lambda) > 0$ for the committed bit $b \in \{0,1\}$:
    
\begin{equation*}
   \abs{ \probP [\DD^* (\sigma_{\lambda}, Real_b) = 1] - \probP [\DD^* (\sigma_{\lambda}, Ideal_b) = 1} = \nu(\lambda),
\end{equation*}

\noindent provided that: 

\smallskip
\noindent \textbf{Real$_b$}: Interaction between $\RE_{com}^*(\rho)$ and $\CO_{com}(1^{\lambda}, b)$ such that $(\rho^{\CO}_{com}, \rho^{\RE^*}_{com}) \mapsfrom \langle \CO_{com}(b),\RE_{com}^*(\rho) \rangle $ and $ \rho^{\RE^*}_{final} \mapsfrom \langle \CO_{dec}(\rho^{\CO}_{com}),\RE_{dec}^*(\rho^{\RE^*}_{com}) \rangle$.

\smallskip
\noindent \textbf{Ideal$_b$}: The quantum simulator outputs $(\rho^{\CO}_{com}, \rho^{\RE^*}_{com}) \mapsfrom \QQ_{\RE^*,com} (\rho)$. \\ Then, $\rho^{\RE^*}_{final}$ $\mapsfrom $ $\langle \QQ_{\RE^*,dec} (\rho^{\CO}_{com}, b),\RE_{dec}^*(\rho^{\RE^*}_{com}) \rangle$.

 \label{def:equivocal}
\end{definition}

\subsubsection{Security against malicious sender}

The security definitions of bit commitment against malicious sender are written as:
\begin{definition} [Statistical binding]
 A bit commitment protocol $(\CO,\RE)$ is statistically-binding if, for every unbounded-size committer $\CO^* = \{\CO_{com}^*,\CO_{dec}^*\}$ and $\lambda \in \NN$, such that:
\begin{align*}
   \left(\rho^{\CO^*}_{com}, \rho^{\RE}_{com}\right) \mapsfrom \left(\langle \CO^*_{com}(1^{{\lambda}}, b),\RE_{com}(1^{{\lambda}}) \rangle\right),
 \end{align*} 
there exists a bit $b \in \{0,1\}$ and a negligible function $\nu(\cdot)$ such that with probability at least $1 - \nu(\lambda)$,
 \begin{equation*}
       \probP[OUT_{\RE} \langle \CO_{dec}^* (1^{ \lambda}, \rho^{\CO^*}_{com} ( \lambda)) ,\RE_{dec} (1^{\lambda}, \rho^{\RE}_{com} ( (\lambda)) \rangle\ \ne  b ] \leq \nu(\lambda).
\end{equation*} 
 \label{def:statistican-binding}
 \end{definition}

In this work, we use a weaker notion of binding that we call relaxed statistical binding.
 Unlike the general definition of statistically-binding bit commitment, each individual commitment is not statistically binding by itself. However, committing to a collection of bits binds for a fraction $1-\eta$ of the bits. 

 \begin{definition} [$\eta$-relaxed statistical binding] 
 Let $(\CO,\RE)$ be a bit commitment protocol.
 For a sequence of $m$ commitments $\left( \CO_i^* = ( \CO_{com,i}^*, \CO_{dec,i}^*)\right)_{i \in m}$ with $\lambda \in \NN$ such that:
\begin{align*}
     \left(\boldsymbol{\rho}^{\CO^*}_{com} ( \lambda),\boldsymbol{\rho}^{\RE}_{com} (\lambda) \right)  \mapsfrom \left(\langle \CO^*_{com, i}(1^{{\lambda}}, b_i),\RE_{com, i}(1^{{\lambda}}) \rangle\right)_{i \in m}.
 \end{align*}

  $(\CO,\RE)$ is $\eta$-relaxed-binding if given the commitment there  exists a set $I \subseteq [m]$, where $|I| \geq (1-\eta)m$ and a negligible function $\nu(\cdot)$ such that with probability at least $1-\nu(\lambda)$ over $\left(\boldsymbol{\rho}^{\CO^*}_{com} ( \lambda),\boldsymbol{\rho}^{\RE}_{com} (\lambda) \right)$, there exists a string $\boldsymbol{b} \in \{0,1\}^{|I|}$ such that,
 \begin{equation*}
       \probP[OUT_{\RE} \langle \CO_{dec}^* (1^{ \lambda}, \boldsymbol{\rho}^{\CO^*}_{com} ( \lambda)) ,\RE_{dec} (1^{\lambda}, \boldsymbol{\rho}^{\RE}_{com} (\lambda)) \rangle|_I \ne \boldsymbol{b} (\lambda) ] \leq \nu(\eta).
\end{equation*} 

 \label{def:relaxbind}

 \end{definition}

Relaxed extractability works similarly to relaxed binding: after committing to many bits, most of them can be extracted.

\begin{definition}[$\chi$-relaxed extractability]
 Let $(\CO,\RE)$ be a bit commitment protocol.
     Given a sequence of auxiliary states  $\left(\rho_i, \sigma_i  \right)_{i \in m}$ where $\rho_i  , \sigma_i  \in \BB_1(\CC^{ {\lambda}})$ and $\lambda \in \NN$, $m$ sequential repetitions of commitments $(\rho^{\CO^*}_{com, 1}, \rho^{\RE}_{com, 1}),$ $...,(\rho^{\CO^*}_{com,  m}, \rho^{\RE}_{com, m})$ are \textit{$\chi$-relaxed-extractable} if, for any poly-size quantum malicious committer $\CO^*$ $ = \{ (\CO^*_{com, i} (1^{{\lambda}}))_{i \in m}, $ $ ( \rho_i )_{i \in m} \}$, there exists a quantum polynomial time extractor $\QQ_{\CO^*} = \{\QQ_{\CO^*,com},\QQ_{\CO^*,dec}\}$ such that, for any poly-size distinguisher $\DD^* = \{ (\DD^*_i)_{i \in m},  ( \sigma_{i})_{i \in m} \}$ and for every polynomial-size opening strategy $\left(\CO^*_{dec, i}\right)_{i \in m }$, there is a negligible function $\nu (\chi) > 0$ that obeys:

\begin{equation*}
   \abs{ \probP [\DD^* (\sigma_{\lambda}, Real) = 1] - \probP [\DD^* (\sigma_{\lambda}, Ideal) = 1]} = \nu(\chi),
\end{equation*}
with: 

\smallskip
\noindent \textbf{Real}: Interaction between $\RE_{dec}(\rho^{\RE}_{com})$ and $\CO^*_{dec}(\rho^{\CO^*}_{com})$ such that $ (\rho^{\CO^*}_{final},b_i)$ $ \mapsfrom \langle \CO^*_{dec}(\rho^{\CO^*}_{com}),\RE_{dec}(\rho^{\RE}_{com}) \rangle_i$, with $b_i \in \{0,1, \perp\}$. Then, the output after $m$ sequential repetitions is given by  
 $\left(\boldsymbol{\rho}^{\CO^*}_{final} (m),\boldsymbol{b} (m) \right) \mapsfrom \left( \langle \CO^*_{com, i}(1^{{\lambda}}, b_i),\RE_{com, i}(1^{{\lambda}}) \rangle\right)_{i \in m}$, where $\boldsymbol{b} \in \{0,1\}^{m}$.

\smallskip

\noindent \textbf{Ideal}: The simulator computes $(\boldsymbol{\rho}^{\CO}_{com} (m), \boldsymbol{\rho}^{\RE}_{com}(m), \boldsymbol{b^*}(m)) \mapsfrom (\QQ_{\CO^*,com} (\rho_i))_{i \in \lambda}$, where $\boldsymbol{b}^* \in \{0,1, \mathbbold{\qm}\}^m$.  Then, $(\boldsymbol{\rho}^{\CO}_{final} (m),\boldsymbol{b} (m))$ $\mapsfrom $ $\langle \CO^*_{dec} (\boldsymbol{\rho}^{\CO}_{com}(m)),\RE_{dec}(\boldsymbol{\rho}^{\RE}_{com} (m)) \rangle$. The simulator outputs FAIL if :
    \begin{itemize}
        \item $|\{i : b^*_i = \mathbbold{\qm}\}| \geq \chi m$, or
        \item if for any $i$ such that $b^*_i  \in \{0,1\}$, $b_i \ne b^*_i$.
    \end{itemize} 
    If the simulator does not output FAIL, it outputs $(\boldsymbol{\rho}^{\CO}_{final}(m),\boldsymbol{b}_S(m))$, where $\boldsymbol{b}_S(j) =\boldsymbol{b^*}(j)$ every time $b^*_i \neq  \mathbbold{\qm}$, and replaces the $\mathbbold{\qm}$ cases by the values $\boldsymbol{b} (j)$ opened by $\CO^*_{dec} (\boldsymbol{\rho}^{\CO}_{com}(j))$. 

\label{def:extract}
\end{definition}

\subsection{Leftover hash lemma} 

\begin{definition}[Quantum conditional min-entropy \cite{Ren05}]
     Given a classical-quantum hybrid state $\rho_{XE} = \sum_{x \in \XX} p_x(x) \ket{x}\bra{x} \otimes \sigma_E^x$, the conditional min-entropy $\text{H}_{\text{min}} ( X |E )$ is defined as:

     \begin{equation*}
         \text{H}_{\text{min}} ( X |E ) = \text{H}_{\text{min}} ( \rho_{XE} |E ) = \sup_{\sigma_E} \max \{h \in \RR : 2^{-h} \, \mathbbm{I} \otimes \sigma_E - \rho_{XE} \geq 0 \}.
     \end{equation*}
\end{definition}

\begin{lemma}[Leftover Hash Lemma with Quantum Side Information \cite{Ren05}]
Let $\rho_{XE}$ be a hybrid state and $h(r,x): \{0,1\}^m \times \{0,1\}^n \rightarrow \{0,1\}^{\ell} $ a 2-universal hash function, with $r$ uniformly distributed over $\RE$. Then, $ K = h(r,x)$ satisfies:

\begin{equation*}
        \Delta (\rho_{K_{\bar{b}}K_{b}E}, \frac{1}{2^{\ell}} \mathbbm{I} \otimes \rho_{K_{b}E}) \leq \frac{1}{2}\sqrt{2^{\ell-H_{min}(X|E)}}.
\end{equation*}
\label{lem:hash}
\end{lemma}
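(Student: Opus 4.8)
This is Renner's Leftover Hash Lemma with quantum side information~\cite{Ren05}, so the plan is to reproduce its standard proof. In the OT notation above one reads $K:=K_{\bar b}$ as the key to be extracted and treats $K_b$ — a deterministic function of the seed $r$ and of the complementary block — together with $E$ as the single side‑information register, so that the quantity denoted $\mathrm{H}_{\min}(X|E)$ should be understood as the min‑entropy of the hashed block conditioned on $(K_b,E)$.

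First I would reduce the trace distance to a weighted Hilbert–Schmidt distance. By definition of conditional min‑entropy there is a state $\sigma_E$ — w.l.o.g.\ full rank, by continuity — with $\rho_{XE}\le 2^{-\mathrm{H}_{\min}(X|E)}\,\id\otimes\sigma_E$. I would invoke the operator inequality $\norm{\Theta}_1\le\sqrt{\operatorname{tr}\sigma}\;\norm{\sigma^{-1/4}\Theta\sigma^{-1/4}}_2$, valid for Hermitian $\Theta$ and positive definite $\sigma$ (a Cauchy–Schwarz estimate for the Hilbert–Schmidt inner product; I would just cite it), applied to $\Theta_r:=\rho_{KE}-\tfrac{1}{2^{\ell}}\id\otimes\rho_E$ (which depends on the seed $r$) and to $\sigma:=\id_K\otimes\sigma_E$, for which $\operatorname{tr}\sigma=2^{\ell}$. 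This gives
\[
\Delta\!\left(\rho_{KE},\tfrac{1}{2^{\ell}}\id\otimes\rho_E\right)\le\tfrac12\sqrt{2^{\ell}}\;\norm{(\id\otimes\sigma_E^{-1/4})\,\Theta_r\,(\id\otimes\sigma_E^{-1/4})}_2 .
\]

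Next I would average over the uniform seed $r\in\{0,1\}^m$ and use Jensen's inequality, $\mathbb{E}_r\norm{\cdot}_2\le(\mathbb{E}_r\norm{\cdot}_2^2)^{1/2}$. Expanding $\rho_{KE}=\sum_x p(x)\,\ket{h(r,x)}\!\bra{h(r,x)}\otimes\sigma_E^x$ and computing the squared Hilbert–Schmidt norm, the cross terms carry factors $\delta_{h(r,x),h(r,x')}$; since $h$ is $2$-universal, $\Pr_r[h(r,x)=h(r,x')]\le 2^{-\ell}$ for $x\neq x'$, so the $x\neq x'$ contributions cancel against the subtracted ideal term $\tfrac{1}{2^{\ell}}\id\otimes\rho_E$ and only the diagonal survives:
\[
\mathbb{E}_r\,\norm{(\id\otimes\sigma_E^{-1/4})\Theta_r(\id\otimes\sigma_E^{-1/4})}_2^2\le\operatorname{tr}\!\Bigl[\bigl((\id\otimes\sigma_E^{-1/2})\rho_{XE}(\id\otimes\sigma_E^{-1/2})\bigr)\rho_{XE}\Bigr].
\]
Finally I would bound this collision‑entropy‑type quantity by the min‑entropy: conjugating $\rho_{XE}\le 2^{-\mathrm{H}_{\min}(X|E)}\id\otimes\sigma_E$ by $\id\otimes\sigma_E^{-1/2}$ gives $(\id\otimes\sigma_E^{-1/2})\rho_{XE}(\id\otimes\sigma_E^{-1/2})\le 2^{-\mathrm{H}_{\min}(X|E)}\id$, whence the trace above is $\le 2^{-\mathrm{H}_{\min}(X|E)}\operatorname{tr}\rho_{XE}=2^{-\mathrm{H}_{\min}(X|E)}$. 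Chaining the three steps, and using joint convexity of the trace distance to pass from the per‑seed estimate to the seed‑averaged state, yields $\Delta\le\tfrac12\sqrt{2^{\ell-\mathrm{H}_{\min}(X|E)}}$.

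I expect the delicate point to be the averaging step: carrying out the seed‑averaged Hilbert–Schmidt computation and checking that $2$-universality makes \emph{precisely} the off‑diagonal and the subtracted ideal terms cancel, leaving exactly the $\sigma_E$‑weighted collision term. The operator‑norm inequality and the reduction to a full‑rank $\sigma_E$ are routine and I would simply invoke them.
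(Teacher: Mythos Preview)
The paper does not prove this lemma at all; it is stated in the preliminaries (\S2.3) and simply attributed to Renner~\cite{Ren05} without argument. Your sketch is precisely the standard Renner proof --- reduce trace distance to a $\sigma_E$-weighted Hilbert--Schmidt norm, average over the seed and use $2$-universality to collapse the cross terms to the collision term, then dominate collision entropy by min-entropy --- and it is correct, so you have supplied strictly more than the paper does.
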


\begin{lemma}[Conditional min-entropy \cite{BF12}]
    Given an $n$-qubit system $A$, let the state $\ket{\psi_{AE}} = \sum_{\substack{\textbf{b} \in \{0,1\}^n \\ \abs{w(\textbf{b})-\alpha} \leq \delta}} \ket{b} \otimes \ket{\psi_E^b} \in \HH_A \otimes \HH_E$ have a relative Hamming weight $\delta$-close to $\alpha$, with $\delta + \alpha \leq 1/2$. Let $X$ be the random variable obtained by measuring $A$ in the bases $H^{\boldsymbol{\theta}}\{\ket{0},\ket{1}\}^{\otimes n}$ for $\boldsymbol{\theta} \in \{0,1\}^n$. Then:

\begin{equation}\label{eq:entropy}
       H_{\text{min}}(X | E) \geq d_H(\boldsymbol{\theta}, \hat{\boldsymbol{\theta}} ) - h_2(\delta + \alpha)n,
\end{equation}
where $\hat{\boldsymbol{\theta}} \in \{0,1\}^n$ denotes the measurement bases committed by Bob, $h_2(\cdot): [0,1] \rightarrow [0,1]$, the binary entropy function and $ d_H(\cdot, \cdot ): \{0,1\}^n \times  \{0,1\}^n \rightarrow \NN$, the Hamming distance function.
    
\end{lemma}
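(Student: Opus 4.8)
The plan is to establish the min-entropy bound of \eqref{eq:entropy} by combining a counting bound on the number of basis-disagreements with the standard superposition-based entropy argument of Damg\aa{}rd--Fehr--Salvail--Schaffner, as adapted in \cite{BF12}. First I would decompose the state $\ket{\psi_{AE}}$ along the committed basis choice $\boldsymbol{\hat\theta}$: write $A = A_{=}\otimes A_{\neq}$, where $A_{=}$ collects the positions $i$ with $\theta_i=\hat\theta_i$ (so the measurement basis matches the committed one) and $A_{\neq}$ the $d:=d_H(\boldsymbol\theta,\boldsymbol{\hat\theta})$ positions where they disagree. On the agreeing positions, the outcome $X|_{A_=}$ is exactly the computational-basis content, which carries no guaranteed entropy; all the entropy must come from the $d$ disagreeing positions, where Bob effectively measures in the conjugate basis relative to how the amplitudes are organized.

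Next I would invoke the key structural fact: $\ket{\psi_{AE}}$ is supported only on strings $\mathbf{b}$ of relative Hamming weight within $\delta$ of $\alpha$, i.e., within a Hamming ball (shell) $B$ of radius $(\delta+\alpha)n$ around $0^n$. The number of such strings is at most $2^{h_2(\delta+\alpha)n}$ by the standard entropy bound on binomial sums (valid since $\delta+\alpha\le 1/2$). This means the computational-basis state of $A$ lives in a subspace of dimension $\le 2^{h_2(\delta+\alpha)n}$. I would then apply the uncertainty/entropy-splitting lemma for BB84-type measurements: measuring the $A_{\neq}$ register in the Hadamard-rotated basis, conditioned on $E$ and on the (classical, at most $h_2(\delta+\alpha)n$-bit's worth of) side information coming from restriction to this low-dimensional support, yields min-entropy at least $d$ minus the $\log$ of the support dimension. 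Concretely, the chain is $H_{\min}(X|E)\ge H_{\min}(X|_{A_{\neq}} \,|\, E) \ge d - h_2(\delta+\alpha)n$, where the last inequality uses that conjugate-basis measurements on a state whose computational-basis support has size $\le 2^{h_2(\delta+\alpha)n}$ cannot be predicted with probability better than $2^{-(d - h_2(\delta+\alpha)n)}$; this is exactly the Fourier/norm argument (each amplitude of the Hadamard transform is bounded by $\sqrt{|B|}/2^{d/2}$ in the relevant normalization).

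I would then assemble these pieces: the guessing probability of $X$ given $E$ is bounded by $\max_{\mathbf{x}}\|\langle \mathbf{x}| H^{\boldsymbol\theta}\ket{\psi_{AE}}\|^2$, and expanding $H^{\boldsymbol\theta}$ only on the $d$ disagreeing coordinates while using that $\ket{\psi_{AE}}$ is a superposition over $\le |B|$ computational basis strings gives the factor $|B|/2^{d}\le 2^{h_2(\delta+\alpha)n}/2^{d}$, hence $H_{\min}(X|E)\ge d - h_2(\delta+\alpha)n$. The parameter $\hat\theta$ being the committed basis rather than a basis chosen freely by the adversary is what pins down $d=d_H(\boldsymbol\theta,\boldsymbol{\hat\theta})$ as a fixed quantity in the bound.

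The main obstacle I expect is handling the side information $E$ cleanly in the entropy splitting: one must be careful that restricting to the Hamming shell $B$ does not secretly give the adversary more than $h_2(\delta+\alpha)n$ bits of advantage, and that the argument degrades gracefully from the exact-weight case ($\delta=0$) treated in \cite{DFL+09} to the $\delta$-close case. The clean way to do this is to treat the "which string in $B$" information as at most $\log|B|$ classical bits appended to $E$, apply the exact-support version of the uncertainty bound, and then note $\log|B|\le h_2(\delta+\alpha)n$; chaining with the data-processing-style inequality $H_{\min}(X|E)\ge H_{\min}(X|E')-\log|B|$ for $E'$ containing the extra register closes the gap. I would then remark that \eqref{eq:entropy} is tight up to the binary-entropy slack and is exactly the form needed to feed into \Cref{lem:hash} for the hiding proof.
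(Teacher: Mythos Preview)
The paper does not give its own proof of this lemma: it is stated in the Preliminaries as a result quoted from \cite{BF12}, with no accompanying argument. So there is nothing in the paper to compare your proposal against.

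That said, your sketch is a faithful reconstruction of the argument in \cite{BF12}. The essential steps are exactly the ones you list: (i) the state lives in a computational-basis subspace of dimension at most $|B|\le 2^{h_2(\alpha+\delta)n}$; (ii) for any outcome $\mathbf{x}$, the conditional state on $E$ is a sum of at most $|B|$ vectors each of norm $2^{-d/2}$ (coming from the Hadamard on the $d$ disagreeing positions), so by Cauchy--Schwarz the guessing probability is at most $|B|/2^{d}$; (iii) taking $-\log$ gives $H_{\min}(X|E)\ge d-h_2(\alpha+\delta)n$. One remark: your direct Fourier/Cauchy--Schwarz computation already handles the side information $E$ cleanly, so the extra ``append $\log|B|$ classical bits to $E$ and chain'' step you describe at the end is not needed and would in fact double-count the $h_2(\alpha+\delta)n$ loss. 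You can drop that paragraph and the argument stands.
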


\subsection{Watrous Rewinding Lemma}

\begin{lemma}[Rewinding lemma with small perturbations \cite{Wat05}\label{lem:Wat}]
Let $Q$ be an $(n, k)$-quantum circuit that outputs a classical bit $b$ and $m$ qubits. For any input state $\ket{\psi} \in \BB_1(\CC^n)$, let $p(\psi)$ be the probability of measuring the classical bit $b=0$ such that the state after measuring the action of the  the circuit $Q$ is $ \ket{\phi_0(\psi)} \in \BB_1(\CC^m)$. Let $p_0 \in (0,1)$, $\epsilon \in (0,1/2)$ and $q \in (0,1)$, where $q$ is the probability that the quantum circuit acts like a unitary, be such that:
\begin{enumerate}
    \item $\abs{p(\psi)-q} < \epsilon$, $\forall \ket{\psi} \in \BB_1(\CC^n)$,
    \item $p_0 < p(\psi)$, $\forall \ket{\psi} \in \BB_1(\CC^n)$,
    \item $p_0(1-p_0) \leq q(1-q)$.
\end{enumerate}

Then, there exists a general quantum circuit $\RE$ of size $ \mathcal{O} \left( \frac{\log(1/\epsilon) \abs{Q}}{p_0(1-p_0)} \right)$, such that, for every input state $\ket{\psi}$, the output $R(\ket{\psi})$ satisfies:

\begin{equation*}
    \norm{R(\ket{\psi}) -\ket{\phi_0(\psi)} }_1 \leq 4 \sqrt{\epsilon} \frac{\log(1/\epsilon)}{p_0(1-p_0)} .
\end{equation*}  
\end{lemma}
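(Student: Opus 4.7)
The plan is to follow Watrous' original amplitude-amplification argument, adapted to the setting where $p(\psi)$ is not exactly fixed but lies in a small window around $q$. First, I would purify the circuit $Q$ into a unitary $V$ acting on the input register together with an ancillary workspace initialized to $\ket{0}$, and arrange for the classical bit $b$ to be the measurement of a single designated output qubit. Concretely, after applying $V$ one can write
\begin{equation*}
V\bigl(\ket{\psi}\ket{0}\bigr) \;=\; \sqrt{p(\psi)}\,\ket{0}\ket{\phi_0(\psi)} \,+\, \sqrt{1-p(\psi)}\,\ket{1}\ket{\phi_1(\psi)},
\end{equation*}
so that the sought output $\ket{\phi_0(\psi)}$ corresponds to the $b=0$ branch, and the total size of $V$ is $O(|Q|)$.

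Next I would introduce the two reflections that drive amplitude amplification: the reflection $R_{\text{init}} = \id - 2\ket{\psi,0}\bra{\psi,0}$ implemented via $V R_0 V^{\dagger}$ where $R_0$ is the reflection about the all-zero ancilla subspace, and the phase flip $R_{\text{good}} = \id - 2\,(\ket{0}\bra{0}\otimes \id)$ on the output qubit. The rewinding circuit $R$ applies $V$, then alternates $R_{\text{good}}$ and $V R_0 V^{\dagger}$ for $\ell = \Theta(\log(1/\epsilon)/\sqrt{p_0(1-p_0)})$ iterations, yielding the stated size bound once we invoke hypothesis (3) to bound $1/\sqrt{q(1-q)}$ by $1/\sqrt{p_0(1-p_0)}$. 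In the idealized situation where $p(\psi) = q$ exactly for every $\psi$, the analysis reduces to the standard Grover rotation in a two-dimensional invariant subspace spanned by $\ket{0}\ket{\phi_0(\psi)}$ and $\ket{1}\ket{\phi_1(\psi)}$, and choosing $\ell$ as above drives the amplitude onto the good branch to within $\epsilon$.

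The main obstacle is quantifying robustness when $p(\psi)$ can deviate from $q$ by up to $\epsilon$. I would handle this by a hybrid argument: compare the iterate $R$ against an ideal version $\tilde R$ whose generator assumes $p(\psi)=q$ everywhere. A direct calculation shows that on each input the per-iteration difference has operator norm $O(\sqrt{\epsilon})$, coming from the discrepancy $|\sqrt{p(\psi)}-\sqrt{q}|$ and the fact that the two unit vectors defining the two-dimensional invariant plane are perturbed by $O(\sqrt{\epsilon})$ in trace norm. Hypothesis (2), $p_0 < p(\psi)$, guarantees that no iteration collapses onto a near-singular direction, so the perturbations accumulate only linearly in $\ell$. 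Telescoping over the $\ell = \Theta(\log(1/\epsilon)/\sqrt{p_0(1-p_0)})$ iterations, the total deviation from the ideal output is bounded by $\ell \cdot O(\sqrt{\epsilon})$, which matches the final estimate
\begin{equation*}
\norm{R(\ket{\psi}) - \ket{\phi_0(\psi)}}_1 \;\leq\; 4\sqrt{\epsilon}\,\frac{\log(1/\epsilon)}{p_0(1-p_0)}.
\end{equation*}

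The subtle part is tracking constants so that the accumulated error matches the stated bound exactly rather than losing factors in the hybrid; I would handle it as in Watrous' original proof, by projecting each actual iterate onto the ideal two-dimensional invariant plane and bounding the orthogonal component by $\sqrt{\epsilon}$ per step, then invoking the triangle inequality together with the contractivity of quantum channels on the trace distance.
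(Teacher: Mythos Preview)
The paper does not prove this lemma at all: it is quoted verbatim from Watrous~\cite{Wat05} as a preliminary result and used as a black box, so there is no in-paper proof to compare your sketch against.

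That said, your sketch does not actually reproduce Watrous' argument, and the discrepancy is substantive. You describe Grover-style amplitude amplification: apply $V$, then alternate the two reflections $R_{\text{good}}$ and $VR_0V^\dagger$ for a predetermined number $\ell$ of rounds, and only then measure. That procedure requires knowing the rotation angle (hence $p$) to pick $\ell$ correctly, and its natural iteration count scales like $1/\sqrt{p(1-p)}$, which is exactly the inconsistency that shows up in your write-up: you set $\ell=\Theta(\log(1/\epsilon)/\sqrt{p_0(1-p_0)})$, yet the lemma's circuit-size and error bounds both carry $p_0(1-p_0)$ without the square root. Watrous' actual technique is a \emph{measure-and-rewind} loop: apply $Q$, measure the flag qubit, and on failure apply $Q^\dagger$, a phase flip on the ancilla-zero subspace, and $Q$ again before remeasuring. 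The crucial point is that when $p(\psi)$ is exactly $q$ for every $\psi$, the post-failure correction returns the register \emph{exactly} to the initial state, so the trials are genuinely independent and one needs $\Theta(\log(1/\epsilon)/(q(1-q)))$ rounds; the perturbation analysis then tracks how much the imperfect reset (when $|p(\psi)-q|<\epsilon$) leaks per round. Your hybrid-with-ideal-plane idea is morally related, but grafted onto the wrong iteration, so the constants and the dependence on $p_0(1-p_0)$ do not line up. If you want a self-contained proof, follow Section~4 of~\cite{Wat05} rather than the amplitude-amplification template.
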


\subsection{Error resilience}
\label{sec:errorcorr}
Handling error correction in our protocol involves more subtleties compared to, for example, QKD, since in our case Alice and Bob are not trusted. In quantum communication protocols, there are two primary sources of errors: noise and losses. In the case of losses, since the proposed protocol incorporates bit commitments for each BB84 state, the loss of a state will result in the absence of a commitment for that instance of the protocol. Discarding non-committed instances has no impact on the security analysis of the protocol.

The potential presence of malicious parties prevents them from cooperating to estimate the noise. Therefore, an error correction subroutine must be included in the protocol to handle errors without any collaboration between parties \cite{DFSS06,EHG+13,LPAK23}. An $(\mathcal{M}, K, t)$-error correction code can correct errors in a code $C$ composed of $K$ elements from a metric space $\mathcal{M}$, such that for every element $w \in \mathcal{M}$, there exists at most one codeword $c \in C$ within a ball of radius $t$ around $w$.

We propose the use of a non-interactive (classical) error correction code, which is syndrome-based, such as Low-Density Parity-Check (LDPC) codes \cite{Ga60}. The syndrome of an element $w$ is its projection onto the orthogonal subspace of the code, capturing all the necessary information for decoding.

Finally, we consider the possibility of multiphoton events. While an ideal BB84 source generates single photons, experimental sources could accidentally generate more than one copy of a BB84 instance. We take into account this leakage of information for the privacy amplification of our proposed protocol. We present this as a general variable $\vartheta$ that accounts for the information leakage. 

\section{Equivocal Commitment}
\label{sec:eqcomm}

In this section, we present a compiler that yields an equivocal $\eta$-relaxed statistically binding bit commitment from a statistically binding computationally hiding bit commitment. As previously discussed in the Introduction, our compiler is directly based on \cite{BCKM21}, but we reduce the number of repetitions in order to improve efficiency at the cost of requiring relaxed binding properties. 
 The protocol is described in \Cref{alg:equivocalcomp} and we prove the relaxed binding property in \Cref{relaxedstatisticalbinding} and equivocality in \Cref{sec:equivocality-proof}.

\subsection{Relaxed statistical binding}

\label{relaxedstatisticalbinding}

Given a statistically-binding and computationally-hiding commitment, we can build a commitment named \textit{EqCommitment} that is $\eta$-relaxed statistically binding (\Cref{def:relaxbind}) and computationally hiding (\Cref{def:comphid}).
    
\begin{algorithm}[ht!]
    \caption{EqCommitment: Equivocal commitment}
    \textbf{Input of $\CO$:} bit $b \in \{0,1\}$, seeds $\boldsymbol{p}^0_0,\boldsymbol{p}^0_1,\boldsymbol{p}^1_0,\boldsymbol{p}^1_1 \in \{0,1\}^{\lambda_{PQS}}$.\\
    \textbf{Assumptions}: Bit commitment $Com_{r}$ that is statistically binding and computationally hiding and which uses $r$ as the randomness for the commitment.
     
    \begin{algorithmic}[1]
    
    \item[\underline{Commit phase:}]
    \STATE $\CO$ chooses two random bits $u^0, u^1 \in \{0,1\}$.
   
    \STATE $\CO$ and $\RE$ commit to four values, by using the input seeds in the following way:
    \begin{itemize}
        \item $c_0^0$ = $Com_{\boldsymbol{p}^{0}_0} \langle \CO(u^0), \RE \rangle$
        \item $c^0_1$ = $Com_{\boldsymbol{p}^{0}_1} \langle \CO(u^0), \RE \rangle$
        \item $c^1_0$ = $Com_{\boldsymbol{p}^{1}_0} \langle \CO(u^1), \RE \rangle$
        \item $c_1^1$ = $Com_{\boldsymbol{p}^{1}_1} \langle \CO(u^1), \RE \rangle$
    \end{itemize}

    \STATE $\RE$ sends a bit $\gamma \in \{0,1\}$ to $\CO$ 
    \STATE $\RE$ and $\CO$ decommit $u^\gamma_0 = Decom_{\boldsymbol{p}^\gamma_0} (c^{\gamma}_{0})$ 
        and $u^\gamma_1 = Decom_{\boldsymbol{p}^{\gamma}_1} (c^{\gamma}_{1})$ 
    \STATE $\RE$ aborts if $u^\gamma_0 \ne u^\gamma_1$.
    
    \STATE $\CO$ computes $e = b \oplus u^{\bar{\gamma}}$ and sends it to $\RE$.

    \item[\underline{Decommit phase:}]
    \STATE $\CO$ chooses a random bit $\delta \in \{0,1\}$ and sends both, $\delta$ and $b$ to $\RE$.
    \STATE $\CO$ and $\RE$ decommit $c^{\bar{\gamma}}_{\delta}$:
    \begin{equation*}
        u^{\Bar{\gamma}}_{\delta} = Decom_{\boldsymbol{p}_{\delta}^{\bar{\gamma}}}(c^{\bar{\gamma}}_{\delta})
    \end{equation*}
    \STATE $\RE$ aborts if $ u^{\Bar{\gamma}}_{\delta} \ne b \oplus e$.
    \end{algorithmic}    \label{alg:equivocalcomp}
    \end{algorithm}

\begin{theorem}
\label{th:fiststatbind}
If $Com$ is a statistically binding commitment scheme, then, for any function $\eta(n)$, EqCommitment is $\left(\eta,2^{-n\eta}\right)$-relaxed statistically binding.
 \end{theorem}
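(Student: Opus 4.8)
The plan is to reduce the claim to the statistical binding of the base commitment $Com$ together with a short combinatorial argument over the $n$ instances. Fix an unbounded malicious committer $\CO^*$ running $n$ sequential EqCommitment instances against the honest receiver $\RE$, who in each instance draws its challenge $\gamma_i$ uniformly and independently; altogether these instances invoke $Com$ exactly $4n$ times. Applying \Cref{def:statistican-binding} to each invocation and taking a union bound — legitimate since $n$ is polynomial and the per-invocation binding error is $\mathrm{negl}(\lambda)$ — except with probability $\mathrm{negl}(\lambda)$ over the commit phase there are fixed bits $v^{\tau}_{\delta,i}$ (for $i\in[n]$, $\tau,\delta\in\{0,1\}$) such that $\CO^*$ can make $\RE$ accept the decommitment of $c^{\tau}_{\delta}$ in instance $i$ only to the value $v^{\tau}_{\delta,i}$, any other attempt being rejected. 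I condition on this event from now on, absorbing the $\mathrm{negl}(\lambda)$ slack.

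Call pair $\tau\in\{0,1\}$ of instance $i$ \emph{consistent} if $v^{\tau}_{0,i}=v^{\tau}_{1,i}$ and \emph{inconsistent} otherwise. From the pinned values the following are immediate: if instance $i$ does not abort at the check $u^{\gamma}_0\stackrel{?}{=}u^{\gamma}_1$ then the challenged pair $\gamma_i$ must be consistent; if $\CO^*$ committed an inconsistent pair in instance $i$ then with probability at least $1/2$ over $\gamma_i$ that pair is the one challenged and $\RE$ aborts; and, given the transcript of a non-aborting instance $i$ (which fixes $e_i$), the set of bits to which $\CO^*$ can make $\RE$ accept a decommitment of instance $i$ is $\{\,v^{\bar\gamma_i}_{0,i}\oplus e_i,\ v^{\bar\gamma_i}_{1,i}\oplus e_i\,\}$, which has two elements exactly when the non-challenged pair $\bar\gamma_i$ is inconsistent. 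Putting these together: an instance is non-binding only if $\CO^*$ committed at least one inconsistent pair in it; doing so makes that instance abort with probability at least $1/2$; and on the event that the instance does not abort while $\CO^*$ committed an inconsistent pair, exactly one of its pairs is inconsistent and the instance is indeed non-binding.

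For the counting bound, let $K_i$ be the number of instances among the first $i$ in which $\CO^*$ committed at least one inconsistent pair, and let $N_i$ indicate that no abort has occurred in the first $i$ instances. By the previous paragraph, $M_i:=2^{K_i}N_i$ satisfies $\mathbb{E}[M_i\mid\mathcal F_{i-1}]\le M_{i-1}$ for the natural transcript filtration: an instance with both pairs consistent leaves $K$ fixed and never aborts; an instance with exactly one inconsistent pair either raises $K$ by $1$ or sets $N$ to $0$, each with probability $1/2$; an instance with two inconsistent pairs always sets $N$ to $0$. Hence $M$ is a supermartingale with $M_0=1$, so $\mathbb{E}[M_n]\le 1$. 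On the event $\{N_n=1\}$ the number of non-binding instances equals $K_n$, so ``the commit phase never aborts and more than $\eta n$ of the $n$ committed bits are non-binding'' is contained in $\{M_n>2^{\eta n}\}$, whose probability is at most $2^{-\eta n}$ by Markov's inequality. Therefore, except with probability $2^{-n\eta}+\mathrm{negl}(\lambda)$ over the commit phase, either the commit phase is rejected (in which case $\RE$ accepts no opening at all and binding holds vacuously), or at least $(1-\eta)n$ of the $n$ instances have both pairs consistent; for each such instance $i$, setting $b_i:=v^{\bar\gamma_i}_{0,i}\oplus e_i$, no opening strategy of $\CO^*$ can make $\RE$ accept any value other than $b_i$. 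This is exactly the $(\eta,2^{-n\eta})$-relaxed statistical binding of \Cref{def:relaxbind}, with $I$ the set of these instances and $\boldsymbol b=(b_i)_{i\in I}$.

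The delicate step is the counting bound. The number of non-binding instances taken on its own can be of order $n/2$ with constant probability (let $\CO^*$ commit exactly one inconsistent pair in every instance), so it cannot be bounded directly; the key is that equivocation power in an instance is statistically coupled with that instance surviving the consistency check, so each equivocable instance ``spends'' a factor $1/2$ of the probability of not being caught, which is what turns ``more than $\eta n$ non-binding instances'' into a $2^{-n\eta}$ event. Handling the adaptivity of $\CO^*$ across the $n$ sequential instances — its choice of how to cheat in instance $i$ may depend on $\gamma_1,\dots,\gamma_{i-1}$ and the earlier aborts — is precisely what makes the supermartingale, rather than a plain product-of-independent-events estimate, the clean way to conclude.
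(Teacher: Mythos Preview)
Your proof is correct and follows essentially the same approach as the paper: pin down the binding values $v^{\tau}_{\delta,i}$ of the underlying $Com$ instances, identify the ``bad'' runs (those with an inconsistent pair), and argue that each bad run costs a factor $1/2$ in the probability of surviving all consistency checks. The paper writes this last step as $\probP[E]\le 2^{-|B_0|}$ and then takes the contrapositive; your supermartingale $M_i=2^{K_i}N_i$ with Markov's inequality is a more careful formalization of the same inequality that properly accounts for the sequential committer's adaptivity (the paper's product $\prod_{i\in B_0}\probP[\gamma_i\ne\gamma_i^*]$ treats $B_0$ and the $\gamma_i^*$ as fixed, which is literally correct only for a non-adaptive committer, though the intended sequential conditioning does go through). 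Your explicit discussion of the abort case and of why a direct tail bound on the number of non-binding instances fails is also a useful addition.
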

\begin{proof}
Let us consider $n$ sequential repetitions of the EqCommitment protocol. Let us denote 
$c^{\gamma}_{\delta, i}$ as the commitment 
$c^{\gamma}_{\delta}$ of the $i$-th run of the protocol.  Notice that all $c^{\gamma}_{\delta, i}$ are statistically binding, so with overwhelming probability, we can define the values $u^{\gamma}_{\delta, i}$ such that the opening of $c^{\gamma}_{\delta, i}$  to a value that is different than $u^{\gamma}_{\delta, i}$ fails.

Let us define the set of $B_0 = \{i: \exists {\gamma} \quad u^{{\gamma}}_{0,i} \ne u^{{\gamma}}_{1, i}\}$. Let us also define the event $E_0$ where the check on Step 5 passes on all runs of the protocol. Notice that for each $j \not\in B_0$, the $j$-th run of the commitment is statistically binding, and the goal is now to connect $|B_0|$ and $\probP [E_0]$. 
It can be shown that:
\begin{align}\label{eq:connection-E-B}
    \probP [E_0] \leq \frac{1}{2^{|B_0|}} ,
\end{align}
which is equivalent to the fact that if $\probP[E_0] \geq  2^{-n\eta}$, then $|B_0| \leq n\eta$, which shows that EqCommitment is $(\eta,2^{-n\eta})$-relaxed statistically binding.  The proof of \Cref{eq:connection-E-B} holds since for each $i \in B_0$, there exists one $\gamma_i^*$ such that $u^{\gamma_i^*}_{0,i} \ne u^{\gamma_i^*}_{1,i}$, and in order for $E_0$ to be true, the challenged $\gamma_i$ for that run is different than $\gamma_i^*$, which happens with probability $\frac{1}{2}$. Since all challenges are picked independently, the probability is:
\[\probP [\forall i \in B_0: \gamma_i \ne \gamma^*_i] = \prod_{i \in B_0}\ \probP [\gamma_i \ne \gamma^*_i] \leq \frac{1}{2^{|B_0|}},\]
where the inequality accounts for the trivial case where $\nexists \gamma_i$ such that $ u^{\gamma_i}_{0,i} \ne u^{\gamma_i}_{1,i}$, resulting in $\probP[E_0] = 0$.
\end{proof}

\subsection{Equivocality}

\label{sec:equivocality-proof}

\begin{algorithm}[t]
    \caption{Equivocal simulator $\QQ_{Eq}$}
    \begin{algorithmic}[1]
    \STATE $\QQ_{\RE^*}$ samples a random bit $\Tilde{\gamma} \in \{0,1\}$ and stores it in register P.
     \STATE $\CO$ samples two random bits $\Tilde{u}, \Tilde{y} \in \{0,1\}$
    \STATE Commitment subprotocol for the simulated committer $\Tilde{\CO}$
    \begin{itemize}
    \item If $\Tilde{\gamma} = 0$, $\Tilde{\CO}$ commits to:
    \begin{enumerate}
        \item $\Tilde{c}_0^0$ = $Com_{\boldsymbol{p}_0^0} \langle \Tilde{\CO}(\Tilde{y}), \RE^* \rangle$
        \item $\Tilde{c}^0_1$ = $Com_{\boldsymbol{p}_1^0} \langle \Tilde{\CO}(\Tilde{y}), \RE^* \rangle$
        \item $\Tilde{c}^1_0$ = $Com_{\boldsymbol{p}_0^1} \langle \Tilde{\CO}(\Tilde{u}), \RE^* \rangle$
        \item $\Tilde{c}_1^1$ = $Com_{\boldsymbol{p}_1^1} \langle \Tilde{\CO}(1-\Tilde{u}), \RE^* \rangle$
      \end{enumerate}
      \item If $\Tilde{\gamma} = 1$, $\Tilde{\CO}$ commits to:
    \begin{enumerate}   
        \item $\Tilde{c}_0^0$ = $Com_{\boldsymbol{p}_0^0} \langle \Tilde{\CO}(\Tilde{u}), \RE^* \rangle$
        \item $\Tilde{c}^0_1$ = $Com_{\boldsymbol{p}_1^0} \langle \Tilde{\CO}(1-\Tilde{u}), \RE^* \rangle$
        \item $\Tilde{c}^1_0$ = $Com_{\boldsymbol{p}_0^1} \langle \Tilde{\CO}(\Tilde{y}), \RE^* \rangle$
        \item $\Tilde{c}_1^1$ = $Com_{\boldsymbol{p}_1^1} \langle \Tilde{\CO}(\Tilde{y}), \RE^* \rangle$
      \end{enumerate}
    \item Run commitment phase between $\CO$ and $\RE^*$.
    \item Measure $\gamma$ in the register A. If $\Tilde{\gamma}=\gamma$, the simulator was successful, output 0. Otherwise, output 1. 
    \end{itemize}

    \STATE Let $u = b \oplus e$
    \STATE $\QQ_{\RE^*}$ and $\RE^*$ decommits to:
    \begin{itemize}
        \item If $\Tilde{\gamma} = 0$, decommits to the $(u \oplus \tilde{u} + 2)^{th}$ committed bit.
        \item If $\Tilde{\gamma} = 1$, decommits to the $(u \oplus \tilde{u})^{th}$ committed bit.
    \end{itemize}

 \end{algorithmic}    \label{alg:equivocalsim}
    \end{algorithm}

Our proof of equivocality is the same as \cite{BCKM21}, and we provide it here for completeness. We apply Watrous' rewinding on the equivocal simulator in \Cref{alg:equivocalsim}  to pass the verification test while not having consistent commitments.  

\begin{theorem}
\label{th:firsteq}
If Com is a computationally hiding commitment, then EqCommitment is an equivocal bit commitment
\end{theorem}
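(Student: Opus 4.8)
The plan is to follow the \cite{BCKM21} strategy exactly, showing that the simulator $\QQ_{Eq}$ in \Cref{alg:equivocalsim} produces an output distribution statistically indistinguishable from the real committer--receiver interaction, relying on two ingredients: (i) the computational hiding of the base commitment $Com$ to argue the simulated commitments look like real ones, and (ii) Watrous' rewinding lemma (\Cref{lem:Wat}) to boost the simulator's success probability (guessing $\gamma$ correctly) from roughly $1/2$ to $1-negl(\lambda)$. First I would set up the $(n,k)$-quantum circuit $Q$ whose action is the simulator's commit phase: it prepares the dummy commitments according to a guessed $\Tilde{\gamma}$, runs the commit-phase interaction with $\RE^*$, reads off the receiver's challenge $\gamma$, and outputs the bit $b=0$ iff $\Tilde{\gamma}=\gamma$ (with the post-measurement state on the remaining registers playing the role of $\ket{\phi_0(\psi)}$).

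Next I would verify the three hypotheses of \Cref{lem:Wat}. The key claim is that the success probability $p(\psi)$ is close to $q=1/2$ for \emph{every} input state $\ket{\psi}$ (the receiver's auxiliary state/advice). This is where hiding enters: since $\gamma$ is chosen by $\RE^*$ after seeing only the four commitments $\Tilde{c}^0_0,\Tilde{c}^0_1,\Tilde{c}^1_0,\Tilde{c}^1_1$, and since in both branches $\Tilde\gamma\in\{0,1\}$ the distribution of these four commitments is computationally indistinguishable (in the honest protocol one commits to $u^0,u^0,u^1,u^1$; in the simulation one commits to either $\Tilde y,\Tilde y,\Tilde u,1-\Tilde u$ or $\Tilde u,1-\Tilde u,\Tilde y,\Tilde y$ — all four distributions are hiding-indistinguishable from four independent commitments), the receiver's probability of outputting $\gamma=\Tilde\gamma$ cannot deviate from $1/2$ by more than a negligible amount $\epsilon=\nu(\lambda)$; otherwise $\RE^*$ together with the choice of $\Tilde\gamma$ would break hiding. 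Taking $p_0$ slightly below $1/2-\epsilon$ makes conditions 1--3 hold, and the lemma yields a rewound simulator of polynomial size whose output is within $4\sqrt{\epsilon}\,\log(1/\epsilon)/(p_0(1-p_0)) = negl(\lambda)$ in trace distance of the state $\ket{\phi_0}$ conditioned on a correct guess.

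Then I would argue that, \emph{conditioned} on $\Tilde\gamma=\gamma$, the simulator's transcript is perfectly (or hiding-indistinguishably) consistent with a genuine run and the opening step succeeds: when the guess is right, the two commitments $\Tilde c^\gamma_0,\Tilde c^\gamma_1$ that the receiver asks to open in Step 4 were both made to the \emph{same} value ($\Tilde y$), so the Step-5 abort check passes just as in the honest protocol; and the remaining two commitments $\Tilde c^{\bar\gamma}_0,\Tilde c^{\bar\gamma}_1$ hold the two distinct values $\Tilde u$ and $1-\Tilde u$, so in the decommit phase the simulator can pick $\delta$ (the ``$(u\oplus\Tilde u)$-th'' choice) to open to whichever of $b\oplus e$ it needs, making the Step-10 check pass for the desired bit $b$. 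Hence $Ideal_b$ and $Real_b$ differ only by the negligible rewinding error plus a hybrid argument over the hiding of the unopened commitments, and the distinguishing advantage is $\nu(\lambda)$ as required. I would close by chaining a standard hybrid: replace real commitments by simulated ones one block at a time, each step costing $negl$ by hiding, then invoke the rewinding bound.

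The main obstacle is establishing the near-$1/2$ bound on $p(\psi)$ uniformly over all receiver states with the \emph{right} direction of the reduction — i.e. packaging ``$\RE^*$ biases $\gamma$ towards the value hidden in the commitments'' into a clean hiding adversary. Care is needed because the committed values $\Tilde u,\Tilde y$ are themselves chosen by the simulator and the receiver sees two copies of some of them; one must argue via a hybrid that replacing these by independent uniform commitments is indistinguishable, so that the receiver's view carries no information about $\Tilde\gamma$ and the guess is correct with probability negligibly close to $1/2$. Everything after that — checking the three inequalities of \Cref{lem:Wat} and the consistency of the opening — is routine.
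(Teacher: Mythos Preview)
Your proposal is correct and follows essentially the same approach as the paper: both set up the simulator of \Cref{alg:equivocalsim} as the circuit $Q$ in Watrous' rewinding lemma, use computational hiding of $Com$ to get $\abs{p(\psi)-1/2}=negl(\lambda)$, and then argue that a distinguisher between $Real_b$ and $Ideal_b$ would break hiding. If anything, your write-up is more careful than the paper's sketch---you make explicit why the guess-correct branch passes the Step~5 check and why the simulator can open to either bit in the decommit phase, and you spell out the hybrid over the unopened commitments, all of which the paper leaves implicit.
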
 
\begin{proof}
    
In order to apply the rewinding lemma, we can consider a communication scheme as follows.
\begin{itemize}
    \item \textbf{$\CO$}: the committer $\CO$ commits and communicates with the malicious receiver $\RE^*$ through the polynomial-size register M.
    \item \textbf{$\RE^*$}: the malicious receiver has three different registers: W, V and A. The first one, W, represents the quantum input. The registers V and A are the work space of $\RE^*$ and both are initialized in the zero state. While V is a polynomial-size register, A is a one qubit register. $\RE^*$ measures this second register A for sampling $\gamma \in \{0,1\}$. After the honesty check, $\RE^*$ outputs the registers (W,V,A,M).
    \item \textbf{$\QQ_{\RE^*}$}: the quantum simulator works with registers P and Z in addition to (W,V,A,M). P is a one qubit register for the guess $\tilde{\gamma}$, and Z is the auxiliary register in which Algorithm \ref{alg:equivocalsim} is implemented.
\end{itemize}

The condition to apply Watrous' lemma is that the probability for the simulator to guess the output has to be nearly independent of the probability distribution of the malicious receiver $\RE^*$'s choice. This is guaranteed by the computationally hiding property of the underlying bit commitment.  Then, $\abs{p(\psi)- 1/2} = negl(\lambda)$, for every input state $\ket{\psi}$ of $\RE^*$ in $\QQ_{Eq}$.  Due to \Cref{lem:Wat}, there is a poly-size circuit $\RE$ that outputs a state inverse-exponentially close to the final state conditioned to $\tilde{\gamma} = \gamma$ of the Algorithm \ref{alg:equivocalsim}.  

Once the simulator $\QQ_{R^*, com}$ has measured the qubit register P and obtained $\gamma$, it proceeds to do the honesty check, as described in Algorithm \ref{alg:equivocalsim} and outputs the registers (W,V,A,M) after tracing out the remaining registers.

If a quantum poly-time distinguisher $\DD^* = \{ \DD^*_\lambda, \sigma_{\lambda} \}$ could distinguish $Real_b$ and $Ideal_b$, it would mean that $\DD^*$ is able to discriminate the commitments of the simulator $\QQ_{R^*}$ and the honest committer $\CO$. This would imply that the distinguisher is able to open the committed values which contradicts the definition of computational hiding of the base commitment scheme. 

\end{proof}

\section{Equivocal and relaxed-extractable commitment}
\label{sec:EREcom}
We construct an equivocal and $\chi$-relaxed extractable commitment (\Cref{def:extract}), named \textit{ERE-Commitment}, based on an equivocal and $\eta$-relaxed statistically binding commitment. We present our protocol in \Cref{alg:EqExcomm} and then prove its equivocality and the relaxed extractability properties.

\begin{algorithm}
    \caption{ERE-Commitment: Equivocal and relaxed extractable commitment}
     \textbf{Committer} $\CO$ \textbf{Input:} A sequence of bits $(b_i)_{i \in [wk]}$. \\
    \textbf{Assumptions}: The parties use a $PRG : \{0,1\}^{\lambda_{PQS}} \rightarrow \{0,1\}^{2w \cdot \lambda_{PQS}}$, 2-universal hash functions $h_j(\cdot, \cdot): \{0,1\}^m \times \{0,1\}^n \rightarrow \{0,1\}^{\lambda_{PQS}}$, an EqCommitment that is equivocal and a statistically binding and computationally hiding commitment $Com_{r}$ which uses $r$ as the randomness of the commitment. 
     
     \label{alg:EqExcomm}
     
    \begin{algorithmic}[1]
    
     \STATE  $\CO$ samples $\boldsymbol{x} \leftarrow \{0,1\}^{4\lambda_{EX}}$ and $\boldsymbol{\theta} \leftarrow \{+,\times\}^{4\lambda_{EX}}$ and sends $\ket{\boldsymbol{x}}_{\boldsymbol{\theta}}$ to $\RE$.
    \STATE $\RE$ chooses his measurement bases $\hat{\boldsymbol{\theta}} \leftarrow \{+,\times\}^{4\lambda_{EX}}$ and measures $\hat{\boldsymbol{x}} \leftarrow \{0,1\}^{4\lambda_{EX}}$. 
\STATE  
$\RE$ sequentially commits to $\left((\hat{x}_i,\hat{\theta}_i)\right)_{i \in [4\lambda_{EX}]}$ using EqCommitment.    

\STATE $\CO$ challenges a random subset $E \subset [4 \lambda_{EX}]$, such that $\abs{E} = 2 \lambda_{EX}$.
\STATE $\CO$ and $\RE$ execute EqDecommitment sequentially for every $i \in E$. Then, $\CO$ obtains  $\left((\hat{x}_i, \hat{\theta}_i)\right)_{i \in E}$. $\CO$ aborts if the commitment fails to open.
\STATE $\CO$ checks that $x_i = \hat{x}_i$ whenever $\theta_i = \hat{\theta}_i$, up to a fraction $\alpha$ due to the experimental error. If the test does not pass, $\CO$ aborts.

\STATE $\CO$ divides $\bar{E}$ into $2k$ consecutive disjoint subsets $M_j \subseteq \Bar{E}$ of size $m$, with $m$ given by the size of the preimage of a 2-universal hash function $\{h_j(\cdot, \cdot): \{0,1\}^m \times \{0,1\}^n \rightarrow \{0,1\}^{\lambda_{PQS}}\}_{j\in [2k]}$, where $k = \lfloor\lambda_{EX}/m\rfloor$.

\STATE For every $j \in [2k]$, $\CO$ samples $\boldsymbol{r}^j \in \{0,1\}^{n}$ and compute $\boldsymbol{s}^j = h_j(\boldsymbol{r}^j, \Tilde{\mathbf{x}}^j)$ where $\Tilde{\mathbf{x}}^j = \{x_i : i \in M_j \}$.  
        
\STATE Given $PRG : \{0,1\}^{\lambda_{PQS}} \rightarrow \{0,1\}^{2w \cdot \lambda_{PQS}}$, $\CO$ generates $\boldsymbol{p}^j = PRG(\boldsymbol{s}^j) = (\boldsymbol{p}^{j}_1,...,\boldsymbol{p}^j_{2w})$ for every $j \in [2k]$, where  $\boldsymbol{p}^j_i \in \{0,1\}^{\lambda_{PQS}}$.

\STATE $\CO$ sends $(M_j)_{j \in [2k]}$, $\boldsymbol{\theta}$ and  $(\boldsymbol{r}^j)_{j \in [2k]}$  to $\RE$. $\CO$ sends the syndromes $(Synd_j)_{j \in [2k]}$ to correct a proportion $\alpha$ of errors.

\STATE  $\CO$ and $\RE$ sequentially execute $k$ sessions of $w$ parallel EqCommitment. For each $r \in [k]$:
\begin{enumerate}[label*=11.\arabic*:]
    \item $\CO$ commits in parallel to $w$ pairs of random bits with $\left(\boldsymbol{p}_{i_{q,0}}^{j^{r,0}},\boldsymbol{p}_{i_{q,1}}^{j^{r,0}}, \boldsymbol{p}_{i_{q,0}}^{j^{r,1}},\boldsymbol{p}_{i_{q,1}}^{j^{r,1}} \right)_{q \in [w]} = \left(\boldsymbol{p}_{2q-1}^{2r-1},\boldsymbol{p}_{2q}^{2r-1}, \boldsymbol{p}_{2q-1}^{2r},\boldsymbol{p}_{2q}^{2r} \right)_{q \in [w]}$ as seeds.
    \item For the honesty check of the EqCommitment, $\RE$ samples $\gamma_r \xleftarrow{\$} \{0,1\}$ and sends it to $\CO$.
    \item $\CO$ sends  $\tilde{\mathbf{x}}^{j^{r,\gamma_r}}$ and $p^{j^{r,\gamma_r}}$ to $\RE$. $\RE$ checks if,  
    \begin{itemize}
        \item $\tilde{x}_i = \hat{x}_i$ whenever $\tilde{\theta}_i = \hat{\theta}_i$, and, if there is a proportion of  $\tilde{x}_i \neq \hat{x}_i$ larger than $\alpha$, it aborts. 
        \item $PRG(h_{j^{r,\gamma_r}}(\Tilde{\mathbf{x}}^{j^{r,\gamma_r}})) = \boldsymbol{p}^{j^{r,\gamma_r}}$  and, if not, aborts.
        \item $Decom_{\boldsymbol{p}_{i_{q,0}}^{j^{r,\gamma_r}}} (c_{i_{q,0}}^{j^{r,\gamma_r}})=Decom_{\boldsymbol{p}_{i_{q,1}}^{j^{r,\gamma_r}}}(c_{i_{q,1}}^{j^{r,\gamma_r}})$ for every $q \in [w]$ and, if not, aborts.
    \end{itemize}
    \item   $\CO$ commits to each $(b_i)_{i \in [w]}$ as described in EqCommitment.
\end{enumerate}

    \STATE When necessary, $\CO$ and $\RE$ open to any of the committed bits $(b_i)_{i \in [wk]}$ by executing EqDecommitment. In order to do so, $\CO$ sends the seeds $\boldsymbol{p}_{i_{q,\delta}}^{j^{r,\bar{\gamma}}}$ to $\RE$.

    \end{algorithmic}   
\end{algorithm} 

\subsection{Equivocality}

The proof of equivocality is very similar to \Cref{th:firsteq}. Nevertheless, we also provide in \Cref{col:EqEreCom} the condition for the Naor's commitment subprotocol of the ERE-Commitment to be computationally hiding, giving in \Cref{th:secmalrec} an analytical expression that quantifies it.

\begin{theorem}
If Com is computationally hiding, then ERE-Commitment is an equivocal bit commitment.
\end{theorem}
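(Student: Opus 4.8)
The plan is to follow the template of \Cref{th:firsteq} essentially verbatim, lifting the argument from a single \textit{EqCommitment} instance to the many parallel \textit{EqCommitment} instances that appear inside \textit{ERE-Commitment}, and observing that the only new ingredient — the quantum BB84 distribution and seed-generation phase — is independent of the committed bit and so does not interfere with equivocation. First I would fix an arbitrary poly-size malicious receiver $\RE^*$ and describe the equivocator $\QQ_{\RE^*}$: it runs the honest BB84 state preparation, seed distillation, PRG expansion, and syndrome transmission exactly as an honest committer would (none of these steps depend on $b$), and then, for each of the $k\cdot w$ \textit{EqCommitment} sub-instances, it runs the equivocal simulator of \Cref{alg:equivocalsim}, guessing each challenge bit $\gamma$ and committing inconsistently on the $\overline{\gamma}$-branch so that in the opening phase it can decommit to whichever bit is required.

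The core step is the rewinding argument. As in \Cref{th:firsteq}, the probability that the simulator's guess $\tilde\gamma$ matches the receiver's challenge $\gamma$ must be negligibly close to $1/2$ \emph{independently of $\RE^*$'s state}; this is exactly the place where computational hiding of $Com$ (the Naor-type base commitment, whose randomness comes from the PRG-expanded seeds) is invoked, and where I would point to \Cref{col:EqEreCom}/\Cref{th:secmalrec} for the quantitative hiding bound that makes $\abs{p(\psi)-1/2}=\mathrm{negl}$ hold. One subtlety is that the seeds fed into the base commitments are only pseudorandom, not uniform; here I would invoke the PRG security definition to argue that replacing pseudorandom seed strings by truly uniform ones changes the distinguishing advantage by at most a negligible amount, so the hiding property (hence the near-$1/2$ balance condition of \Cref{lem:Wat}) is preserved. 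Then \Cref{lem:Wat} gives, for each sub-instance, a poly-size circuit producing a state inverse-exponentially close to the post-selected (successful-guess) state; a hybrid argument over the $k\cdot w = \mathrm{poly}(\lambda_{EX})$ sub-instances accumulates only a polynomial multiple of a negligible error, so the overall simulation remains inverse-exponentially close.

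Finally I would close the argument exactly as in \Cref{th:firsteq}: if some poly-size distinguisher $\DD^*$ told $Real_b$ from $Ideal_b$, then since the two experiments differ only in whether the \textit{EqCommitment} branches were committed honestly or inconsistently (the quantum/seed transcript being identically distributed up to PRG-indistinguishability), $\DD^*$ would yield a distinguisher against the computational hiding of the underlying base commitment — a contradiction. The main obstacle I anticipate is bookkeeping the rewinding across many parallel \textit{EqCommitment} instances while keeping the BB84 register and the receiver's side information coherent: one must be careful that the rewinding circuit for one sub-instance does not disturb the registers used by the others, which is why the sub-instances should be treated sequentially (or in a fixed nested order) inside the simulator, matching the sequential-repetitions framing already used in \Cref{th:fiststatbind}. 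Making precise that the error correction / syndrome information and the pseudorandomness of the seeds do not leak $b$ is routine but needs to be stated, and I would invoke \Cref{th:secmalrec} for the explicit hiding parameter.
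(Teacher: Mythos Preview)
Your overall approach matches the paper's, but you misread one structural point of \Cref{alg:EqExcomm} that the paper's (very short) proof explicitly singles out as \emph{the} difference from \Cref{th:firsteq}: in Step~11, each of the $k$ sequential sessions uses a \emph{single} challenge bit $\gamma_r$ that is shared across all $w$ parallel EqCommitment instances in that session. Consequently the equivocator does not guess and rewind $k\cdot w$ times; it rewinds $k$ times, and in session $r$ it guesses $\tilde\gamma_r$ once and commits inconsistently on the $\overline{\tilde\gamma_r}$-branch for all $w$ parallel instances simultaneously. The balance condition $\abs{p(\psi)-1/2}=\mathrm{negl}$ then requires a hybrid over the (polynomially many) base commitments inside that single session, which still follows from computational hiding exactly as you say. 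This is a minor fix, but treating the $w$ parallel instances as independently rewindable is not well-defined when they share $\gamma_r$, so it is worth stating correctly.

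One small reference correction: the quantitative bound showing the PRG-expanded seeds are hiding from the receiver is \Cref{th:secmalsen} (via \Cref{col:EqEreCom}), not \Cref{th:secmalrec}; the latter is the malicious-Bob bound for the outer QOT layer.
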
 \label{th:secondeq}
\begin{proof}
The main difference with respect to the proof for EqCommitment is that the equivocal simulator has to answer correctly the commitment of $w$ equivocal boxes in parallel with single challenge bit $\gamma_r$.  Given a computationally hiding bit commitment, $\abs{p(\psi)- 1/2} = negl(\lambda)$ also for this case. Hence, all the properties of the Algorithm \ref{alg:equivocalsim} and the application of Watrous' rewinding lemma are equivalent for this scheme.  Therefore, from computational hiding, $Real_b$ and $Ideal_b$ of Def. \ref{def:equivocal} are indistinguishable.

\end{proof}

\begin{corollary}
    If $Com_{\boldsymbol{p}^j_i}$ is implemented using Naor's bit commitment  with input seeds $\boldsymbol{p}^j_i \in \{0,1\}^{\lambda_{PQS}}$, where $(\boldsymbol{p}^j_1,...,\boldsymbol{p}^j_{2w}) = PRG(\boldsymbol{s}^j)$  for $\boldsymbol{s}^j \in \{0,1\}^{\lambda_{PQS}}$ , then ERE-Commitment is an equivocal bit commitment.
\label{col:EqEreCom}
\end{corollary}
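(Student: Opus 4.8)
The plan is to reduce to \Cref{th:secondeq}, which already shows that ERE-Commitment is equivocal whenever the base commitment $Com$ is computationally hiding; it therefore suffices to argue that Naor's commitment, run with the pseudorandom coins $\boldsymbol{p}^j_i$ obtained by expanding a seed $\boldsymbol{s}^j$ through the PRG, is still computationally hiding against non-uniform quantum polynomial-time receivers. Observe that there are two layers of pseudorandomness to disentangle: the PRG internal to Naor's scheme, which maps the committer's choice of coins to the commitment string, and the outer PRG that produces the whole family $(\boldsymbol{p}^j_1,\dots,\boldsymbol{p}^j_{2w})$ from $\boldsymbol{s}^j$.

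First I would introduce a hybrid execution in which, for each family index $j \in [2k]$, the tuple $(\boldsymbol{p}^j_1,\dots,\boldsymbol{p}^j_{2w})$ is replaced by a uniformly random string of the same length. Since the honest committer (and the equivocator of \Cref{alg:equivocalsim}) uses $\boldsymbol{s}^j$ only through $PRG(\boldsymbol{s}^j)$, a standard $2k$-step hybrid argument shows that the real execution and this hybrid are computationally indistinguishable by the security of the outer PRG, using also that each $\boldsymbol{s}^j$ is statistically close to uniform, being the output of the privacy-amplification step of \Cref{alg:EqExcomm}. The key point is that this substitution acts in exactly the same way on $Real_b$ and on $Ideal_b$ of \Cref{def:equivocal}, because $\QQ_{\RE^*}$ touches the seeds only inside the calls to $Com_{\boldsymbol{p}^j_i}$, so equivocality in one world transfers to the other. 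In the hybrid, $Com$ is precisely Naor's commitment instantiated with fresh independent uniform randomness on every invocation, which is computationally hiding against quantum distinguishers by the standard analysis (reducing to the internal PRG). Hence \Cref{th:secondeq} applies verbatim inside the hybrid, and chaining the two PRG-indistinguishabilities with the equivocality bound obtained there, all three negligible in $\lambda_{PQS}$, yields indistinguishability of $Real_b$ and $Ideal_b$ in the real protocol, i.e. equivocality of ERE-Commitment under Naor's instantiation.

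The one point that needs care is the order of quantifiers in \Cref{def:equivocal} together with the shape of the PRG reduction: the equivocator must be fixed before the distinguisher, and the PRG-swap must be realizable by a single polynomial-size quantum adversary that internally simulates the entire ERE-Commitment interaction, namely the poly-size malicious receiver $\RE^*$, the rewinding circuit from \Cref{lem:Wat}, and the final distinguisher, so that a non-negligible advantage would break PRG security. Checking that this simulation stays polynomial-size, in particular that the rewinding overhead $\mathcal{O}\!\left(\log(1/\epsilon)\abs{Q}/(p_0(1-p_0))\right)$ remains polynomial since $p_0$ is bounded away from $0$ and $1$, is routine but is the place where the estimate must be spelled out.
\qed
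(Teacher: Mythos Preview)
Your proposal is correct and follows essentially the same approach as the paper: reduce to \Cref{th:secondeq} by arguing that Naor's commitment remains computationally hiding when its coins are the pseudorandom $\boldsymbol{p}^j_i$, using that each $\boldsymbol{s}^j$ is statistically close to uniform via privacy amplification (the paper invokes \Cref{th:secmalsen} for this) and then PRG security to swap $PRG(\boldsymbol{s}^j)$ for true randomness. Your version is more explicit about the hybrid structure and the care needed so that the PRG reduction can internally simulate the rewinding-based equivocator, which the paper leaves implicit, but the underlying argument is the same.
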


\begin{proof}
     We have to prove that the  Naor's bit commitment scheme \cite{Naor91} is computationally hiding giving the pseudorandom seeds $\boldsymbol{p}^j_i \in \{0,1\}^{\lambda_{PQS}}$. Naor's bit commitment is an interactive bit commitment in which the receiver sends $ k \xleftarrow{\$} \{0,1\}^{3\lambda}$ to the committer, which commits to the bit $b$ by sending $(b \cdot k) \oplus W(r) $, where $r \xleftarrow{\$} \{0,1\}^{\lambda}$ and $W : \{0,1\}^{\lambda_{PQS}} \rightarrow \{0,1\}^{3\lambda}$ is a pseudo-random generator. Thus, the computational hiding property is guaranteed by the pseudorandomness property of $W$. In this case, the seed $\boldsymbol{p}^j_i$ of the Naor's base commitment is not randomly sampled by the committer, but obtained as the output of a PRG which input is $(\boldsymbol{s}^j)_{j \in [2k]}$.  $(\boldsymbol{s}^j)_{j \in [2k]}$ are generated by hashing the strings $(\tilde{\boldsymbol{x}}^j)_{ j \in [2k]}$. Therefore, computational hiding would be fulfilled if the strings $(\boldsymbol{s}^j)_{j \in [2k]}$ are random strings. By applying the leftover hash lemma (\Cref{lem:hash}), the obtained seeds are arbitrarily close to a random string, as proven in  \Cref{th:secmalsen} of  \Cref{appendix}.    
\end{proof}

\subsection{Relaxed extractability}

Before proving relaxed-extractability, we first show that our commitment scheme is relaxed-binding.

\begin{theorem}
 If $Com$ is a statistically binding commitment scheme, then  for any function $\eta(k)$ ERE-Commitment is $\eta$-relaxed statistically binding.
 \label{th:secondstatbind}
 \end{theorem}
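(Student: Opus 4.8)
The plan is to reduce the relaxed statistical binding of \textit{ERE-Commitment} to that of \textit{EqCommitment}, which was already established in \Cref{th:fiststatbind}. The key observation is that, in \textit{ERE-Commitment}, the committer's commitments to the pairs $\left((\hat{x}_i^{OT},\hat{\theta}_i^{OT})\right)_{i \in [2\lambda_{OT}]}$ are produced by invoking \textit{EqCommitment} many times in parallel (grouped into the $k$ pairs of seed-families, with $w$ instances per pair). Each of these inner \textit{EqCommitment} calls is, by \Cref{th:fiststatbind}, $(\eta, 2^{-n\eta})$-relaxed statistically binding for whatever internal repetition count $n$ it uses; in our construction each \textit{EqCommitment} is run only once ($n=1$), so individually it is not binding, but the collection of all of them over all instances behaves exactly like $m$ sequential runs of \textit{EqCommitment} with $m$ equal to the total number of committed bits.

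First I would fix a malicious committer $\CO^*$ and look at the transcript it produces. The seeds $\boldsymbol{p}_i^j$ fed into the base commitments $Com_{\boldsymbol{p}_i^j}$ are, by the honesty check in Step 11 of \Cref{alg:EqExcomm} (Alice verifying that $x_i^{EX}=\hat{x}_i^{EX}$ on matching bases and that the $\boldsymbol{p}^j$ were honestly generated from $\mathbf{x}^{EX}$), guaranteed with probability $1-negl(k)$ to be the genuine PRG outputs — but, crucially, relaxed binding of the base statistically binding $Com$ does not actually depend on the randomness $r$ being uniform; statistical binding holds for every fixed $r$. So I would note that each base commitment $Com_{\boldsymbol{p}_i^j}$ remains statistically binding regardless of how the seeds were generated. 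Then I would define, exactly as in the proof of \Cref{th:fiststatbind}, the "bad set" $B_0$ of \textit{EqCommitment} instances (across all $k$ pairs and all $w$ slots) for which the two relevant base commitments $c_0^{\gamma},c_1^{\gamma}$ commit to different values for some $\gamma$, and let $E$ be the event that all of the receiver's $\gamma$-checks (Step 5 of each \textit{EqCommitment}) pass.

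The core combinatorial step is then identical to \Cref{eq:connection-E-B}: because the challenges $\gamma$ for the distinct \textit{EqCommitment} instances are sampled independently by the receiver, $\probP[E] \leq 2^{-|B_0|}$, so conditioned on not aborting, $|B_0| \leq k\eta$ except with probability $2^{-k\eta}$. Every instance outside $B_0$ has a uniquely determined decommitment value (the base commitments being statistically binding), hence the committed pair at that position is fixed; the bits at positions inside $B_0$ are the only non-binding ones, and there are at most a $\eta$-fraction of them. This gives $(\eta, 2^{-k\eta})$-relaxed statistical binding for the $\left((\hat{x}_i^{OT},\hat{\theta}_i^{OT})\right)$ commitments, which is the claim (taking the relevant security parameter to be $k$).

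The main obstacle I anticipate is bookkeeping rather than conceptual: one must carefully argue that the parallel/grouped structure of the seeds in \textit{ERE-Commitment} — where a single challenge bit $\gamma_r$ governs all $w$ instances in the $r$-th pair — does not weaken the binding bound. Here I would point out that a shared challenge only makes the adversary's task harder or equal: if $\gamma_r$ is reused across the $w$ slots of pair $r$, then for $E$ to hold the adversary needs $\gamma_r \neq \gamma^*_{r,q}$ simultaneously for every bad slot $q$ in that pair, and since across the $k$ pairs the $\gamma_r$ are independent, one still gets a product bound; in the worst case where all bad slots sit in distinct pairs the bound is exactly $2^{-|B_0|}$, and otherwise it is only smaller. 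A secondary point to handle cleanly is that the honesty check on the $\boldsymbol{p}^j$ is used only to ensure the \emph{hiding}/extractability side (so that the simulator can reconstruct the seeds); for binding we do not need it, which should be stated explicitly so the reader does not expect the seed-verification to play a role here.
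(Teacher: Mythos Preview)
Your overall reduction to \Cref{th:fiststatbind} is exactly the paper's approach, and your observation that statistical binding of $Com_r$ holds regardless of how the seed $r$ was generated is correct and worth stating. However, there is a genuine combinatorial error in your handling of the shared challenges.

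You define $B_0$ at the \emph{slot} level (over all $wk$ EqCommitment instances) and assert $\probP[E]\le 2^{-|B_0|}$, arguing that when bad slots cluster in the same pair the probability ``is only smaller''. This is false. Consider an adversary who makes all $w$ slots of a single pair $r$ bad, each with the \emph{same} $\gamma^*$. Then $|B_0|=w$, but the event $E$ only requires $\gamma_r\neq\gamma^*$, which happens with probability $1/2$, not $2^{-w}$. More generally, the product bound you get over independent challenges is $\probP[E]\le 2^{-|B_1|}$ where $B_1=\{r\in[k]:\text{pair }r\text{ contains at least one bad slot}\}$, and $|B_1|$ can be much smaller than $|B_0|$.

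The fix is precisely what the paper's one-sentence proof says: the sampled population is $k$, not $wk$, because all $w$ instances in a pair share the single challenge $\gamma_r$. So one should count bad \emph{pairs}, obtaining $\probP[E]\le 2^{-|B_1|}$; conditioned on $E$, $|B_1|\le\eta k$ except with probability $2^{-\eta k}$. Since every non-binding slot lives in a bad pair and each pair has $w$ slots, the number of non-binding slots is at most $|B_1|\cdot w\le \eta k w=\eta\cdot(wk)$, i.e.\ an $\eta$-fraction of the $m=wk$ committed bits. Your conclusion survives, but only after replacing the slot-level count by the pair-level one.
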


\begin{proof}
The proof is similar to the one of  \Cref{th:fiststatbind}. In ERE-Commitment, the seeds of the base commitments $Com_r$ are used as described in Algorithm \ref{alg:EqExcomm}. Therefore, when the honesty check is done, the sampled population is not the number of sequential repetitions of the EqCommitment $wk$ but the number of different families $k$ since the EqCommitment instances which make use of seeds derived from the same family $\boldsymbol{p}^j$ are opened together. 

 \end{proof}

To prove extractability, we will use the equivocality properties of EqCommitment.  To prove that the simulator $\QQ_{\CO^*}(\rho)$ defined in \Cref{alg:extractor} that represents the $Ideal$ interaction from  \Cref{def:extract} is indistinguishable from the $Real$ one, we use the following sequence of hybrid arguments:
\begin{itemize}
    \item \textbf{Hybrid 0}. Corresponds to the real protocol described in Algorithm \ref{alg:EqExcomm}.
    
    \item \textbf{Hybrid 1}. Similar to $\text{Hyb}_0$ except in Step 3, in which we execute the equivocal simulator $\QQ_{\RE^*,com}$ for $i \in  [4\lambda_{EX}]$, and in Step 5, to open the challenged commitments $i \in E$, we execute $\QQ_{\RE^*,dec}$.
    
    \item \textbf{Hybrid 2}. Similar to  $\text{Hyb}_1$, except that  $\ket{\boldsymbol{x}}_{\boldsymbol{\theta}}$ is not measured during Step 2 of \ref{alg:EqExcomm}, but rather during Step 3 of Algorithm \ref{alg:extractor}, i.e., only the positions $i \in E$ are measured and only after obtaining the sequence of indices.

    \item \textbf{Hybrid 3}. Identical to $\text{Hyb}_2$ but the simulator performs the error correction as described in Step 7 of \ref{alg:extractor} and then, the honesty check as described in Step 8 of Algorithm \ref{alg:extractor}.

    \item \textbf{Hybrid 4}. Identical to $\text{Hyb}_3$ but the simulator extracts the non-challenged commitments as described in Step 9 of \ref{alg:extractor}. 
    
    \item \textbf{Hybrid 5}. The $Ideal$ distribution, i.e., the extractable simulator described in Algorithm \ref{alg:extractor}. The difference with respect to $\text{Hyb}_4$ is that the simulator opens the non-extracted commitments to the value opened by the malicious committer, as described on Step 10. 
\end{itemize}

\begin{lemma} There exists a negligible function $\nu(\lambda)$, such that:
\begin{equation*}
    \abs{\probP[\DD^*(\sigma,\text{Hyb}_1) = 1] - \probP[\DD^*(\sigma,\text{Hyb}_0) = 1]} = \nu(\lambda).
\end{equation*}
\label{cl:lemma1}
\end{lemma}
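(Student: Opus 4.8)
The plan is to reduce the statement of \Cref{cl:lemma1} to the equivocality of EqCommitment (\Cref{th:firsteq}) by a standard hybrid argument over the inner commitments that Bob produces in Step~3 of \Cref{alg:EqExcomm}. The only difference between $\text{Hyb}_0$ and $\text{Hyb}_1$ is confined to those commitments and to their openings in Step~5; Alice's seed derivation in Steps~7--10 and the Step-11 EqCommitments are left untouched, as is the Step-2 measurement (that is moved only in $\text{Hyb}_2$). Recall also that in the extractability game it is Alice who may be the malicious committer $\CO^*$, and $\CO^*$ plays the \emph{receiver} of each of Bob's inner EqCommitments; thus swapping honest Bob's EqCommitment for the equivocator $\QQ_{\RE^*}$ is precisely the operation licensed by equivocality against a malicious receiver.

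First I would define intermediate hybrids $\text{Hyb}_0 = G^{(0)}, G^{(1)}, \dots, G^{(N)} = \text{Hyb}_1$, where $N = O(\lambda_{EX})$ is the number of EqCommitment invocations used to commit to $\left((\hat{x}_i,\hat{\theta}_i)\right)_{i \in [4\lambda_{EX}]}$ (a constant number per index), and $G^{(\ell)}$ runs the first $\ell$ invocations through $\QQ_{\RE^*,com}$ --- together with $\QQ_{\RE^*,dec}$ in Step~5 whenever the associated index lies in the challenged set $E$ --- and the remaining $N - \ell$ honestly. Since the BB84 register is still measured in Step~2 throughout these hybrids, any index $i \in E$ that is opened in Step~5 is opened to the genuinely measured pair $(\hat{x}_i,\hat{\theta}_i)$: the equivocator only \emph{reproduces} the honest opened value, it does not change it.

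Next I would show that consecutive hybrids are negligibly close by reducing to \Cref{def:equivocal}. From a poly-size distinguisher for $G^{(\ell-1)}$ versus $G^{(\ell)}$ one constructs a poly-size malicious receiver $\RE^*$ for EqCommitment --- bundling $\CO^*$'s strategy, the first $\ell - 1$ already-equivocated invocations, and the receiver side of the $\ell$-th invocation --- and a poly-size distinguisher for the equivocality game --- bundling the remaining $N - \ell$ honest invocations, the rest of \Cref{alg:EqExcomm}, and $\DD^*$ (note that $\CO^*$'s final state, which is what $\DD^*$ inspects, is part of $\rho^{\RE^*}_{final}$). If the $\ell$-th invocation is ever opened ($i \in E$) we invoke equivocality with committed bit $b$ equal to the relevant bit of $(\hat{x}_i,\hat{\theta}_i)$, using that the negligible bound in \Cref{def:equivocal} is uniform over $b$ and hence also holds for its (protocol-induced) distribution; if it is never opened, only the commit-phase marginals of $Real_b$ and $Ideal_b$ are needed, and the $Ideal$ marginal does not depend on $b$ at all. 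In either case $Real_b$ reproduces $G^{(\ell-1)}$ and $Ideal_b$ reproduces $G^{(\ell)}$ on this invocation, so the gap is a negligible $\nu(\lambda)$; summing over the $N = O(\lambda_{EX})$ steps via the triangle inequality gives $\abs{\probP[\DD^*(\sigma,\text{Hyb}_1) = 1] - \probP[\DD^*(\sigma,\text{Hyb}_0) = 1]} \le O(\lambda_{EX}) \cdot \nu(\lambda)$, which is negligible.

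The step I expect to be most delicate is the sequential composition of the equivocators. One must verify that nesting $O(\lambda_{EX})$ copies of $\QQ_{\RE^*}$ --- each of which internally runs Watrous rewinding (\Cref{lem:Wat}) with guessing probability $\approx 1/2$, hence with only polynomial overhead --- still yields a polynomial-time $Ideal$ execution; that rewinding the $\ell$-th $\gamma$-challenge may treat the surrounding executions (both the already-simulated earlier ones and the honest later ones) as a non-rewound part of the receiver's internal state; and that the decommitment performed in Step~5 is correctly bookkept within the \emph{single} EqCommitment equivocality instance to which that index belongs, rather than straddling two instances. Once these are checked, the argument goes through exactly as in the proof of \Cref{th:firsteq}.
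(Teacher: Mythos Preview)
Your proposal is correct and follows essentially the same approach as the paper: a hybrid argument over the $O(\lambda_{EX})$ inner EqCommitment sessions, reducing any non-negligible gap between consecutive sub-hybrids to a violation of equivocality (\Cref{def:equivocal}). Your write-up is considerably more explicit than the paper's sketch---in particular you spell out the reduction, the role of $\CO^*$ as the EqCommitment receiver, the fact that the measurement in Step~2 is untouched so the equivocator opens to the genuinely measured values, and the sequential-composition caveats around Watrous rewinding---but none of this departs from the paper's line of argument.
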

\begin{proof}
    Suppose that there exists a non-negligible function for which $\DD^*$ can distinguish between $\text{Hyb}_0$ and $\text{Hyb}_1$. This is equivalent to:

\begin{equation*}
    \abs{\probP[\DD^*(\sigma,\text{Hyb}_{0, i-1}) = 1] - \probP[\DD^*(\sigma,\text{Hyb}_{0,i}) = 1]} \geq \frac{1}{poly(\lambda) 4\lambda },
\end{equation*}
 where $\text{Hyb}_{0, j}$ is the sub-hybrid that is identical to $\text{Hyb}_{0}$ until the committed bit $j \in [4 \lambda_{EX}]$ and then identical to $\text{Hyb}_{1}$. Therefore, $\DD^*$ would be able to distinguish between the outputs $\rho^{CO}_{final}$ of an honest and an equivocated commitment, contradicting \Cref{def:equivocal}. 

\end{proof}

\begin{lemma} 
 $   
    \abs{\probP[\DD^*(\sigma,\text{Hyb}_2) = 1] \equiv \probP[\DD^*(\sigma,\text{Hyb}_1) = 1]}$.
\label{cl:eq2}
\end{lemma}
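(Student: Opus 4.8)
The plan is to show that $\text{Hyb}_1$ and $\text{Hyb}_2$ induce \emph{exactly} the same distribution on $\DD^*$'s view, so that the claimed equality holds with zero slack. The sole difference between the two experiments is the time at which the BB84 register $\ket{\boldsymbol{x}}_{\boldsymbol{\theta}}$ received from the (possibly malicious) committer $\CO^*$ in Step 1 of \Cref{alg:EqExcomm} is measured: in $\text{Hyb}_1$ every position is measured in a freshly sampled basis $\hat\theta_i$ already at Step 2, whereas in $\text{Hyb}_2$ the positions $i\in E$ are measured only at Step 3 of \Cref{alg:extractor}, i.e.\ after the challenge set $E$ has been announced at Step 4. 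I would therefore invoke the deferred--measurement principle: a projective measurement of a register that is left untouched in the meantime, and whose outcome is not consulted until a later point of the computation, may be postponed to that later point without altering the joint distribution of all classical transcripts and quantum states.

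The substance of the proof is to verify that the hypotheses of that principle actually hold in $\text{Hyb}_1$, and this is exactly what the replacement introduced in $\text{Hyb}_1$ provides. After that replacement, Step 3 runs the equivocal commit-phase simulator $\QQ_{\RE^*,com}$, whose Ideal procedure in \Cref{def:equivocal} produces a dummy commitment that does not take the committed values as input; hence the Step-3 transcript is statistically independent of $(\hat x_i,\hat\theta_i)$, and these outcomes first re-enter the experiment only at Step 5, where $\QQ_{\RE^*,dec}\big(\cdot,(\hat x_i,\hat\theta_i)_{i\in E}\big)$ is invoked to equivocate the openings of the commitments challenged by $E$. In particular the qubits at positions $i\in E$ are idle between Step 2 and Step 5, the messages $\CO^*$ receives up to Step 4 depend only on the dummy-commitment transcript, and therefore $E$ is independent of the (not-yet-produced) outcomes. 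Consequently the measurement bases for $i\in E$ can be sampled and applied at Step 3 of \Cref{alg:extractor}, between the announcement of $E$ and the equivocated openings, with the same effect as at Step 2. The identical bookkeeping applies to the positions $i\notin E$, whose outcomes are first read only in the honesty check of Step 11.3, so that postponing their measurements is likewise outcome-preserving.

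Concretely I would formalise this as a short chain of sub-hybrids, each one deferring the measurement of a single position past one protocol step at which the corresponding register is idle and its outcome unread, and observe that consecutive sub-hybrids yield identical density operators on $\DD^*$'s input; composing them gives $\probP[\DD^*(\sigma,\text{Hyb}_2)=1]=\probP[\DD^*(\sigma,\text{Hyb}_1)=1]$. The only delicate point --- and the main, though mild, obstacle --- is to make fully explicit that no intermediate operation of $\text{Hyb}_1$ touches or reads the $E$-registers before Step 5: that the dummy commitments of Step 3 are not opened inside Step 3, that the $E$-indexed ones are opened (to the measured values) only at Step 5, and that every classical message delivered to $\CO^*$ through Step 4 is a function of the dummy-commitment transcript alone. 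Once this is spelled out the deferred-measurement principle yields an exact equality, not merely a negligible bound, which is what the $\equiv$ in the statement asserts.
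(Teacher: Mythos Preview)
Your proposal is correct and takes the same approach as the paper: both rest on the deferred-measurement principle, noting that since the equivocal simulator's dummy commitments in $\text{Hyb}_1$ do not read $(\hat x_i,\hat\theta_i)$, the measurements can be postponed until after $E$ is revealed without changing the joint distribution. The paper's own proof is a one-liner to this effect (it simply observes that the $E$-positions are measured in randomly sampled bases in both experiments); your version spells out the preconditions---in particular that no intermediate step touches or reads the relevant registers---more carefully, but the underlying argument is identical.
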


\begin{proof}
    
    Since the states corresponding to the subset $E \subset [4\lambda_{EX}]$ are also measured with  randomly sampled bases $(\hat{\theta}_i)_{i \in E}$, the two experiments have exactly the same output distribution. 
\end{proof}

\begin{lemma} There exists a negligible function $\nu(\lambda)$, such that,
    
\begin{equation*}
    \abs{\probP[\DD^*(\sigma,\text{Hyb}_3) = 1] - \probP[\DD^*(\sigma,\text{Hyb}_2) = 1]} = \nu(\lambda).
\end{equation*}

\end{lemma}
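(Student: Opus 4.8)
The plan is to compare $\text{Hyb}_3$ and $\text{Hyb}_2$, which differ only in \emph{when and how} the consistency check between Alice's encoded states and Bob's committed/measured values is performed: in $\text{Hyb}_2$ the check is Step~11 of \Cref{alg:EqExcomm} (Alice holds $\boldsymbol{x},\boldsymbol{\theta}$ and checks $x_i=\hat x_i$ on matching-basis positions among the challenged set $E$, up to error fraction $\alpha$), whereas in $\text{Hyb}_3$ the simulator performs the analogous check at Step~8 of \Cref{alg:extractor} on the values it obtained by measuring the positions $i\in E$ in random bases and equivocating the opening. So the task is to argue these two checks accept with statistically indistinguishable probability (and otherwise leave everything untouched).

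First I would observe that in $\text{Hyb}_2$ the simulator already (i) equivocates all $4\lambda_{EX}$ EqCommitments, (ii) delays measurement of $\ket{\boldsymbol{x}}_{\boldsymbol\theta}$ until after receiving the challenge $E$, and (iii) measures only the positions $i\in E$, in \emph{random} bases $\hat\theta_i$, then opens those commitments to the measured outcomes via $\QQ_{\RE^*,dec}$. The key point is that on positions where Alice's (honest-sender) basis matches the random measurement basis, $\theta_i=\hat\theta_i$, the measured bit equals $x_i$ exactly (in the noiseless analysis) or up to the same error statistics governed by $\alpha$; on positions where bases differ, those positions are simply not tested. Hence the event ``check at Step~8 passes'' is, by construction, the same event as ``check at Step~11 would pass'' — the simulator is checking exactly the data Alice would have checked, just computed on the simulator's side. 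The only subtlety is that in $\text{Hyb}_2$ the measurement and equivocated opening produce a joint state on $(W,V,A,M)$ that must be shown to match (up to negligible trace distance) the state in $\text{Hyb}_3$ where the check is relocated; since relocating a deterministic predicate evaluation on already-available classical data does not alter any quantum register, I expect this to be an equality of distributions, and the only $\nu(\lambda)$ slack comes from the rewinding error already incurred in moving to $\text{Hyb}_1$ (and propagated here).

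Concretely, the steps in order: \textbf{(1)} spell out that after the moves of Hyb$_1$--Hyb$_2$, the registers output to the distinguisher before the relocated check are identical in the two hybrids; \textbf{(2)} argue the predicate evaluated at Step~11 of \Cref{alg:EqExcomm} and at Step~8 of \Cref{alg:extractor} are the same Boolean function of the same classical strings $(\boldsymbol x,\boldsymbol\theta,(\hat x_i,\hat\theta_i)_{i\in E})$ — namely ``$x_i=\hat x_i$ for all $i\in E$ with $\theta_i=\hat\theta_i$, up to fraction $\alpha$'' — so the abort event has identical probability; \textbf{(3)} conclude that conditioned on not aborting, the post-check states coincide, and conditioned on aborting, both output $\perp$; \textbf{(4)} absorb into $\nu(\lambda)$ the inverse-exponential perturbation from Watrous' rewinding (\Cref{lem:Wat}) already present from the transition to $\text{Hyb}_1$, since strictly speaking the equivocated opening in $\text{Hyb}_2$ is only inverse-exponentially close to a genuine opening.

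The main obstacle I anticipate is \textbf{not} a probabilistic argument but a careful bookkeeping one: making precise that ``measuring $i\in E$ in random bases and opening the equivocal commitments to those outcomes'' reproduces the same check-acceptance statistics as an honest receiver who measured everything up front — in particular handling the experimental-error fraction $\alpha$ consistently on both sides, and confirming that the simulator's random choice of $\hat\theta_i$ on the challenged positions induces the same matching-basis subset distribution as in the real protocol. Once that correspondence is nailed down, the indistinguishability is immediate, with the negligible term inherited entirely from the rewinding lemma. I would therefore keep the proof short: cite \Cref{lem:Wat} for the $\nu(\lambda)$ slack, and otherwise assert equality of the two abort events and of the resulting states.
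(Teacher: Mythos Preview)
Your high-level strategy---show that the abort predicate and the data feeding it are identical in the two hybrids, so the abort event and post-check states coincide---is exactly the paper's one-line argument (``the honesty check is performed in the same way in both hybrids, taking into account the error correction'').

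However, you have misidentified \emph{which} check is at stake. Step~11 of \Cref{alg:EqExcomm} is not the committer's test on the challenged set $E$ (that is Steps~5--6). Step~11 is the \emph{receiver}'s honesty check on the committer's seeds: for each $r\in[k]$ the receiver samples $\gamma_r$, the (possibly malicious) committer reveals $\tilde{\mathbf{x}}^{j^{r,\gamma_r}}$ and $\boldsymbol{p}^{j^{r,\gamma_r}}$, and the receiver verifies that $\tilde{x}_i=\tilde{\hat{x}}_i$ on matching-basis positions \emph{in $\bar E$} (using its syndrome-corrected measurements), that $\boldsymbol{p}^{j^{r,\gamma_r}}=PRG(h_{j}(\tilde{\mathbf{x}}^{j^{r,\gamma_r}}))$, and that the paired decommitments agree. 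Step~8 of \Cref{alg:extractor} is literally defined as ``execute the honesty check as described in Step~11,'' and it is reached \emph{after} Step~7, where the simulator has measured the $\bar E$ positions in the correct bases $\theta_i$ and applied error correction. So the data entering the predicate is $(\tilde{\mathbf{x}}^{j},\boldsymbol{p}^{j},\tilde{\hat{\mathbf{x}}}^{j})$ indexed by $\bar E$, not the tuple $(\boldsymbol{x},\boldsymbol{\theta},(\hat x_i,\hat\theta_i)_{i\in E})$ you wrote down, and the roles are reversed relative to your description (the simulator plays $\RE$, not $\CO$).

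Once you correct this, your steps (1)--(3) go through unchanged and amount to the paper's proof. Your step~(4) is not needed here: the rewinding slack was already absorbed in $\text{Hyb}_0\to\text{Hyb}_1$, and the present transition is governed solely by the error-correction consistency the paper invokes.
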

\begin{proof}

A quantum poly-time distinguisher $\DD^*$ would be able to discriminate between both distributions if $Hyb_2$ and $Hyb_3$ had different probabilities for output $\perp $ during the honesty check. If $\delta > \alpha$, both Hybrids abort. Otherwise, as discussed in \Cref{sec:errorcorr}, an $(\mathcal{M}, 4\lambda_{EX}, 4\alpha \lambda_{EX})$-error correction code can unambiguously correct up to a proportion of $\alpha$ errors. Therefore, for any $\delta \leq \alpha$, there is at most one codeword inside the radius of the error correction code. Therefore, both distributions are indistinguishable.

\end{proof}

\begin{algorithm}
    \caption{Extractable simulator}
    \begin{algorithmic}[1]
    
     \STATE  $\CO^*(\rho)$ outputs $\mathbf{\Psi}.$

    \STATE For $i \in [4\lambda_{EX}]$, execute sequentially the quantum equivocal simulator $\QQ_{\RE^*}$ to generate a tuple of  $((c^{\bar{\gamma}}_{\delta})^{i}_{\delta =0,1})_{i \in [4\lambda_{EX}]}$ dummy commitments. After each commitment session, $(\rho^{\Ss}_i,\rho^y_{com,i}) \mapsfrom \QQ_{\RE^*,com} (\rho)$, where $\rho^{\Ss}_i$ is stored by $\QQ_{\CO^*} (\rho)$, and $\boldsymbol{\rho}^y_{com, i}$ is used by the malicious committer $\CO^*$ as an input in the next session.
    \STATE Once $\CO^*$ has announced the challenged subset $E \subset [4\lambda_{EX}]$, $\QQ_{\CO^*} (\rho)$ samples $\hat{\boldsymbol{\theta}} \leftarrow \{+,\times\}^{2\lambda_{EX}}$. $\QQ_{\CO^*}$ measures the $i^{th}$ qubit of $\mathbf{\Psi}$, $\Psi_i$, for every $i \in E$ in the previous sampled bases, obtaining $(\hat{x}_i)_{i \in E}$.
    \STATE  By applying the equivocal simulator $\QQ_{Eq}$, $\QQ_{\CO^*}$ opens the challenged bits $z_i \in E$ to the correct values $(\hat{x}_i, \hat{\theta}_i)_{i \in E}$. To do so, it executes  for each $i \in E$:  $\rho^x_{final}$ $\mapsfrom $ $\langle \QQ_{\RE^*,dec} (\hat{x}_i, \hat{\theta}_i,z_i,\rho^x_{com}),\CO_{dec}^*(\rho^y_{com}) \rangle$, where $\rho^x_{com}$ is the current state of $\CO^*$, updated after each session.
    \STATE If $\CO^*$ aborts at any point, $\QQ_{\CO^*} (\rho)$ outputs $\perp$, otherwise continues.

     \STATE $\CO^*$ and $\RE$ discard the tested positions and both reorder their respective tuples $((x_i, \theta_i))_{i \in \Bar{E}}$ and $((\hat{x}_i, \hat{\theta}_i))_{i \in \Bar{E}}$. $\CO^*$ sends $\boldsymbol{\theta}$ , computes the $(\boldsymbol{p}^j)_{j \in [2k]}$ as described in Algorithm \ref{alg:EqExcomm} and sends $(\boldsymbol{r}^j)_{j \in [2k]}$.

     \STATE $\QQ_{\CO^*} (\rho)$ measures each  qubit $\mathbf{\psi}_i \in \bar{E}$ in the bases $\theta_i$ to obtain $\hat{x}_i = x_i$.  $\QQ_{\CO^*} (\rho)$ performs the error correction on $\hat{\boldsymbol{x}}$ by using the syndrome $Synd$. Then, $\QQ_{\CO^*} (\rho)$ computes the seeds $\left(\left(\hat{\boldsymbol{p}}_{i_{q,0}}^{j^{r,0}},\hat{\boldsymbol{p}}_{i_{q,1}}^{j^{r,0}}, \hat{\boldsymbol{p}}_{i_{q,0}}^{j^{r,1}},\hat{\boldsymbol{p}}_{i_{q,1}}^{j^{r,1}} \right)_{q \in [w]} \right)_{r \in [k]} $.

     \STATE $\QQ_{\CO^*} (\rho)$ executes the honesty check as described in Step 11 of \Cref{alg:EqExcomm}. 
     
   \STATE $\QQ_{\CO^*} (\rho)$ extracts the non challenged commitments:
   \begin{enumerate}[label*=9.\arabic*:]
   \item $\QQ_{\CO^*} (\rho)$ checks if $Decom_{\hat{\boldsymbol{p}}_{i_{q,0}}^{j^{r,\gamma}}} (c_{i_{q,0}}^{j^{r,\gamma}})=Decom_{\hat{\boldsymbol{p}}_{i_{q,1}}^{j^{r,\gamma}}} (c_{i_{q,1}}^{j^{r,\gamma}})$ where $\hat{\boldsymbol{p}}_{i_{q,\delta}}^{j^{r,\gamma}} $ are the ones obtained in Step 7.
   \item For the positions where this is the case, $b_{r,q}^*= Decom_{\hat{\boldsymbol{p}}_{i_{q,0}}^{j^{r,\gamma}}} (c_{i_{q,0}}^{j^{r,\gamma}} \oplus e_{r,q})$, otherwise $b^*_{r,q} = \mathbbold{\qm}$. 
   \item If the proportion of $b^*_i = \mathbbold{\qm}$ is larger than $\chi = \eta +\zeta $, with $\eta = \zeta = \log^2(k)$, $\QQ_{\CO^*} (\rho)$  outputs aborts, where $i = (r-1)*w+q$ are the reordered indexes. 
   \item Otherwise, $\QQ_{\CO^*} (\rho)$ outputs $(\boldsymbol{\rho}^{\CO}_{com},\boldsymbol{\rho}^{\RE}_{com}, \boldsymbol{b}^*)$, where $\boldsymbol{b}^* \in \{0,1\}^{wk}$  is the string of bits in which $\boldsymbol{b}_i^*= b_i^*$ for the positions in which the output was not $\mathbbold{\qm}$ and an equivocated dummy commitment for the remaining positions by executing the quantum equivocal simulator $\QQ_{Eq}$. $\boldsymbol{\rho}^{\CO}_{com}$ is the resulting state of $\CO^*$ and $\boldsymbol{\rho}^{\RE}_{com} = ( \boldsymbol{\theta}, \hat{\boldsymbol{\theta}}, \hat{\boldsymbol{x}})$.
\end{enumerate}
\STATE $\QQ_{\CO^*} (\rho)$ opens the equivocated commitments corresponding to $\mathbbold{\qm}$ to the value opened by $\CO^*$ in these positions by applying the equivocal simulator $\QQ_{Eq}$.
    
  \end{algorithmic}   \label{alg:extractor}
    \end{algorithm} 

\begin{lemma} There exists a negligible function $\nu(\lambda)$ such that:
    
\begin{equation*}
    \abs{\probP[\DD^*(\sigma,\text{Hyb}_4) = 1] - \probP[\DD^*(\sigma,\text{Hyb}_3) = 1]} = \nu(\lambda). 
\end{equation*}

\end{lemma}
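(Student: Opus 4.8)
The plan is to argue that Hyb$_4$ differs from Hyb$_3$ \emph{only} through the abort that Step 9.3 of \Cref{alg:extractor} can trigger, and then to bound the probability of that abort. First I would note that Steps 9.1, 9.2 and 9.4 are local, classical post‑processing performed by $\QQ_{\CO^*}(\rho)$ on data it already holds after Step 7: the commitment transcript $(c^{\,\cdot}_{\,\cdot})$ and the bits $e_{r,q}$ exchanged with $\CO^*$, together with the seed blocks $\hat{\boldsymbol{p}}^{j}$ it reconstructed from its own measurement outcomes. Evaluating $Decom_{\hat{\boldsymbol{p}}}(c)$ is, for Naor's commitment, just applying the expanding generator $W$ to $\hat{\boldsymbol{p}}$ and comparing with $c$; it sends no message to $\CO^*$ and touches none of $\CO^*$'s registers, so it leaves the joint state $(\boldsymbol{\rho}^{\CO}_{com},\boldsymbol{\rho}^{\RE}_{com})$ and the view of $\DD^*$ unchanged. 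Hence Hyb$_3$ and Hyb$_4$ are identically distributed conditioned on $\QQ_{\CO^*}(\rho)$ not aborting in Step 9.3, and
\[
\abs{\probP[\DD^*(\sigma,\text{Hyb}_4)=1]-\probP[\DD^*(\sigma,\text{Hyb}_3)=1]}\;\le\;\probP\!\left[\abs{\{(r,q):b^*_{r,q}=\mathbbold{\qm}\}}>\chi\, wk\right].
\]

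To bound the right‑hand side, call a family $j\in[2k]$ \emph{good} if the seed block $\hat{\boldsymbol{p}}^{j}$ recomputed by the simulator equals the randomness $\CO^*$ actually used in the $Com$‑commitments of family $j$ and all those commitment pairs are mutually consistent. For a good family each of its $w$ sessions yields $b^*_{r,q}\in\{0,1\}$, so every flagged position lies inside a not‑good family, and a not‑good family contributes at most $w$ flagged positions; it therefore suffices to show that, except with negligible probability, the number of not‑good families stays below the value that makes the Step 9.3 count exceed $\chi\,wk$ (the threshold is set so that $(\eta+\zeta)$ not‑good families, $\eta=\zeta=\log^2 k$, stay under it). I would split ``not‑good'' into two causes. \emph{(a)~Binding/seed mismatch:} in each of the $k$ session‑blocks $\CO^*$ fixes the $Com$‑commitments of \emph{both} families before seeing $\gamma_r$, and the extracted family is the one $\gamma_r$ leaves unverified; up to a negligible ``lucky preimage'' correction, a family that is not‑good for this reason would, had $\gamma_r$ picked it for verification, fail the honesty check (checks 11.3(a)--(c) of \Cref{alg:EqExcomm}) that the simulator re‑runs in Step 8. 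Since the $\gamma_r$ are uniform and independent across blocks, this is precisely the independent‑coin situation of the proof of \Cref{th:fiststatbind}, giving $\probP[\text{honesty check survives}\wedge\ge\eta\text{ blocks carry such a family}]\le 2^{-\eta+O(1)}$, negligible because $\eta=\log^2 k=\omega(\log k)$. \emph{(b)~Uncorrectable transmission:} a family is also not‑good if $\CO^*$ prepared, on more than an $\alpha$‑fraction of the block $M_j$, states that are not honest BB84 states, so the syndrome $Synd_j$ fails to recover $\tilde{\mathbf{x}}^{j}$; by the sampling/uncertainty estimate of \cite{BF12} underlying \Cref{eq:entropy}, together with the randomness of the test subset $E$ and of the revealed indices, at least $\zeta$ families are of this kind only with probability negligible in $\lambda_{EX}$. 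A union bound over (a) and (b) then gives $\probP[\ge(\eta+\zeta)\text{ not‑good families}]\le 2^{-\eta+O(1)}+2^{-\Omega(\zeta)}=\nu(\lambda)$, which with the displayed inequality proves the lemma.

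I expect the main obstacle to be making cause (a) rigorous against a committer that deviates \emph{both} in its quantum state preparation and in the randomness it feeds into the base commitments: one must show that any such deviation confined to a single family is, up to negligible slack, detected with probability $\tfrac12$ whenever that family is the one selected for verification, so that the family‑by‑family independence of the challenges can be exploited exactly as in \Cref{th:fiststatbind}; and, correspondingly, that preparations far from honest BB84 are caught by the random test set $E$ via \cite{BF12}, so that the error‑correcting code of \Cref{sec:errorcorr} genuinely recovers the seeds on all but $\zeta$ blocks. Once those two tail bounds are established, the remaining steps — non‑disturbance of the Step 9 post‑processing and the final union bound — are routine.
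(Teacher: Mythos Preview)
Your overall plan—isolating the Step~9.3 abort as the only point where the hybrids diverge, and bounding the number of non-extractable session-blocks through the $k$ independent receiver challenges $\gamma_r$—is the paper's approach, and your remark that Steps~9.1, 9.2, 9.4 are passive local computations that do not touch $\CO^*$'s registers is correct and cleaner than what the paper writes.

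The gap is in your cause~(b). You invoke the \cite{BF12} sampling estimate and ``the randomness of the test subset $E$'' to control the number of blocks on which decoding fails. But in \Cref{alg:EqExcomm} the subset $E$ is chosen by the \emph{committer} (Step~4), and the check on $E$ in Steps~5--6 exists to convince the committer that the \emph{receiver} measured; against a malicious $\CO^*$ it gives no guarantee whatsoever about the states $\CO^*$ prepared, so the \cite{BF12} bound runs in the wrong direction here and is inapplicable. The paper does not separate~(b) out at all: the first bullet of the honesty check in Step~11.3 already compares $\CO^*$'s claimed $\tilde{\mathbf{x}}^{j^{r,\gamma_r}}$ against the receiver's own syndrome-corrected outcomes on that block, so \emph{any} family on which the simulator's reconstructed seed $\hat{\boldsymbol{p}}^{j}$ differs from the randomness $\CO^*$ actually used—whether because of a seed lie, an equivocated pair, or a bad state/syndrome preparation that defeats decoding—is caught with probability~$\tfrac12$ over the fresh coin $\gamma_r$. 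Your cause~(b) should therefore be absorbed into~(a), and the single coin-flip tail bound $\probP[\text{all checks pass}]\le 2^{-|B_1|}$ of \Cref{th:fiststatbind}/\Cref{th:secondstatbind} already delivers the negligible bound you need, without any appeal to~\cite{BF12}.
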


\begin{proof}
The only difference between $\text{Hyb}_3$ and $\text{Hyb}_4$ comes from the fact that the $Ideal$ distribution, given by $\text{Hyb}_4$, outputs FAIL whenever $b_i^* \notin \{ \perp, b_i, \mathbbold{\qm}\}$ and the number of $\mathbbold{\qm}$ outputs is larger than the proportion $\chi \leq \eta + \zeta$, where $\zeta$ is the proportion of commitments seeded in a dishonest way. 

In order to bound the probability of having more than a proportion $\zeta$ of dishonestly seeded commitments, let us consider again $k$ sequential repetitions of $w$ parallel EqCommitment protocols, as described in  \Cref{alg:EqExcomm}. For the $r$-th instance, the $q$-th EqCommiment $\left(c_{i_{q,0}}^{j^{r,0}},c_{i_{q,1}}^{j^{r,0}}, c_{i_{q,0}}^{j^{r,1}}, c_{i_{q,1}}^{j^{r,1}}\right)$ of the $w$ parallel commitments, uses as seeds $ \left(\boldsymbol{p}_{i_{q,0}}^{j^{r,0}},\boldsymbol{p}_{i_{q,1}}^{j^{r,0}},\boldsymbol{p}_{i_{q,0}}^{j^{r,1}},\boldsymbol{p}_{i_{q,1}}^{j^{r,1}} \right)$. 

Each $r \in [k]$ sequential honesty check is composed of a sequence of $w$ EqCommitment that use the seeds $\boldsymbol{p}^{j^{r,0}}\leftarrow\tilde{\boldsymbol{x}}^{j^{r,0}}$ and $\boldsymbol{p}^{j^{r,1}}\leftarrow\tilde{\boldsymbol{x}}^{j^{r,1}}$. Let us define $B_1$ as the set of $r$ indexes with at least one dishonestly seeded commitment $B_1 = \{r: \exists { \gamma,\delta, q} \quad$ $\boldsymbol{p}_{i_{q,\delta}}^{j^{r,\gamma}}  \text{is not the seed of the commitment}\}$. Let us also define the event $E_1$ where the check on Step 11 of Algorithm \ref{alg:EqExcomm} passes on all $k$ sets. The goal is now to connect $|B_1|$ and $\probP [E_1]$.
It can be shown that:
\begin{align}\label{eq:connection-E1-B1}
    \probP [E_1] \leq \frac{1}{2^{|B_1|}} ,
\end{align}
which is equivalent to the fact that, if $\probP[E_1] \geq 2^{-\zeta k}$, then $|B_1| \leq \zeta k$. The proof of \Cref{eq:connection-E1-B1} holds since, for each $r \in B_1$, there exists one set of indexes $\{\gamma^*, \delta^*, q^*\}$ such that $\boldsymbol{p}_{i_{q^*,\delta^*}}^{j^{r,\gamma^*}}$ is not the seed of the commitment, and in order for $E_1$ to be true, the challenged family of commitments seeded by $\boldsymbol{p}^{j^{r,\gamma}}$ for the $r$-th set of $w$ EqCommitment instances is different from $\boldsymbol{p}^{j^{r,\gamma^*}}$, which happens with probability $\frac{1}{2}$. Since all challenges are picked independently, the probability that \[\probP [\forall r \in B_1: \gamma  \ne \gamma^*] = \prod_{r \in B_1}\ \probP [\gamma \ne \gamma^*] \leq \frac{1}{2^{|B_1|}}.\]

 $Hyb_3$ and $Hyb_4$ would be distinguishable if the malicious committer $\CO^*$ has committed to $b_i$ with an equivocated commitment more than a proportion $\eta = \omega(\log^2(k)/k)$, or if the malicious committer $\CO^*$ uses as seeds $\boldsymbol{p}^{j} \neq PRG(h(\Tilde{\mathbf{x}}^j))$ in more than a proportion $\zeta = \omega(\log^2(k)/k)$ of the keys and $\text{Hyb}_3$ has not output $\perp$ in the honesty check phase. This happens with probability: 

\begin{equation*}
\begin{split}
    \probP[\text{FAIL} | \text{Ideal}] =& \probP [ \text{proportion of non extractable  commitments} \geq \eta + \zeta]\\
    &\leq \probP [ \text{ proportion of non extractable commitments} \geq \zeta] .  
\end{split}
\end{equation*}
Then, $\probP[\text{FAIL} | \text{Ideal}] < negl (\lambda)$ given \Cref{th:secondstatbind}. 
\end{proof}

\begin{lemma}
$    \abs{\probP[\DD^*(\sigma,\text{Hyb}_5) = 1] \equiv \probP[\DD^*(\sigma,\text{Hyb}_4) = 1]}$.
\end{lemma}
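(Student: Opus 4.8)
The plan is to show that $\text{Hyb}_4$ and $\text{Hyb}_5$ induce \emph{exactly} the same distribution on the distinguisher's view, by coupling the two experiments along all of their shared randomness and checking that the one place where they formally differ -- Step 10 of Algorithm \ref{alg:extractor} -- is invisible. First I would note that the two experiments are run with identical randomness up to and including Step 9: in both, the same dummy equivocal commitments are prepared on the positions $i$ with $b_i^*=\mathbbold{\qm}$, the same bits $b_i^*$ are extracted on the complementary positions, and the malicious committer $\CO^*$ reaches the same post-commit state, so that the bits $b_i$ it will announce in the opening phase are sampled from the same joint distribution. The FAIL conditions are also checked in both and fire with the same probability.

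Next I would invoke the structure of the equivocal simulator $\QQ_{Eq}$ (Algorithm \ref{alg:equivocalsim}): after an equivocated commit phase, the simulator holds two honestly formed base commitments $c^{\bar\gamma}_0,c^{\bar\gamma}_1$ carrying the two distinct bits, and ``opening to a target value $v$'' is simply the deterministic act of revealing the one of these two whose plaintext makes $\RE$'s decommitment test accept $v$. Consequently, once the (shared) commit-phase randomness is fixed, feeding a target $v\in\{0,1\}$ into this opening map produces a single determined decommitment transcript and a single determined receiver output. In $\text{Hyb}_4$ the target fed on each $\mathbbold{\qm}$-position is whatever placeholder Step 9 wrote into $\boldsymbol{b}^*$; in $\text{Hyb}_5$ it is the bit $b_i$ that $\CO^*_{dec}$ announces on that position. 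The remaining work is to argue that these two targets lead to the same observable transcript, which I would do by noticing that in $\text{Hyb}_5$ the target equals, by construction, the value the honest $\RE_{dec}$ accepts on that coordinate; on the extracted coordinates the not-FAIL conditioning already incorporated in $\text{Hyb}_4$ forces $\CO^*_{dec}$ to announce exactly $b_i^*$; and on every coordinate the committer's final state is just $\CO^*_{dec}$ applied to the common post-commit state. Hence the coupling carries $\text{Hyb}_4$ to $\text{Hyb}_5$ while preserving $(\boldsymbol{\rho}^{\CO}_{final},\boldsymbol{b})$ exactly.

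The main thing to be careful about -- rather than a genuine obstacle -- is the index bookkeeping: one must track the reordering performed in Step 6 of Algorithm \ref{alg:extractor}, and one must check that ``the value opened by $\CO^*$'' is well defined on the $\mathbbold{\qm}$-positions, i.e.\ that either $\CO^*_{dec}$ produces an accepting EqDecommitment there (in which case both experiments record that value) or it does not (in which case both record $\perp$). Once this is unwound from the description of EqDecommitment, the equality $\probP[\DD^*(\sigma,\text{Hyb}_5)=1]=\probP[\DD^*(\sigma,\text{Hyb}_4)=1]$ follows, which together with the earlier hybrid lemmas completes the proof that $\QQ_{\CO^*}$ realizes the $Ideal$ interaction of \Cref{def:extract}.
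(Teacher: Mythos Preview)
Your conclusion and your underlying reasoning match the paper's: both hybrids run the identical commit and decommit interaction with $\CO^*$, and the sole difference---Step~10---merely writes into each $\mathbbold{\qm}$-slot the bit that $\CO^*_{dec}$ has just opened, so (together with the not-FAIL condition forcing $b_i^*=b_i$ on the extracted coordinates) the output string in $\text{Hyb}_5$ coincides with $\boldsymbol{b}$ in $\text{Hyb}_4$. The paper states exactly this in one line (``the opened value is the same as the one opened by the malicious committer $\CO^*$''); your analysis of $\QQ_{Eq}$'s deterministic opening map and of a ``target fed on each $\mathbbold{\qm}$-position in $\text{Hyb}_4$'' is superfluous---$\text{Hyb}_4$ never executes Step~10 and hence feeds no target there---but this extra scaffolding does not damage the argument.
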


\begin{proof}
    Similar to the proof of \Cref{cl:lemma1}. In the opening phase, since the opened value is the same as the one opened by the malicious committer $\CO^*$, both distributions are indistinguishable.     
\end{proof}

 Thus, ERE-Commitment is equivocal and $\omega(\log^2(k)/k)$-relaxed-extractable in the sense of \Cref{def:equivocal,def:extract}.

\section{OT from equivocal and relaxed-extractable bit commitment}
\label{sec:QOT}

In this section, we build an oblivious transfer functionality assuming an equivocal and $\chi$-relaxed extractable bit commitment, with $\chi = \omega(\log^2(k)/k)$. The structure is similar to \cite{DFL+09,BCKM21}, but applying it to our setting. 
Lastly, we show how to optimize the number of runs of OT in \Cref{col:distillation}. 

\begin{algorithm}[t]
    \caption{Oblivious transfer protocol}
    \textbf{Alice} $A$ \textbf{Input:} Messages $m_0, m_1 \in \{0,1\}^{\lambda}$.\\ 
    \textbf{Bob} $B$ \textbf{Input:} Bit  $b \in \{0,1\}$.\\
     \textbf{Assumptions}: The parties use an ERE-Commitment that is $\chi$-relaxed extractable and equivocal, $PRG : \{0,1\}^{\lambda_{PQS}} \rightarrow \{0,1\}^{w}$ and 2-universal hash functions $h_j(\cdot, \cdot): \{0,1\}^m \times \{0,1\}^n \rightarrow \{0,1\}^{\lambda_{PQS}}$. 
     
    \begin{algorithmic}[1]
    \STATE  $A$ chooses $\boldsymbol{x} \leftarrow \{0,1\}^{2\lambda_{OT}}$ and $\boldsymbol{\theta} \leftarrow \{+,\times\}^{2\lambda_{OT}}$ and sends $\ket{x}_{\theta}$ to $B$.
    \STATE $B$ samples his measurement bases $\hat{\boldsymbol{\theta}} \leftarrow \{+,\times\}^{2\lambda_{OT}}$ and measures $\hat{\boldsymbol{x}} \leftarrow \{0,1\}^{2\lambda_{OT}}$. $A$ and $B$ execute 
 ERE-Commitment to commit
    to $\left((\hat{x}_i, \hat{\theta}_i)\right)_{i \in [2\lambda_{OT}]}$.
    \STATE $A$ challenges a random subset $T \subset [2 \lambda_{OT}]$, such that $\abs{T} = \lambda_{OT}$.
    \STATE $A$ and $B$ execute ERE-Decommitment for every $i \in T$. Then, $A$ obtains $\left((\hat{x}_i, \hat{\theta}_i)\right)_{i \in T}$.
    \STATE $A$ checks that $x_i = \hat{x}_i$ whenever $\theta_i = \hat{\theta}_i$, up to an experimental error $\alpha $. If the test does not pass, $A$ aborts.
    \STATE $A$ and $B$ discard the tested positions and both reorder their respective sequences $\left((x_i, \theta_i)\right)_{i \in \bar{T}}$ and $\left((\hat{x}_i, \hat{\theta}_i)\right)_{i \in \bar{T}}$. $A$ sends $\boldsymbol{\theta}$ to $B$.
    \STATE $B$ partitions the set $\bar{T}$ into two different subsets: the subset in which both have measured in the same bases $I_b = \{ i : \theta_i = \hat{\theta}_i\}$ and the subset in which they have not measured in the same bases $I_{\bar{b}} = \{ i : \theta_i \neq \hat{\theta}_i\}$. $B$ sends $(I_0, I_1)$ to $A$.
    \STATE $A$ sends the syndromes $Synd_0$ and $Synd_1$ for correcting a proportion $\alpha$ of errors of the bit strings $\boldsymbol{x}_0$ and $\boldsymbol{x}_1$ to $B$, where $\boldsymbol{x}_0 = \{x_i : i \in I_0\}$ and $\boldsymbol{x}_1 = \{x_i : i \in I_1\}$.
    \STATE $A$ samples seeds $s_0,s_1 \leftarrow \{0,1\}^{k(\lambda_{PQS})}$ and sends $(s_0, PRG(h(s_0,\boldsymbol{x}_0)) \oplus m_0, s_1, PRG(h(s_1,\boldsymbol{x}_1)) \oplus m_1)$. 
    \STATE $B$ performs error correction on $\boldsymbol{x}_b$ using the syndrome $Synd_b$ and decrypts $m_b$ by using $\hat{\boldsymbol{x}}_b$, with $\hat{\boldsymbol{x}}_b = \{\hat{x}_i : i \in I_b\}$. If the error correcting protocol fails, $B$ chooses $\boldsymbol{x}_b$ to be a random string $s \leftarrow \{0,1\}^{\lambda_{OT}/2}$.
    \end{algorithmic}    \label{alg:OTProtocol}
    \end{algorithm} 
The oblivious transfer functionality $\FUNC_{OT}(\cdot, \cdot)$ is a quantum two-parties interactive functionality in which Alice, $A$, inputs two messages $m_0, m_1 \in \{0,1\}^{\lambda}$, and Bob, $B$, inputs a choice bit $b \in \{0,1\}$. The output after the interaction is $(m_0,m_1)$ on Alice's side and $(b,m_b)$ on Bob's side.  More concretely, the OT functionality $\FUNC_{OT}$ works as follows:
\begin{itemize}
    \item $\FUNC_{OT}((m_0,m_1), \cdot)$ takes an input $b$ (or $\perp$), returns $m_b$ to $B$ and END (or $\perp$) to $A$.

    \item $\FUNC_{OT}(\cdot, b)$ takes an input $(m_0,m_1)$ (or $\perp$), returns END to $A$ and $m_b$ (or $\perp$) to $B$.
\end{itemize}

 \begin{definition}[Secure OT functionality] A protocol $\langle A(m_0,m_1), B(b) \rangle $ that implements the ideal OT functionality $\FUNC_{OT}(\cdot, \cdot)$ is said to be simulation-based secure if it satisfies:
\begin{itemize}
    \item \textbf{Security against a malicious Alice.} Given the interaction between a quantum poly-time non-uniform malicious Alice, $A^*$, and an honest Bob, their final state will be $\mathbf{\rho}_{OUT,A^*}\langle A^*(m_0,m_1), B(b) \rangle $ and  $\text{OUT}_B\langle A^*(m_0,m_1),$ $ B(b) \rangle $.  There exists a simulator $\text{Sim}_{A^*}$ that interacts with the ideal functionality $\FUNC_{OT}(\cdot, b)$ in the following way. For every non-uniform advice $\rho, \sigma \in \BB_1(\HH)$, where $\rho $ and $\sigma$ may be entangled, $\text{Sim}_{A^*} (\rho)$ sends either $(m_0, m_1)$ or $\perp$ to $\FUNC_{OT}(\cdot, b)$ and outputs the final state $\mathbf{\rho}_{SIM,OUT,A^*}$. The output of the ideal functionality to the Bob is given by $\text{OUT}_B$. Then, for any quantum poly-time non-uniform distinguisher $\DD^*$:

\begin{equation}
    \begin{split}
        &\left| \probP [\DD^*(\mathbf{\sigma}, (\mathbf{\rho}_{SIM,\text{OUT},A^*},\text{OUT}_B)) = 1] \right. \\
        &\left. - \probP[\DD^*(\mathbf{\sigma}, (\mathbf{\rho}_{\text{OUT},A^*}\langle A^*(m_0,m_1), B(b) \rangle,\text{OUT}_B\langle A^*(m_0,m_1), B(b) \rangle)) = 1] \right| \\
        &= \text{negl}(\lambda)
    \end{split} \label{eq:malsend}
\end{equation}

    \item \textbf{Security against a malicious Bob.} Given the interaction between an honest Alice and a quantum poly-time non-uniform malicious Bob, $B^*$, their final state will be  $\text{OUT}_A\langle A(m_0,m_1), B^*(b) \rangle $ and $\mathbf{\rho}_{OUT,B^*}\langle A(m_0,m_1), B^*(b) \rangle $.  There exists a simulator $\text{Sim}_{B^*}$ that interacts with the ideal functionality $\FUNC_{OT}((m_0,m_1), \cdot)$ in the following way. For every non-uniform advice $\rho, \sigma \in \BB_1(\HH)$, where $\rho $ and $\sigma$ may be entangled, $\text{Sim}_{B^*} (\rho)$ sends either $b \in \{0,1\}$ or $\perp$ to $\FUNC_{OT}((m_0,m_1), \cdot)$ and outputs the final state $\mathbf{\rho}_{SIM,OUT,B^*}$. The output of the ideal functionality to Bob is given by $\text{OUT}_A$. Then, for any quantum poly-time non-uniform distinguisher $\DD^*$:

\begin{equation}
    \begin{split}
        &\left| \probP [\DD^*(\mathbf{\sigma}, (\text{OUT}_A,\mathbf{\rho}_{SIM,\text{OUT},B^*} ) = 1] \right. \\
        &\left. - \probP[\DD^*(\mathbf{\sigma}, (\text{OUT}_A\langle A(m_0,m_1), B^*(b) \rangle),\mathbf{\rho}_{\text{OUT},B^*}\langle A(m_0,m_1), B^*(b) \rangle ) = 1] \right| \\
        &= \text{negl}(\lambda).
    \end{split} \label{eq:malrec}
\end{equation}

\end{itemize}
     \label{def:secureOT}
 \end{definition}

\subsection{Security against a malicious Alice}

We will now establish the security of Bob against a quantum poly-time non-uniform malicious Alice, given \Cref{def:secureOT}.  To prove that Eq. \ref{eq:malsend} holds in the proposed scheme, the following hybrid argument is required:

\begin{itemize}
    \item \textbf{Hybrid 0}. The $Real$ distribution given by Algorithm \ref{alg:OTProtocol}.
    \item \textbf{Hybrid 1}. Similar to $\text{Hyb}_0$ except in Step 2, where we execute the equivocal simulator $\QQ_{\RE^*,com}$ for $i \in  [2\lambda_{OT}]$. The second difference with respect to  $\text{Hyb}_0$ is that in Step 4, to open the challenged commitments $i \in T$, we execute $\QQ_{\RE^*,dec}$.
    \item \textbf{Hybrid 2}. The $Ideal$ distribution given by Algorithm \ref{alg:OTsimA}.
\end{itemize}

\begin{algorithm}[H]
    \caption{Simulator $\text{Sim}_{A^*}$}
    \begin{algorithmic}[1]
    \STATE  $A^*(\mathbf{\rho})$ outputs the message $\mathbf{\Psi}$.
    \STATE $\text{Sim}_{A^*}$ executes $2\lambda_{OT}$  sessions of ERE-Commitment to commit to dummy values by executing the equivocal simulator $\QQ_{\RE^*}$.
    \STATE $\text{Sim}_{A^*}$ obtains the challenged set $T \subset [2 \lambda_{OT}]$ output by $A^*$.
    
    \STATE $\text{Sim}_{A^*}$ samples $\hat{\boldsymbol{\theta}} \leftarrow \{+,\times\}^{\lambda_{OT}}$. $\text{Sim}_{A^*}$ measures the $i^{th}$ qubit of $\mathbf{\Psi}$, $\Psi_i$, for every $i \in T$ in the previous sampled bases, obtaining $\{\hat{x}\}_{i \in T}$. By applying Watrous' rewinding, $\text{Sim}_{A^*}$ opens the challenged bits to the correct values $\{\hat{x}_i, \hat{\theta}_i\}_{i \in T}$. 
    
    \STATE  $\text{Sim}_{A^*}$ discards the tested positions. $A$ sends $\boldsymbol{\theta}$ to $B$. $\text{Sim}_{A^*}$ obtains $\boldsymbol{\theta}$ and measures each of the qubits $\mathbf{\psi}_i \in \bar{T}$ in the bases $\theta_i$ to obtain $\hat{x}_i = x_i$.
    
    \STATE $\text{Sim}_{A^*}$ partitions the set $\bar{T}$ into two different subsets, by randomly sampling a bit $d_i \in \{0,1\}$ for each $i \in \bar{T}$, to create the sequences $I_0 = \{ i : d_i = 0\}$ and $I_{1} = \{ i : d_i = 1\}$. $\text{Sim}_{A^*}$ sends $(I_0, I_1)$ to $A^*$.
    
    \STATE $\text{Sim}_{A^*}$ obtains $(s_0, PRG(h(s_0,\boldsymbol{x}_0)) \oplus m_0, s_1, PRG(h(s_1,\boldsymbol{x}_1)) \oplus m_1)$, and computes $m_0$ and $m_1$. Both, $\boldsymbol{x}_0$ and $\boldsymbol{x}_1$ are the corrected measured strings $\hat{\boldsymbol{x}}_0$ and $\hat{\boldsymbol{x}}_1$ with the syndromes $Synd_0$ and $Synd_1$. If the error correcting protocol fails,  $\text{Sim}_{A^*}$ chooses $\boldsymbol{x}_0$ or/and $\boldsymbol{x}_1$ to correspondingly be the random strings $s_0, s_1 \leftarrow \{0,1\}^{\lambda_{OT}/2}$.

    \STATE If $A^*$ aborts at any point, $\text{Sim}_{A^*}$ sends $\perp$ to $\FUNC_{OT}(\cdot, b)$. Otherwise, it sends $(m_0,m_1)$ to $\FUNC_{OT}(\cdot, b)$. The output is the final state of $A^*$.

    \end{algorithmic}    \label{alg:OTsimA}
    \end{algorithm}

\begin{lemma} There exists a negligible function $\nu(\lambda)$, such that:

\begin{equation*}
    \begin{split}
        &\Big | \probP [\DD^*(\mathbf{\sigma}, \text{Hyb}_1) = 1 ] - \\
        &\probP[\DD^*(\mathbf{\sigma}, (\mathbf{\rho}_{\text{OUT},A^*}\langle A^*(\mathbf{\rho}), B(b) \rangle,\text{OUT}_B\langle A^*(\mathbf{\rho}), B(b) \rangle)) = 1] \Big| = \nu(\lambda).
    \end{split}
\end{equation*}
    
\end{lemma}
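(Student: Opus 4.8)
The only way in which $\text{Hyb}_1$ differs from $\text{Hyb}_0$ is that the honest execution of the commit phase of \textit{ERE-Commitment} in Step~2 of \Cref{alg:OTProtocol} is replaced by the equivocal simulator $\QQ_{\RE^*,com}$, and the honest opening of the challenged commitments in Step~4 is replaced by $\QQ_{\RE^*,dec}$. Since in this interaction honest Bob plays the role of the committer and the malicious Alice $A^*$ plays the role of the (malicious) receiver, this is exactly the difference between $Real_b$ and $Ideal_b$ in the equivocality game of \Cref{def:equivocal}, for the committed string $b=\bigl((\hat x_i,\hat\theta_i)\bigr)_{i\in[2\lambda_{OT}]}$. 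The plan is therefore to reduce a distinguisher for $\text{Hyb}_0$ versus $\text{Hyb}_1$ to a distinguisher against the equivocality of \textit{ERE-Commitment}, which holds by \Cref{th:secondeq}. The argument is essentially the one used in \Cref{cl:lemma1}.

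As in \Cref{cl:lemma1}, since equivocality is stated per committed value, I would first interpose sub-hybrids $\text{Hyb}_{0,j}$ for $j\in\{0,\dots,2\lambda_{OT}\}$, where the commitments with index $\le j$ are produced (and, if their index lies in the challenge set $T$, opened) by the equivocal simulator, while those with index $>j$ are produced and opened honestly by Bob; thus $\text{Hyb}_{0,0}=\text{Hyb}_0$ and $\text{Hyb}_{0,2\lambda_{OT}}=\text{Hyb}_1$. If some $\DD^*$ distinguished $\text{Hyb}_0$ and $\text{Hyb}_1$ with non-negligible advantage $\epsilon$, then by the triangle inequality there is an index $j^*$ for which $\DD^*$ distinguishes $\text{Hyb}_{0,j^*-1}$ from $\text{Hyb}_{0,j^*}$ with advantage at least $\epsilon/(2\lambda_{OT})$, still non-negligible.

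From such a $j^*$ I would build a malicious receiver $\RE^*$ and a distinguisher $\DD$ against the equivocality of \textit{ERE-Commitment} at position $j^*$. The receiver $\RE^*$ internally runs $A^*$ and simulates the OT protocol up to the end of Step~4: it handles the commitments with index $<j^*$ using $\QQ_{\RE^*,com}$, the commitments with index $>j^*$ using the honest committer code on Bob's genuine measurement outcomes $(\hat x_i,\hat\theta_i)$ (which $\RE^*$ itself generated in Step~2), and it forwards the external challenge commitment in position $j^*$; after $A^*$ announces $T$ in Step~3, hence before the openings in Step~4, $\RE^*$ knows whether $j^*\in T$ and, if so, routes the decommitment of position $j^*$ to the external game (asking it to open to the genuine value $(\hat x_{j^*},\hat\theta_{j^*})$), handling the other challenged positions with $\QQ_{\RE^*,dec}$ or with the honest opening as appropriate. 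The distinguisher $\DD$ then takes $\RE^*$'s final state, completes Steps~5--9 of \Cref{alg:OTProtocol} playing honest Bob (nothing there touches the commitment internals: it uses only $(\hat x,\hat\theta)$, $\theta$, and the messages $A^*$ sends), assembles the pair $(\rho_{\text{OUT},A^*},\text{OUT}_B)$, runs $\DD^*$ on it and outputs its verdict. By construction $\DD$ distinguishes $Real_{(\hat x_{j^*},\hat\theta_{j^*})}$ from $Ideal_{(\hat x_{j^*},\hat\theta_{j^*})}$ with non-negligible advantage, contradicting \Cref{th:secondeq}; averaging over the randomized value $(\hat x_{j^*},\hat\theta_{j^*})$ is harmless since equivocality holds for every fixed committed value.

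The point requiring most care is the interleaving of the commit and decommit phases with the adaptive choice of $T$: the reduction must be organised so that the external equivocality game --- whose commit and decommit phases are cleanly separated --- is invoked consistently with $T$ being revealed only after all commitments, which works precisely because $T$ is sent in Step~3, strictly before the openings in Step~4. A second, routine observation is that a poly-size $\RE^*$ that internally runs other instances of the equivocator (which themselves use Watrous rewinding) is still a legitimate poly-size malicious receiver of polynomial total runtime, so the equivocality guarantee of \Cref{th:secondeq} applies to it without circularity, exactly as in the proof of \Cref{cl:lemma1}.
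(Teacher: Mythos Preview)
Your proposal is correct and takes essentially the same approach as the paper: both reduce to the equivocality of \textit{ERE-Commitment} (\Cref{th:secondeq}). The paper's own proof is a single sentence (``It follows from the equivocality property of ERE-Commitment''), whereas you spell out the sub-hybrid reduction in the style of \Cref{cl:lemma1}; this extra detail is fine and does not change the argument.
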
 \label{cl:eq}

\begin{proof}
 It follows from the equivocality property of ERE-Commitment.
\end{proof}

\begin{lemma}
$          \probP [\DD^*(\mathbf{\sigma}, \text{Hyb}_1) = 1 ] \equiv \probP [\DD^*(\mathbf{\sigma}, (\mathbf{\rho}_{SIM,\text{OUT},A^*},\text{OUT}_B)) = 1]$.
\end{lemma}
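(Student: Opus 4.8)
The plan is to prove the stronger statement that the two distributions are \emph{exactly} equal: the only computational step---replacing the honest commit/decommit by the equivocal simulator $\QQ_{\RE^*}$---was already performed in the passage $\text{Hyb}_0\to\text{Hyb}_1$ (the previous lemma), so from $\text{Hyb}_1$ onward no cryptographic assumption is needed. I would couple the two experiments, showing that $A^*$ receives an identically distributed transcript and that its internal register is left in the same state; the final state $\rho_{\text{OUT},A^*}$, all messages $A^*$ sends, and Bob's output $m_b$ are deterministic functions of these, hence also match.

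\textbf{Matching $A^*$'s view.} The ERE-Commitments that $A^*$ sees are produced by the same simulator $\QQ_{\RE^*,com}$ in both experiments, so the challenged set $T$ output by $A^*$ has the same law. For the opened positions $i\in T$: since $A^*$ never acts on the message register $\mathbf{\Psi}$ after sending it, measuring each $\Psi_i$ with $i\in T$ in a fresh uniform basis commutes with every operation of $A^*$, so it is irrelevant whether this is done at Step~2 (as in $\text{Hyb}_1$) or only after $T$ is announced (as in \Cref{alg:OTsimA})---this is the analogue of \Cref{cl:eq2}. Finally $(I_0,I_1)$: in $\text{Hyb}_1$ it is the pair $\big(\{i\in\bar T:\hat\theta_i=\theta_i\},\ \{i\in\bar T:\hat\theta_i\ne\theta_i\}\big)$ possibly swapped according to $b$, and since the bases $\hat\theta_i$ for $i\in\bar T$ are uniform and enter $A^*$'s view only through $(I_0,I_1)$ (their commitments being equivocal dummies), this is a uniformly random ordered partition of $\bar T$, distributed independently of $A^*$'s view so far; in \Cref{alg:OTsimA} it is an explicit uniform partition via the bits $d_i$. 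Hence the transcripts coincide in distribution.

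\textbf{Matching $A^*$'s register and Bob's output.} After deferring all measurements of $\mathbf{\Psi}$ (which, as just argued, changes nothing), the only remaining difference concerns the non-challenged positions that land in $I_{\bar b}$. In $\text{Hyb}_1$ honest Bob measures each such $\Psi_i$ in the conjugate basis $\bar\theta_i$ and discards both the outcome and the qubit; in \Cref{alg:OTsimA} the simulator measures $\Psi_i$ in the correct basis $\theta_i$ and uses the outcome only to form $m_{\bar b}$, which it sends to $\FUNC_{OT}(\cdot,b)$ but which never appears in the output (the functionality returns only $m_b$, and $\rho_{SIM,\text{OUT},A^*}$ is the final state of $A^*$, not of the simulator). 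As seen from the output, the net operation on $\Psi_i$ is in both cases ``measure in some basis, forget the outcome, trace out the qubit'', which equals the partial-trace channel on $\Psi_i$ irrespective of the basis, because a pinching channel followed by the partial trace is the partial trace. Therefore, conditioned on $(I_0,I_1)$, the joint state of $A^*$'s register together with the retained outcomes $(\hat x_i)_{i\in I_b}$ is identical in the two experiments; and for $i\in I_b$ both measure $\Psi_i$ in $\theta_i$ and keep the outcome, so these agree as well. It follows that $A^*$'s later messages (the syndromes and the string $(s_0, PRG(h(s_0,\boldsymbol{x}_0))\oplus m_0, s_1, PRG(h(s_1,\boldsymbol{x}_1))\oplus m_1)$) and $A^*$'s final state have the same law, and the value $m_b$ delivered to Bob is in both experiments the output of the same decryption procedure---error-correct $(\hat x_i)_{i\in I_b}$ with $Synd_b$, then XOR with $PRG(h(s_b,\cdot))$---applied to the same inputs. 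Thus $(\rho_{\text{OUT},A^*},\text{OUT}_B)$ has the same distribution in both, establishing the claimed equality.

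The step I expect to be the main obstacle is exactly this last one: making rigorous that the basis mismatch on the positions in $I_{\bar b}$ (conjugate basis with a discarded outcome versus correct basis with an outcome that is computed but never output) is invisible, via the ``measure-and-forget equals partial trace'' identity, together with careful bookkeeping of the deferred measurements so that the coupling is genuinely exact and not merely statistically close.
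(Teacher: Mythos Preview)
Your proposal is correct and follows the same core approach as the paper's proof, namely deferred measurement (the paper explicitly invokes the analogue of \Cref{cl:eq2}) together with the observation that the syndromes make the extracted message(s) agree. The paper's argument is extremely terse, essentially stating only these two points; you go further by making explicit the two steps the paper leaves implicit: (i) that the $b$-dependent partition $(I_0,I_1)$ in $\text{Hyb}_1$ is a uniformly random ordered partition of $\bar T$ (because the $\hat\theta_i$ for $i\in\bar T$ are uniform and enter $A^*$'s view only through the partition, the commitments being equivocated dummies), hence identically distributed to the simulator's explicit uniform partition; and (ii) that the basis mismatch on the positions in $I_{\bar b}$ is invisible because in both experiments the corresponding outcomes are discarded from the output, so measure-and-forget equals partial trace regardless of basis. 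These are exactly the right points to spell out, and your coupling makes the claimed \emph{exact} equality precise rather than merely plausible.
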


\begin{proof}
    It follows from the fact that the delay in the measurement of $\psi_i$ with $i \in T$ cannot be noticed by $A^*$ since, in both cases, the measurement bases are sampled at random, similar to \Cref{cl:eq2}. Moreover, in both cases the syndromes used for performing the error correction are the same ones. Therefore, if the syndromes sent by $A^*$ are not correct, both hybrids would lead to the same random distribution. Then, both extracted messages $m_0$ and $m_1$ are obtained by using the same error correction syndromes as in  $Hyb_1$, leading to the same distribution. 
\end{proof}

\subsection{Security against a malicious Bob}
In this section, we prove security against malicious Bob. We start by analyzing the number of qubits necessary in the protocol to achieve the desired level of security in \Cref{th:secmalrec}. Then, we prove the security of the protocol in \Cref{thm:sec-malicious-bob}.

\begin{lemma}[Distance bound for a malicious Bob in QOT]
Consider a QOT protocol between an honest Alice and a malicious Bob (\Cref{alg:OTProtocol}) with ERE-Commitment as the bit commitment subprotocol (\Cref{alg:EqExcomm}). Let $b \in \{0,1\}$ be a random bit and $K_b = h(r, \boldsymbol{x}_{I_b}) \in \{0,1\}^{\ell}$ and $K_{\bar{b}} = h(r, \boldsymbol{x}_{I_{\bar{b}}}) \in \{0,1\}^{\ell} $, the keys.  $h(r,x): \{0,1\}^m \times \{0,1\}^n \rightarrow \{0,1\}^{\ell} $ is a 2-universal hash function, and $I_0$  and $I_1$  are the two subsets in which $\bar{T} \subset [2 \lambda_{OT}]$ the unchallenged commitments are divided. Then, given $K_b$, for any sampling errors $\xi, \delta > 0$, a bit-flip error $\alpha \geq 0$ and a proportion $\vartheta \geq 0$ of leaked bits:

\begin{equation}
    \begin{split}
        & \Delta (\rho_{K_{\bar{b}}K_{b}E}, \frac{1}{2^{\ell}} \mathbbm{I} \otimes \rho_{K_{b}E})  \\
        &\leq \frac{1}{2} \cdot 2^{-\frac{1}{2}[\left(\frac{1}{2}-\xi - 2\vartheta\right)\frac{\lambda_{OT}}{2} -  h_2\left(\delta + \alpha + \chi\right)\frac{\lambda_{OT}}{2}\left(1 - 2\vartheta\right)-\ell-q]} \\
        & + \sqrt{6} \exp(-\lambda_{OT}\delta^2/100) + 2 \exp(- \xi^2\lambda_{OT}/2) ,
    \end{split}
    \label{eq:sec.boun.1}
\end{equation}
where $E$ denotes Bob's quantum state, $k$, the number of different seed families as described in Algorithm \ref{alg:EqExcomm} and $q$, the length of the syndrome needed for performing the error correction. $\mathbbm{I} \in \BB(\CC^{2^{\ell}})$ is the identity operator.
\label{th:secmalrec}
\end{lemma} 

\begin{proof}
    
The opening and checking process, which is based on sampling, follows a structure akin to the proof provided in \cite{BF12}. Interested readers can refer to it for a detailed analysis of the proof. This proof is done in the EPR setting in which Alice distributes half of an EPR pair to Bob in order for her to perform the sampling. Since Bob's and Alice's actions are equivalent in both settings, it suffices to prove the bound in the EPR setting. As the measurement bases are chosen randomly by Alice and Bob, the sampling strategy will be performed on a random subset $S \subseteq T$ defined as $S = \{ i \in T : \theta_i = \hat{\theta}_i \}$. Alice will accept the opening phase of the bit commitment if $w(\boldsymbol{x}|_S, \hat{\boldsymbol{x}}|_S) \leq \alpha$, where $w(\cdot) = d_H(\cdot)/n$, with $n \in \NN$ being the length of the bit-string. 

 Once the testing phase is accepted, the joint state $\ket{\psi_{A_{\bar{T}}E}}$ will be a superposition of states with relative Hamming weight $\delta$ close to $\alpha$ within $A_{\bar{T}}$ due to the statistical error of the sampling.  This error given by the sampling strategy will be bounded by Hoeffding's bound:
\begin{equation*}
    \epsilon^{\delta} \leq \sqrt{6} \exp(- \lambda_{OT} \delta^2/100).
\end{equation*}

This bound is obtained in Appendix B.4 of \cite{BF12}.  Assuming that Alice chooses her bases following a uniform random distribution and for $\boldsymbol{\theta}|_{\bar{T}}$ representing the subsets of bases that were not challenged, $d_H(\boldsymbol{\theta} |_{\bar{T}}, \hat{\boldsymbol{\theta}}|_{\bar{T}} ) \geq (1/2 - \gamma) |\bar{T}| = (1/2 - \gamma) \lambda_{OT}$ with Hoeffding's bound:

\begin{equation*}
   \epsilon^{\gamma} \leq 2 \exp(-2 \gamma^2\lambda_{OT}).
\end{equation*}

After the commitment and check subprotocols, the next step in the QOT protocol is when Alice sends the bases $\boldsymbol{\theta} |_{\bar{T}}$ to Bob, and Bob partitions $\bar{T}$ into two sets, $I_0$ and $I_1$. There exists a bit $b \in \{0,1\}$ such that $ d_H(\boldsymbol{\theta} |_{I_{\bar{b}}}, \hat{\boldsymbol{\theta}}|_{I_{\bar{b}}} ) \geq (1/2-2\gamma -2\vartheta)\lambda_{OT}/2$, regardless of how Bob splits the sets of indexes and taking the maximum advantage of the leaked bits, i.e., all the leaked bits are used in the set $I_{\bar{b}}$. Moreover, the real size of the non-sampled subset $I_{\bar{b}}$ for which the binary entropy in bounded is $\left(1 - 2\vartheta\right)$. By applying \Cref{eq:entropy} to our state $\rho_{X_{\bar{b}}A^{b}E}$, itself obtained from measuring the subsystem $A^{\bar{b}}$ on $\ket{\psi_{A^{\bar{b}}A^{b}E}}$, privacy amplification gives, for $\xi = 2\gamma$:
\begin{equation*}
\begin{split}
H_{\text{min}}(X_{\bar{b}}|A^{b}E) \geq & d_H(\boldsymbol{\theta}|_{I_{\bar{b}}}, \hat{\boldsymbol{\theta}}|_{I_{\bar{b}}} ) - h_2(\delta + \alpha +\eta + \zeta)|I_{\bar{b}}| \\
& \geq \left(\frac{1}{2}-\xi - 2\vartheta\right)\frac{\lambda_{OT}}{2} -  h_2\left(\delta + \alpha + \chi\right)\frac{\lambda_{OT}}{2}\left(1 - 2\vartheta\right).
    \end{split} 
\end{equation*}

Given a $\chi$-relaxed extractable ERE-Commitment, with $\chi = \eta + \zeta = \omega(\log^2(k)/k)$, ERE-Commitment is relaxed extractable. 

\end{proof}

We now move on to prove the security of the protocol against a quantum-poly time malicious Bob and an honest Alice.  The sequence of hybrid arguments for this case is:

\begin{itemize}
    \item \textbf{Hybrid 0}. The $Real$ distribution given by Algorithm \ref{alg:OTProtocol}.
    \item \textbf{Hybrid 1}. Similar to $\text{Hyb}_0$ except in Step 2, in which we execute the extractable simulator $\QQ_{\RE^*,com}$ of Algorithm \ref{alg:extractor}.
    \item \textbf{Hybrid 2}. The $Ideal$ distribution given by Algorithm \ref{alg:OTsimB}.
\end{itemize}

\begin{algorithm}[H]
    \caption{Simulator $\text{Sim}_{B^*}$}
    \begin{algorithmic}[1]
    \STATE  $\text{Sim}_{B^*}$ chooses $\boldsymbol{x} \leftarrow \{0,1\}^{2\lambda_{OT}}$ and $\boldsymbol{\theta} \leftarrow \{+,\times\}^{2\lambda_{OT}}$ and sends $\ket{\boldsymbol{x}}_{\boldsymbol{\theta}}$ to $B^*$.
    \STATE $B^*$ and $\text{Sim}_{B^*}$ execute $2\lambda_{OT}$  sessions of ERE-Commitment. Then, $\text{Sim}_{B^*}$ extracts the committed values $\left( \hat{\theta}_i, \hat{x}_i\right)_{i \in [2\lambda_{OT}]}$ by applying the extractable simulator given by Algorithm \ref{alg:extractor} and aborts if the number of extracted commitments equal to $\mathbbold{\qm}$ are larger than $\chi = \eta + \zeta$.
    
    \STATE $\text{Sim}_{B^*}$ challenges a random subset $T \subset 2 \lambda_{OT}$, such that $\abs{T} = \lambda_{OT}$.
    
    \STATE $\text{Sim}_{B^*}$ and $B^*$ execute ERE-Decommitment for every $i \in T$. Then, $\text{Sim}_{B^*}$ obtains the set $\left( (\hat{x}_i, \hat{\theta}_i)\right)_{i \in T}$ and checks that $x_i = \hat{x}_i$ whenever $\theta_i = \hat{\theta}_i$, up to an experimental error $\alpha$. If the test does not pass, $\text{Sim}_{B^*}$ aborts.
    
    \STATE  $\text{Sim}_{B^*}$ discards the tested positions and sends $\boldsymbol{\theta}$ to $B^*$. 
    
    \STATE $\text{Sim}_{B^*}$ obtains $I_0$ and $I_1$. Let $S$ be the sequence of indices in $I_0$ such that $\theta_i \neq \hat{\theta}_i$. If $\abs{S} \geq \left(1/2-\xi \right) \lambda_{OT}/2$ set $b=1$, otherwise set $b = 0$. 
    
    \STATE $\text{Sim}_{B^*}$ obtains $m_b$ from $\FUNC_{OT}$ and sets $m_{\bar{b}} = 0$.

    \STATE  $\text{Sim}_{B^*}$ samples seeds $s_0,s_1 \leftarrow \{0,1\}^{k(\lambda_{PQS})}$ and sends $(s_0, PRG(h(s_0,\boldsymbol{x}_0)) \oplus m_0, s_1, PRG(h(s_1,\boldsymbol{x}_1)) \oplus m_1)$, where $\boldsymbol{x}_0 = \{x_i : i \in I_0\}$ and $\boldsymbol{x}_1 = \{x_i : i \in I_1\}$. He also sends the syndromes $Synd_0$ and $Synd_1$ for correcting a proportion $\alpha$ of errors of the bit strings $\boldsymbol{x}_0$ and $\boldsymbol{x}_1$. $\text{Sim}_{B^*}$ outputs the final state of $B^*$.

    \end{algorithmic}    \label{alg:OTsimB}
    \end{algorithm}

\begin{theorem} 
\label{thm:sec-malicious-bob}
There exists a negligible function $\nu(\lambda)$, such that,

\begin{align*}
        & \Big | \probP [\DD^*(\mathbf{\sigma}, \text{Hyb}_1) = 1 ] \\
        &- \probP[\DD^*(\mathbf{\sigma}, (\text{OUT}_A\langle A(m_0,m_1), B^*(b) \rangle),\mathbf{\rho}_{\text{OUT},B^*}\langle A(m_0,m_1), B^*(b) \rangle ) = 1] \Big| = \nu(\lambda)
\end{align*}
    
\end{theorem}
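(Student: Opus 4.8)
The plan is to observe that $\text{Hyb}_1$ differs from the $Real$ distribution $\text{Hyb}_0$ only through one local substitution: in Step~2 of \Cref{alg:OTProtocol} the honest behaviour of Alice as the receiver $\RE$ of the ERE-Commitment --- and, consistently, the honest receiver side of the ERE-Decommitments opened in Step~4 --- is replaced by running the extractable simulator $\QQ_{\CO^*}$ of \Cref{alg:extractor} against the malicious Bob $B^*$, now regarded as a malicious committer $\CO^*$ committing to $\left((\hat{x}_i,\hat{\theta}_i)\right)_{i\in[2\lambda_{OT}]}$. Every remaining step of the QOT protocol is executed verbatim as in $\text{Hyb}_0$ with Alice's true inputs $m_0,m_1$: sampling $T$, the consistency test of Step~5, announcing $\boldsymbol{\theta}$, reading $(I_0,I_1)$, forming the syndromes $Synd_0,Synd_1$, and sending $\big(s_c,\,PRG(h(s_c,\boldsymbol{x}_c))\oplus m_c\big)_{c\in\{0,1\}}$. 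Consequently the claim will follow from the relaxed-extractability of ERE-Commitment in the sense of \Cref{def:extract}, proved in \Cref{sec:EREcom} via the hybrid chain there, with $\chi=\eta+\zeta=\omega(\log^2(k)/k)$.

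Concretely, I would first set up the reduction. Assume a quantum poly-time $\DD^*$ separates $\text{Hyb}_0$ from $\text{Hyb}_1$ with non-negligible advantage; I build a distinguisher $\DD'$ for the $Real$/$Ideal$ experiments of \Cref{def:extract} applied to this ERE-Commitment. On its input --- either the honest-receiver output or the simulator's output, i.e.\ the committer's final register $\boldsymbol{\rho}^{\CO^*}_{final}$ together with the opened-or-extracted values $\boldsymbol{b}$ --- $\DD'$ completes the rest of the QOT protocol itself: it samples $T$, runs the decommitment test, sends $\boldsymbol{\theta}$, receives $(I_0,I_1)$ from $\CO^*$, forms the syndromes, samples $s_0,s_1$ and transmits $\big(s_c,\,PRG(h(s_c,\boldsymbol{x}_c))\oplus m_c\big)_{c\in\{0,1\}}$ with its hard-wired honest messages $m_0,m_1$, then hands $(\mathrm{OUT}_A,\boldsymbol{\rho}^{\CO^*}_{final})$ to $\DD^*$ and echoes the answer. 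A $Real$ input reproduces $\text{Hyb}_0$ exactly and an $Ideal$ input reproduces $\text{Hyb}_1$, so $\DD'$ inherits the advantage of $\DD^*$, contradicting the relaxed-extractability established in \Cref{sec:EREcom}.

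Next I would handle the simulator's $\mathrm{FAIL}$ event. By \Cref{def:extract}, $\QQ_{\CO^*}$ aborts only when the fraction of positions it marks $\mathbbold{\qm}$ exceeds $\chi$, or when some extracted bit $b^*_i\in\{0,1\}$ disagrees with the value later opened by $\CO^*$. By the relaxed-binding of ERE-Commitment (\Cref{th:secondstatbind}) the first event has probability at most $2^{-\omega(\log^2 k)}$; the second is excluded, up to negligible error, by the statistical binding of the base commitments from which the seeds are extracted in Step~9 of \Cref{alg:extractor}. Since $k=\lfloor\lambda_{EX}/m\rfloor$ is polynomially related to the security parameter, both quantities are negligible, so conditioning on $\lnot\mathrm{FAIL}$ perturbs the $Ideal$ output only negligibly and the bound transfers to $\text{Hyb}_1$.

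\textbf{Main obstacle.} The delicate point is the soundness of the composition: embedding the ERE-Commitment inside the QOT protocol must not break the extraction argument of \Cref{sec:EREcom}. One has to check that the internal BB84 round, the internal challenge $E$, and the internal (de)commitments of \Cref{alg:EqExcomm} all live entirely within QOT Steps~2 and~4, so that the measurement-delaying and the Watrous rewinding carried out by $\QQ_{\CO^*}$ --- which themselves rest on the equivocality of EqCommitment --- act on exactly the register structure analysed there and are not disturbed by the later QOT traffic (the QOT-level challenge $T$, the bases $\boldsymbol{\theta}$, the syndromes, and the encryptions). Once this self-containedness is spelled out, everything that follows the commitment is an efficient environment that folds into the distinguisher of \Cref{def:extract}, and the reduction above goes through.
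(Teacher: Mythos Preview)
Your proposal is correct and follows essentially the same approach as the paper: the paper's own proof is a single sentence (``It follows from the $\omega(\log^2(k)/k)$-relaxed-extractability definition of the bit commitment''), and you have unpacked precisely that reduction, including the handling of the simulator's $\mathrm{FAIL}$ event and the composition check that the paper leaves implicit. Your write-up is in fact considerably more thorough than the paper's, which relies entirely on \Cref{def:extract} and the analysis of \Cref{sec:EREcom} without spelling out the distinguisher $\DD'$ or the self-containedness argument you call the ``main obstacle''.
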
 \label{cl:eq3}

\begin{proof}
    It follows from the $\omega(\log^2(k)/k)$-relaxed-extractability definition of the bit commitment.     
\end{proof}

\begin{lemma}
$          \left | \probP [\DD^*(\mathbf{\sigma}, \text{Hyb}_1) = 1 ] - \probP [\DD^*(\mathbf{\sigma}, (\text{OUT}_A,\mathbf{\rho}_{SIM,\text{OUT},B^*} ) = 1] \right| = \nu(\lambda)$.
\end{lemma}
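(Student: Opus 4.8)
The plan is to show that $\text{Hyb}_1$ and the ideal distribution $\text{Hyb}_2$ produced by $\text{Sim}_{B^*}$ in \Cref{alg:OTsimB} are computationally indistinguishable. The only substantive difference between them is that in $\text{Hyb}_1$ Alice forms the ciphertext for the index $\bar b$ as $PRG(h(s_{\bar b},\boldsymbol{x}_{\bar b}))\oplus m_{\bar b}$ with the real message, whereas $\text{Sim}_{B^*}$ only learns $m_b$ from $\FUNC_{OT}$ and encrypts $m_{\bar b}=0$; every other message (the BB84 states, the ERE-Commitment run --- executed with the extractable simulator of \Cref{alg:extractor} in both hybrids --- the challenge $T$, the bases $\boldsymbol{\theta}$, the seeds, the syndromes), all abort conditions, and Alice's output $\text{OUT}_A$ are generated identically. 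It is therefore convenient to insert an intermediate hybrid $\text{Hyb}_{1.5}$ identical to $\text{Hyb}_1$ except that Alice (who already holds all the relevant information) runs the threshold rule of Step 6 of \Cref{alg:OTsimB} to fix $b$ and encrypts $0$ in position $\bar b$: then $\text{Hyb}_{1.5}$ and $\text{Hyb}_2$ are the same distribution, and it remains to bound the gap between $\text{Hyb}_1$ and $\text{Hyb}_{1.5}$.

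First I would check that the bit $b$ selected by the threshold $\abs{S}\geq(1/2-\xi)\lambda_{OT}/2$ is, except with probability $2\exp(-\xi^2\lambda_{OT}/2)$, the index for which \Cref{th:secmalrec} yields a positive min-entropy lower bound. Given the committed pairs $(\hat x_i,\hat\theta_i)_{i\in\bar T}$ extracted in Step 2 (present in $\text{Hyb}_1$ as well), the partition $(I_0,I_1)$ announced by $B^*$ deterministically fixes the number of basis mismatches $\abs{\{i\in I_c:\theta_i\neq\hat\theta_i\}}$ in each block; the Hoeffding argument in the proof of \Cref{th:secmalrec} forces at least one block to contain $\geq(1/2-\xi)\lambda_{OT}/2$ mismatched positions, and this is precisely the block flagged by the threshold test, i.e.\ the $I_{\bar b}$ selected by $\text{Sim}_{B^*}$. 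The $\chi$-fraction of non-extracted commitments and the multiphoton leakage $\vartheta$ are already absorbed into the $h_2(\delta+\alpha+\chi)$ and $2\vartheta$ terms of that lemma, so no additional accounting is needed here.

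Then I would invoke \Cref{th:secmalrec} with the protocol parameters chosen so that its right-hand side is a negligible $\nu(\lambda)$: this gives $\Delta\bigl(\rho_{K_{\bar b}K_b E},\tfrac{1}{2^\ell}\mathbbm{I}\otimes\rho_{K_b E}\bigr)\le\nu(\lambda)$, i.e.\ the hash output $K_{\bar b}=h(s_{\bar b},\boldsymbol{x}_{\bar b})$ is statistically close to uniform jointly with $B^*$'s register $E$ and with the key $K_b$ (which, together with $E$, the seeds, the syndromes and the position-$b$ ciphertext, is all the rest of $B^*$'s view depends on). Composing this statistical step with the pseudorandomness of $PRG$ --- via the reduction that a distinguisher for $(PRG(K_{\bar b}),K_b,E)$ versus $(U,K_b,E)$ would break either \Cref{th:secmalrec} (underpinned by the leftover hash lemma, \Cref{lem:hash}) or the PRG assumption --- shows that $PRG(K_{\bar b})$ is indistinguishable from a uniform string even given $(K_b,E)$. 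The one-time-pad identity then makes $PRG(K_{\bar b})\oplus m_{\bar b}$ indistinguishable from $PRG(K_{\bar b})\oplus 0$, which is exactly the ciphertext produced in $\text{Hyb}_{1.5}=\text{Hyb}_2$; a union bound over the negligible terms yields the claimed $\nu(\lambda)$.

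The main obstacle is the middle step: establishing that the crude threshold on $\abs{S}$ genuinely singles out the ``low-entropy'' index uniformly over all (possibly entangled) strategies of $B^*$ and consistently with the extractor's $\mathbbold{\qm}$-positions, so that querying $\FUNC_{OT}$ on the chosen $b$ never leaves $B^*$ able to recover $m_{\bar b}$. Once this correspondence is in place, the remainder is a routine leftover-hash-plus-PRG hybrid, and the error-correction syndromes cost only the $-q$ term already present in \Cref{th:secmalrec}.
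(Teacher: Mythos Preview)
Your proposal is correct and follows essentially the same route as the paper: identify that the only difference between $\text{Hyb}_1$ and $\text{Hyb}_2$ is the value masked in position $\bar b$, and then appeal to the privacy-amplification bound of \Cref{th:secmalrec} (plus PRG security and the one-time pad) to conclude indistinguishability. The paper's own proof is much terser---it phrases this as a contradiction and does not spell out the intermediate hybrid, the threshold-selection argument for $b$, or the PRG step---so your version is in fact more complete on exactly the point you flag as the ``main obstacle,'' which the paper simply leaves implicit.
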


\begin{proof}
    This can be proven by contradiction. Suppose that there exists a distinguisher such that:
    \begin{equation*}
        \left | \probP [\DD^*(\mathbf{\sigma}, \text{Hyb}_1) = 1 ] - \probP [\DD^*(\mathbf{\sigma}, (\text{OUT}_A,\mathbf{\rho}_{SIM,\text{OUT},B^*} ) = 1] \right| \geq \frac{1}{poly(\lambda)}.
    \end{equation*}
    
This would mean that there is a non-uniform quantum poly-time distinguisher $\DD^*$ that can distinguish between the messages $m_{\bar{b}}$ obtained in $\text{Hyb}_1$ and $m_{\bar{b}}$ obtained in $\text{Hyb}_2$. Since $m_{\bar{b}}$ is encoded as $PRG(h(s_{\bar{b}},\hat{\boldsymbol{x}}_{\bar{b}})) \oplus m_{\bar{b}}$, this would be equivalent to distinguishing between the seeds $h(s_{\bar{b}},\hat{\boldsymbol{x}}_{\bar{b}})$ of $\text{Hyb}_1$ and $\text{Hyb}_2$. This leads to a contradiction since $h(s_{\bar{b}},\hat{\boldsymbol{x}}_{\bar{b}})$ is statistically close to a random string due to privacy amplification, as proven in \Cref{th:secmalrec}. Moreover, in both distributions, the applied syndromes for the error correction of the strings $\hat{\boldsymbol{x}}_{\bar{b}}$ and $\hat{\boldsymbol{x}}_{b}$ are the same ones, leading to the same corrected strings.

\end{proof}

\begin{corollary} Given one implementation of the QOT protocol described in Algorithm \ref{alg:OTProtocol} with the quantum distributed key $\boldsymbol{x} \in \{0,1\}^{2\lambda_{OT}}$, a quantity $n_{OT}$  of 1-out-of-2 oblivious transfers can be distilled, where $n_{OT}=\lfloor\lambda_{OT}/v\rfloor$ and where $v$ is the size of the preimage of a hash function such that $h(r,x): \{0,1\}^{m(\lambda_{PQS})} \times \{0,1\}^m \rightarrow \{0,1\}^{\lambda_{PQS}}$. The parameter $v$ is determined in such a way that:
\begin{equation*}
    \begin{split}
        & \Delta (\rho_{K_{\bar{b}}K_{b}E}, \frac{1}{2^{\ell}} \mathbbm{I} \otimes \rho_{K_{b}E})  \\
        &\leq \frac{1}{2} \cdot 2^{-\frac{1}{2}[(\frac{1}{2}-\xi)\frac{v}{2} - \vartheta \lambda_{OT} -  h_2(\delta + \alpha + \chi)(\frac{v}{2} - \vartheta \lambda_{OT})-\ell - q]} \\
        & + \sqrt{6} \exp(-\lambda_{OT}\delta^2/100) + 2 \exp(- \xi^2\lambda_{OT}/2) ,
    \end{split}
\end{equation*}
with $\ell=\lambda_{PQS}$. 
 
 \label{col:distillation}
\end{corollary}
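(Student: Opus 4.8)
The plan is to run a single instance of \Cref{alg:OTProtocol} up through the basis-revealing step (Steps~6--7), and then, instead of carving the unchallenged set $\bar T$ of size $\lambda_{OT}$ into the two index sets $I_0,I_1$ of one OT, to carve it into $2n_{OT}$ disjoint subsets grouped into $n_{OT}$ pairs $(I_0^{t},I_1^{t})_{t\in[n_{OT}]}$, each subset of size $v/2$, and to use the $t$-th pair to run the $t$-th oblivious transfer exactly as in Steps~7--10, with hash output length $\ell=\lambda_{PQS}$ and a fresh pair of syndromes per instance. Bob's choice bits $(b_t)_{t\in[n_{OT}]}$ are encoded, as usual, by which element of the $t$-th pair collects the positions where his measurement bases agreed with Alice's. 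Since the quantum part of the protocol is untouched and only the classical post-processing changes, it suffices to re-establish the two guarantees of \Cref{def:secureOT} for each of the $n_{OT}$ parallel instances and to take a union bound.

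Security against a malicious Alice carries over directly from the single-OT case. The simulator of \Cref{alg:OTsimA} generalises by committing to dummy values with the equivocal simulator $\QQ_{\RE^*}$, delaying the measurement of the unchallenged qubits until after $\boldsymbol\theta$ is announced, and then reading all $n_{OT}$ message pairs $(m_0^{t},m_1^{t})$ off Alice's ciphertexts; indistinguishability follows, as before, from equivocality of ERE-Commitment and from the fact that postponing measurements performed in a uniformly random basis is undetectable.

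The substantive part is security against a malicious Bob, which I would obtain by mirroring \Cref{th:secmalrec} and \Cref{thm:sec-malicious-bob} block by block. As in \Cref{thm:sec-malicious-bob}, the simulator first runs the relaxed-extractable simulator of \Cref{alg:extractor} to recover $((\hat x_i,\hat\theta_i))_{i\in[2\lambda_{OT}]}$, aborting if more than a $\chi$-fraction of the commitments are marked $\mathbbold{\qm}$, and then fixes, for each $t$, the choice bit $b_t$ from the size of the mismatched-basis subset $S_t=\{i\in I_0^{t}:\theta_i\neq\hat\theta_i\}$ exactly as in Step~6 of \Cref{alg:OTsimB}. For each instance one re-runs the EPR-picture sampling argument of \cite{BF12} used in \Cref{th:secmalrec}: once the global check on $T$ passes, the joint state on $\bar T$ is, up to the sampling error $\sqrt6\exp(-\lambda_{OT}\delta^2/100)$, a superposition of strings of relative Hamming weight $\delta$-close to $\alpha$, and the basis-mismatch count on $I_{\bar b_t}^{t}$ is, up to the error $2\exp(-\xi^2\lambda_{OT}/2)$, at least $(\tfrac12-\xi)\tfrac v2$; applying \Cref{eq:entropy} and then the Leftover Hash Lemma (\Cref{lem:hash}) to $K_{\bar b_t}^{t}$ yields the displayed bound. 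The decisive bookkeeping is that the $q$-bit syndrome, the $\chi$-fraction of non-binding commitments, \emph{and} the entire budget $\vartheta\lambda_{OT}$ of multiphoton-leaked positions may all be adversarially concentrated in the single block $I_{\bar b_t}^{t}$, whose size is only $v/2$; this is why the exponent carries $-\vartheta\lambda_{OT}$ rather than $-\vartheta v$, why the binary-entropy term is evaluated on the effective length $\tfrac v2-\vartheta\lambda_{OT}$, and why $v$ cannot be taken too small relative to $\lambda_{OT}$, $\ell=\lambda_{PQS}$ and $q$ (the bracket in the exponent must stay comfortably positive). Summing the resulting advantage over the $n_{OT}=\lfloor\lambda_{OT}/v\rfloor=\mathrm{poly}(\lambda)$ instances, together with the negligible contributions from equivocality and relaxed extractability, keeps the total deviation from the ideal experiment negligible as soon as $v$ is chosen so that the displayed $\Delta$ is below the target security distance.

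The main obstacle is precisely this worst-case concentration of the ``bad'' resources. In the single-OT proof the leakage $\vartheta\lambda_{OT}$ and the non-binding fraction $\chi$ were charged against a block of size $\lambda_{OT}/2$, so they cost only a fixed fraction of the available entropy; here the same absolute amounts are charged against a block of size $v/2\ll\lambda_{OT}$, so one must check by hand that the min-entropy bracket remains large, which is exactly the constraint defining the admissible $v$. A secondary point is that the min-entropy-splitting step (existence of a suitable $b_t$) and the privacy-amplification step must hold for all $t$ at once; this is absorbed by the union bound above, which is affordable because $n_{OT}$ is polynomial in the security parameter.
\qed
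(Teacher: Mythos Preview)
Your proposal is correct and follows essentially the same approach as the paper: partition $\bar T$ into $n_{OT}$ disjoint index-pairs of size $v$, rerun Steps~7--10 per pair, and re-derive the privacy-amplification bound of \Cref{th:secmalrec} block by block, charging the full leakage budget $\vartheta\lambda_{OT}$ against a single block of size $v/2$. Your write-up is in fact more explicit than the paper's about the worst-case concentration of leaked/non-binding positions and about the union bound over the $n_{OT}$ instances, but the underlying argument is the same.
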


\begin{proof}

The proof is similar to the one in \Cref{th:secmalrec}. The sampling requires a large number of $\hat{x}_i$ and $\hat{\theta}_i$. On step 7 of \Cref{alg:OTProtocol}, Bob can choose subsets of length $v$ of $\hat{\theta}_i$ and $\hat{x}_i$ to generate $n_{OT}$ seeds $s_i \in \{0,1\}^{\lambda_{PQS}}$.  By selecting a large enough $v$, we avoid finite-set sampling errors, leading $\xi$ and $\delta$ to be representative of the whole population as in \Cref{th:firsteq}, and thus:

\begin{equation}
 H_{\text{min}}(X_{\bar{b}}|A^{b}E) \geq \left(\frac{1}{2}-\xi\right)\frac{v}{2} - \vartheta \lambda_{OT} -  h_2(\delta + \alpha + \chi)\left(\frac{v}{2} - \vartheta \lambda_{OT}\right).
\end{equation}

In order to obtain $n_{OT}$ 1-out-of-2 OTs, the first difference with respect to Algorithm \ref{alg:OTProtocol} is that, in Step 7, Bob chooses the tuple of sets $(I_{b,j})_{j \in [n_{OT}]}$ and $(I_{\bar{b},j})_{j \in [n_{OT}]}$,  such that $I_{b,j} = \{i : \theta_i = \hat{\theta}_i\} $, $I_{b,j} \cap I_{b,j'} = \varnothing$,  $I_{\bar{b},j} = \{i : \theta_i \neq \hat{\theta}_i\} $ and $I_{\bar{b},j} \cap I_{\bar{b},j'} = \varnothing$. Then Bob sends $((I_0,I_1)_j)_{j \in [n_{OT}]}$ to Alice.  The second difference is that, in Step 8, we define  $\mathbf{x}_{0,j} = \{x_i : i \in I_{0,j}\}$ and $\mathbf{x}_{1,j} = \{x_i : i \in I_{1,j}\}$.  Alice can then send $((m_0,m_1)_j)_{j \in [n_{OT}]}$ different encoded messages. For each encoded message, Alice chooses at random a pair $(I_0,I_1)_{j}$ over the entire tuple $((I_0,I_1))_{j}\}_{j \in [n_{OT}]}$ and shares it with Bob, who can decrypt $m$ messages $m_b$ by using the strings $(\hat{\mathbf{x}}_{b,j})_{j \in [n_{OT}]}$ with $\hat{\mathbf{x}}_{b,j} = \{ \hat{x}_i : i \in I_{b,j} \}$. 
\end{proof}

\section*{Acknowledgements}

AY acknowledges Lucas Hanouz for the discussions related with error correction. We also want to acknowledge Matías R. Bolaños for highlighting the necessity of adding multiphoton events and suggesting some improvements in the protocol. Lastly, we want to acknowledge anonymous reviewers for their feedback. ED, ABG and AY are supported by the European Union's Horizon Europe Framework Programme under the Marie Skłodowska Curie Grant No. 101072637, Project Quantum-Safe Internet (QSI). ED and PL acknowledge PEPR integrated project QCommTestbed ANR-22-PETQ-0011 part of Plan France 2030, VY acknowledges the support of European Research Council Starting Grant QUSCO, 758911. ED and AI acknowledge the support of QuantERA QuantaGENOMICS under Grant Agreement No. 101017733. ABG is supported by ANR JCJC TCS-NISQ ANR-22-CE47-0004.
This work was done in part while ABG was visiting the Simons Institute for the Theory of Computing.

\bibliography{bibliography_quantum.bib}

\appendix
\section{Distance bound for a malicious Alice}
\label{appendix}

Since Naor's commitment is computationally hiding, and so is the extractable layer, the security against a malicious Alice  will be bound by the preservation of the hiding property, as discussed in \Cref{col:EqEreCom}. This bound is given by the $\Delta$-distance between the key generated by ERE-Commitment and a string of uniformly random bits of length $\lambda_{PQS}$.

\begin{theorem}[Distance bound in ERE-Commitment for a malicious receiver]
Consider ERE-Commitment (see Algorithm \ref{alg:EqExcomm}), between an honest sender and a malicious receiver with a computationally hiding and ($\eta$, $2^{-\eta n}$)-relaxed statistically binding EqCommitment as base commitment. Let $K \coloneqq g(r, \boldsymbol{x})$ be the key in $\{0,1\}^{\ell}$ output by Alice, where $h(r,x): \{0,1\}^m \times \{0,1\}^n \rightarrow \{0,1\}^{\ell} $ is a universal hash function, and $\bar{T} \subset [4\lambda_{EX}]$, the set of commitments that have not been challenged. Moreover, $m \in \bar{T}$ is a randomly chosen subset of non-challenged commitments that are input in the hash function.  Then, for any sampling error $\xi, \delta > 0$, a bit-flip error $\alpha \geq 0$ and a proportion $\vartheta \geq 0$ of leaked bits:
\begin{equation}
    \begin{split}
        & \Delta (\rho_{K E}, \frac{1}{2^{\ell}} \mathbbm{I} \otimes \rho_{E})  \leq \\
        &\leq \frac{1}{2} \cdot 2^{-\frac{1}{2}[\left(\frac{1}{2}-\xi\right)m - 2\vartheta \lambda_{EX}  -  h_2\left(\delta + \alpha +\eta \right) \left(m  - 2\vartheta \lambda_{EX}\right) -\ell - q]} \\
        & + \sqrt{6} \exp(-2\lambda_{EX}\delta^2/100) + 2 \exp(-4 \xi^2 \lambda_{EX}) ,
    \end{split} \label{eq:recseq}
\end{equation}
where $E$ denotes Bob's quantum state, $q$ the length of the syndrome needed for performing the error correction code and $\mathbbm{I} \in \BB(\CC^{2^{\ell}})$ is the identity operator.

\label{th:secmalsen}
\end{theorem}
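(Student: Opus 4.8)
The plan is to follow the privacy-amplification template of \Cref{th:secmalrec}, specialised to the single-key setting of ERE-Commitment, while tracking carefully where the $\eta$-relaxed (rather than full) binding of EqCommitment enters. As in \cite{BF12}, I would first pass to the EPR reformulation: instead of sending $\ket{\boldsymbol x}_{\boldsymbol\theta}$ in Step~1 of \Cref{alg:EqExcomm}, Alice distributes the first halves of $4\lambda_{EX}$ EPR pairs, keeps the second halves, and measures them in bases $\boldsymbol\theta$ only at the moment she would otherwise have committed to $\boldsymbol x$. The malicious receiver's view and Alice's checking are identical in the two pictures, so it suffices to bound $\Delta(\rho_{KE},\tfrac1{2^{\ell}}\mathbbm{I}\otimes\rho_E)$ in the EPR picture, where $K=h(\boldsymbol r,\tilde{\boldsymbol x}^{j})$ is the hash of one block $\tilde{\boldsymbol x}^{j}=\{x_i:i\in M_j\}$ of size $m$. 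By the leftover hash lemma with quantum side information (\Cref{lem:hash}, in its single-key form) this is at most $\tfrac12\,2^{-\frac12(H_{\min}(\tilde X^{j}\mid E)-\ell)}$, so the whole proof reduces to a lower bound on that conditional min-entropy.

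For the min-entropy bound I would run the sampling analysis of \cite{BF12} (Appendix~B.4) on the consistency check of Step~6. Conditioned on Alice accepting, i.e.\ on $w(\boldsymbol x|_S,\hat{\boldsymbol x}|_S)\le\alpha$ over the matched-basis challenge positions $S=\{i\in E:\theta_i=\hat\theta_i\}$, the joint state on the unchallenged register is, up to trace distance $\epsilon^{\delta}\le\sqrt6\exp(-2\lambda_{EX}\delta^2/100)$, a superposition of computational-basis strings of relative Hamming weight $\delta$-close to $\alpha$; the factor $2\lambda_{EX}$ reflects that $|\bar E|=2\lambda_{EX}$. A Hoeffding estimate on the basis statistics then gives, except with probability $\epsilon^{\xi}\le 2\exp(-4\xi^2\lambda_{EX})$, that the block $M_j$ satisfies $d_H(\boldsymbol\theta|_{M_j},\hat{\boldsymbol\theta}|_{M_j})\ge(\tfrac12-\xi)m$, using that Alice's bases are uniform and independent of the receiver's committed $\hat{\boldsymbol\theta}$.

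Now I would invoke the conditional min-entropy lemma of \cite{BF12} (the statement around \Cref{eq:entropy}) on the post-sampling state restricted to $M_j$. The key point is that this lemma is phrased relative to the bases $\hat{\boldsymbol\theta}$ \emph{committed} by the receiver, but EqCommitment is only $\eta$-relaxed binding, so on up to an $\eta$-fraction of positions the committed basis is not pinned down; treating those positions pessimistically as fully known to the receiver replaces $h_2(\delta+\alpha)$ by $h_2(\delta+\alpha+\eta)$. Likewise the $\vartheta$-fraction of multi-photon-leaked bits — worst case all of the at most $2\vartheta\lambda_{EX}$ of them inside $M_j$ — is subtracted both from the min-entropy contribution and from the length over which the entropy penalty is charged, and the $q$-bit syndrome $Synd_j$ transmitted in Step~10 costs a further $q$ bits by the min-entropy chain rule. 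This gives
\begin{equation*}
H_{\min}(\tilde X^{j}\mid E, Synd_j)\ \ge\ \Bigl(\tfrac12-\xi\Bigr)m-2\vartheta\lambda_{EX}-h_2\bigl(\delta+\alpha+\eta\bigr)\bigl(m-2\vartheta\lambda_{EX}\bigr)-q,
\end{equation*}
and substituting into the leftover hash lemma, then adding back the two sampling errors $\epsilon^{\delta}$ and $\epsilon^{\xi}$, yields exactly \Cref{eq:recseq}.

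The routine ingredients — the EPR switch, the two Hoeffding bounds, the chain-rule accounting for $Synd_j$, and the final leftover-hash step — are essentially transcriptions of \cite{BF12} and of the proof of \Cref{th:secmalrec}. The step that genuinely needs care is the claim in the previous paragraph that $\eta$-relaxed binding still licenses the min-entropy lemma with only an additive $\eta$-loss: one must argue, using the relaxed-binding guarantee behind \Cref{th:fiststatbind} (so that at most $\approx\eta\cdot 4\lambda_{EX}$ of the $4\lambda_{EX}$ committed positions are non-binding, and, because $M_j$ is selected only after the commitment and challenge phases, these spread across blocks so that $M_j$ sees at most an $\eta$-fraction of them), that a malicious receiver cannot exploit the non-binding positions to correlate his state with $\tilde{\boldsymbol x}^{j}$ beyond what declaring those positions leaked already accounts for, so that the uncertainty relation of \cite{BF12} applies verbatim on the fixed $(1-\eta)$-fraction.
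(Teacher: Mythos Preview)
Your proposal is correct and follows essentially the same route as the paper: pass to the EPR picture, run the \cite{BF12} sampling analysis to obtain the two Hoeffding-type error terms, apply the conditional min-entropy lemma with the additive $\eta$-correction for the non-binding positions and the $\vartheta$-correction for leaked bits, subtract $q$ for the syndrome via the chain rule, and conclude with the leftover hash lemma. The only organisational difference is that the paper first bounds $H_{\min}(X\mid E)$ on the full unchallenged set $\bar T$ of size $2\lambda_{EX}$ and then argues that a randomly chosen size-$m$ subblock inherits the same rate (this is the step that makes both Hoeffding errors depend on $\lambda_{EX}$ rather than $m$), whereas you apply the min-entropy bound directly on the block $M_j$; the ingredients and the resulting inequality are identical.
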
 

\begin{proof}
    Since the sampling strategy is the same as the one of the OT layer, the sampling errors will be the same.  The main difference is that, for this layer, the set of not challenged commitments $\bar{T} \subset [2\lambda_{EX}]$ is not divided into two different subsets. Therefore, the malicious Bob has less prior information:

\begin{equation}
\begin{split}
       H_{\text{min}}(X|E) & \geq d_H(\boldsymbol{\theta} |_{\bar{T}}, \hat{\boldsymbol{\theta}}|_{\bar{T}}) - h_2(\delta + \alpha+\eta )|{\bar{T}}| \\
       & \geq \left(\frac{1}{2}-\xi -\vartheta \right)2\lambda_{EX} -  h_2\left(\delta + \alpha +\eta \right)2\lambda_{EX}(1-\vartheta). 
\end{split}
\end{equation}
As the subset $m \in {\bar{T}}$ is randomly chosen from the set ${\bar{T}}$, the sampling error for the bases and the states will be the same ones as for the set ${\bar{T}}$ for a large enough $m$. Therefore, its min-entropy will be:

\begin{equation}
    H_{\text{min}}(X_m|E) \geq \left(\frac{1}{2}-\xi\right)m - 2\vartheta \lambda_{EX}  -  h_2\left(\delta + \alpha +\eta \right) \left(m  - 2\vartheta \lambda_{EX}\right). 
\end{equation}

Given a $\eta$-relaxed statistical binding EqCommitment, the probability of being relaxed statistical binding is $1-negl(k)$ when $\eta =\omega(\log^2(k)/k)$

\end{proof}

\section{Benchmark of QOT protocols}\label{app:benchmark}
The numerical benchmark presented in \Cref{table:benchmark} is obtained by numerically optimizing the concrete security bounds of the corresponding protocols.

While \cite{BCKM21} does not provide explicit analytical bounds, in \cite[Sec.~5.2 and Sec.~6.4]{BCKM21} the authors refer to \cite{BF12} for the statistical binding proof. Hence, arguing as in \Cref{th:secmalrec}, one can derive the following bound for the OT layer:
\begin{equation*}
    \begin{split}
        & \Delta (\rho_{K_{\bar{b}}K_{b}E}, \frac{1}{2^l} \id \otimes \rho_{K_{b}E})  \leq \\
        &\leq \frac{1}{2} \cdot 2^{-\frac{1}{2}[(1-\xi -  h(\delta))4\lambda_{OT}-l]} + \sqrt{6} \exp(-\delta^2 8\lambda_{OT}/100) + 2 \exp(- \xi^24\lambda_{OT}) ,
    \end{split}
\end{equation*}
where $\xi,\delta > 0$ denote sampling parameters and $h(r,x): \{0,1\}^m \times \{0,1\}^n \rightarrow \{0,1\}^{\ell}$ is a universal hash function. Similarly, following the same arguments as in \Cref{th:secmalsen}, one obtains the following security bound for the extractable layer:

\begin{equation*}
    \begin{split}
        & \Delta (\rho_{K_{\bar{b}}K_{b}E}, \frac{1}{2^l} \id \otimes \rho_{K_{b}E})  \leq \\
        &\leq \frac{1}{2} \cdot 2^{-\frac{1}{2}[(\frac{1}{2}-\xi -  h(\delta))\lambda^2_{EX}-1]} + \sqrt{6} \exp(-\lambda^3_{EX}\delta^2/100) + 2 \exp(-2 \xi^2\lambda^3_{EX}),
    \end{split}
\end{equation*}
where the security parameters are taken following \cite[Protocol~3]{BCKM21}. Following the structure of the protocol in \cite{BCKM21}, the total number of BB84 states is given by
\begin{align*}
    \lambda_{\mathrm{BCKM21}} =  16 \cdot \lambda_{OT}\cdot \lambda_{eq}  \cdot 4  \cdot 2\cdot \lambda^3_{EX} + 16\cdot \lambda_{OT},
\end{align*}
since each OT layer consists of $\lambda_{eq}$ equivocal commitments, each of them compounded by four executions of the extractable commitment. Therefore, by setting $\Delta\leq \varepsilon$ with $\varepsilon = 10^{-15}$ and $\lambda_{eq} = 128$, we obtain the value $\lambda_{\mathrm{BCKM21}} = 2.27\times 10^{13}$ reported in \Cref{table:benchmark}.

We now turn to the protocol proposed in \cite{ABKK23}. Among their constructions, we focus on the ones based on BB84 states, namely \cite[Protocol~9]{ABKK23} and \cite[Protocol~10]{ABKK23}, which implement string OT with random bases. The bound for the 3-round communication protocol of \cite[Protocol~9]{ABKK23} is given in \cite[Theorem~B.1]{ABKK23}. It requires a total of $\lambda_{\mathrm{3OT \, ABKK22}}= 23000 \lambda$ BB84 states, where $\lambda$ is determined by the following security bounds:
\begin{align*}
    \mu_{\mathrm{R}^*} & =\frac{\sqrt{5}}{2^\lambda}+\frac{4 q_{RO}}{2^{18 \lambda}}+\frac{148(q_{RO}+46000 \lambda+1)^3+1}{2^{2 \lambda}}+\frac{368000 q_{RO} \lambda}{2^\lambda}, \\
    \mu_{\mathrm{S}^*} &=\frac{430 q_{RO} \sqrt{\lambda}}{2^\lambda} \,,
\end{align*}
where $\mu_{\mathrm{R}^*}$ and $\mu_{\mathrm{S}^*}$ are, respectively, the security bounds against a malicious receiver and a malicious sender, and $q_{RO}$ is an upper bound on the total number of random-oracle queries made by the adversary.

Similarly, \cite[Protocol~10]{ABKK23}, which yields a 4-round OT protocol, requires a total of $\lambda_{\mathrm{4OT \, ABKK22}}=10300 \lambda$ BB84 states, where $\lambda$ is determined by the bounds in \cite[Theorem~B.2]{ABKK23}:
\begin{align*}
    \mu_{\mathrm{R}^*} & =\frac{\sqrt{5}}{2^\lambda}+\frac{1}{2^{9 \lambda}}+\frac{148(q_{RO}+2 n+1)^3+1}{2^{2 \lambda}}+\frac{16 q_{RO} n}{2^\lambda}, \\
    \mu_{\mathrm{S}^*} &=\frac{288 q_{RO} \sqrt{\lambda}}{2^\lambda} \,,
\end{align*}
where, following \cite[Appendix~B.2]{ABKK23}, we take $n = 10300\,\lambda$.

As discussed in \cite{ABKK23}, the random oracle (RO) can be instantiated with SHA-256. Following \cite[Theorem~5.1]{HLS24}, the optimal success probability for solving the preimage search problem on an $m$-bit random oracle with $q_{RO}$ quantum queries is $\Theta(q_{RO}^2/2^m)$. Taking $m = 256$ and imposing $p_{\mathrm{succ}}\leq 2^{-128}$ yields $q_{RO} \leq 2^{64}$, corresponding to 128-bit post-quantum preimage security. This upper bound on the number of oracle queries is also consistent with the NIST recommendation of considering at most $2^{64}$ oracle-like queries in its post-quantum security definitions \cite[Sec.~4.A.2/4.A.4]{NIST}.

By imposing $\mu_{\mathrm{R}^*} = \mu_{\mathrm{S}^*} \leq \varepsilon$ with $\varepsilon = 10^{-15}$, and using $q_{RO}= 2^{64}$, we obtain the numerical values $\lambda_{\mathrm{3OT \, ABKK22}} = 3.22 \times 10^6$  and $\lambda_{\mathrm{4OT \, ABKK22}}= 1.43\times 10^6$ BB84 states reported in \Cref{table:benchmark}.

For our protocol, the optimization is performed by targeting at most $\varepsilon = 10^{-15}$ and fixing $\alpha = 0.006$ and $\vartheta= 0.001$ in \Cref{eq:sec.boun.1} and \Cref{eq:recseq}, with a error correction syndrome of $q = 0.1\lambda$ \footnote{In the numerical optimization, we assume that the experimental error is uniformly distributed.}. It is important to emphasize that this is the only benchmark that explicitly incorporates transmission errors, since the protocols in \cite{BCKM21,ABKK23} are not robust to such errors. For completeness, we also include in \Cref{table:benchmark} the benchmark corresponding to a noiseless implementation of our protocol.

\end{document}